
\documentclass[12pt]{article}
\usepackage[margin=1in]{geometry}

\usepackage[round]{natbib}

\usepackage{subfigure}
\usepackage{amsmath,amssymb}

\usepackage{amsthm} 
\allowdisplaybreaks
\usepackage{color}
\usepackage{footnote}
\usepackage{algorithm}
\usepackage{algpseudocode}
\usepackage{algorithmicx}
\usepackage{multirow} 
\usepackage{booktabs}
\usepackage{graphicx}
\usepackage{amssymb}
\usepackage{amsbsy}
\usepackage{array}
\usepackage{longtable}
\usepackage{epstopdf}
\usepackage{pbox}
\usepackage{url}
\usepackage{breqn}
\usepackage{mathrsfs}
\usepackage{multicol}
\usepackage{supertabular}
\usepackage{enumerate}
\usepackage{bbm}
\usepackage{multirow}
\usepackage{hyperref}
\usepackage{xfrac}
\usepackage{textgreek}
\usepackage[justification=centering]{caption}
\usepackage{mathtools}

\usepackage{todonotes}
\usepackage{dsfont}
\usepackage{bm}
%
\newtheorem{dfn}{Definition}
\newtheorem{thm}{Theorem}

\newtheorem{prp}{Proposition}

\newtheorem{lmm}{Lemma}
\newtheorem{asm}{Assumption}

%


\algnewcommand\algorithmicforeach{\textbf{for each}}
\algdef{S}[FOR]{ForEach}[1]{\algorithmicforeach\ #1\ \algorithmicdo}

\begin{document}
\title{\LARGE \bf Optimizing Highway Traffic Flow in Mixed Autonomy: A Multiagent Truncated Rollout Approach}

\author{Lu Liu, Chi Xie, and Xi Xiong
\thanks{This work was supported in part by NSFC Project 72371172 and Fundamental Research Funds for the Central Universities.}
\thanks{The authors are with the School of Transportation, Tongji University, Shanghai 201804, China, and also with the Key Laboratory of Road and Traffic Engineering of Ministry of Education, Tongji University, Shanghai 201804, China. (e-mail: luliu0720@tongji.edu.cn; chi.xie@tongji.edu.cn; xi\_xiong@tongji.edu.cn)}
}
\newcommand*{\QEDA}{\hfill\ensuremath{\blacksquare}}%
\date{}
\maketitle

\begin{abstract}
The development of connected and autonomous vehicles (CAVs) offers substantial opportunities to enhance traffic efficiency. However, in mixed autonomy environments where CAVs coexist with human-driven vehicles, achieving efficient coordination remains challenging due to heterogeneous driving behaviors and system complexity. To address this, this paper proposes a multiagent truncated rollout approach that optimizes CAV speed coordination to improve highway throughput while minimizing computational overhead. In this approach, a traffic density evolution equation is formulated that comprehensively accounts for the presence or absence of CAVs, and a distributed coordination control framework is established accordingly. By incorporating kinematic information from neighbor agents and employing an agent-by-agent sequential solution mechanism, our method enables explicit cooperation among CAVs. Furthermore, we introduce a truncated rollout scheme that adaptively shortens the optimization horizon based on the evaluation of control sequences. This significantly reduces time complexity, thereby improving real-time performance and scalability. Theoretical analysis provides rigorous guarantees on the input-to-state stability and performance improvement of the system. Simulations conducted on real-world bottleneck scenarios demonstrate that, in large-scale mixed traffic flows, the proposed method outperforms conventional model predictive control methods by reducing both the average travel time and overall computational time, highlighting its strong potential for practical deployment.
\end{abstract}

{\bf Keywords}:
Connected and autonomous vehicles, Traffic flow optimization, Multiagent rollout, Model predictive control.

\section{Introduction}
Connected and autonomous vehicles (CAVs), empowered by vehicle-to-everything (V2X) communication and high-precision control, provide a technological foundation for enhancing traffic efficiency and safety \citep{LU202226partb}. However, the transition to full autonomy is gradual, and a mixed autonomy environment featuring the coexistence of CAVs and human-driven vehicles (HDVs) will persist for the foreseeable future \citep{LI2022110partb}. In such heterogeneous environments, the stochastic behaviors of HDVs significantly limit the effectiveness of CAV coordination. This challenge is particularly acute at lane-merging bottlenecks, where capacity drops trigger queue formation once vehicle density exceeds a critical threshold \citep{CHUNG200782PartB}.
Traditional ramp control methods often struggle to cope with dynamic flow fluctuations \citep{John2024TRR,JIA2025103161partb}, while infrastructure expansion is constrained by high costs and the risk of induced demand \citep{ANUPRIYA2023103726}. Although existing studies demonstrate that CAVs can improve throughput via speed control \citep{Nie2021speed}, critical limitations remain regarding interaction modeling, heterogeneous decision-making, and real-time control performance under mixed traffic conditions \citep{NTOUSAKIS2016464, Xiong2021Optimizing, CHEN2023103264}.

To bridge these gaps, we propose a distributed control framework for CAVs in mixed autonomy, aiming to enhance traffic efficiency through cooperative driving behaviors. 
We consider a three-to-two lane merging scenario as illustrated in Fig.~\ref{merging-bottleneck}, where white and blue vehicles denote HDVs and CAVs, respectively. A coordination zone is established upstream of the bottleneck. Within this zone, CAVs perceive traffic states via onboard sensors and vehicle-to-vehicle (V2V) communication, collaborating to determine acceleration or deceleration strategies that regulate the entry timing of subsequent HDVs into the bottleneck.
This CAV-centric approach not only reduces reliance on physical infrastructure but also adapts flexibly to dynamic traffic variations.
In terms of operational scope, following the approach of previous studies \citep{Vinitsky2023}, we concentrate on the longitudinal speed control of CAVs. Although explicit lane-changing decision-making is deferred, our proposed speed regulation strategy holds critical potential value: by regulating upstream density, it proactively creates favorable time gaps for merging vehicles. This approach effectively facilitates the mandatory lane changes required at the bottleneck, thereby mitigating the associated traffic friction. Furthermore, consistent with established multiagent studies \citep{NEURIPS2024_Wang,NEURIPS2024_Ding}, we assume that communication among CAVs is instantaneous. 

\begin{figure}[h]
    \centering
    \includegraphics[width=0.85\textwidth]{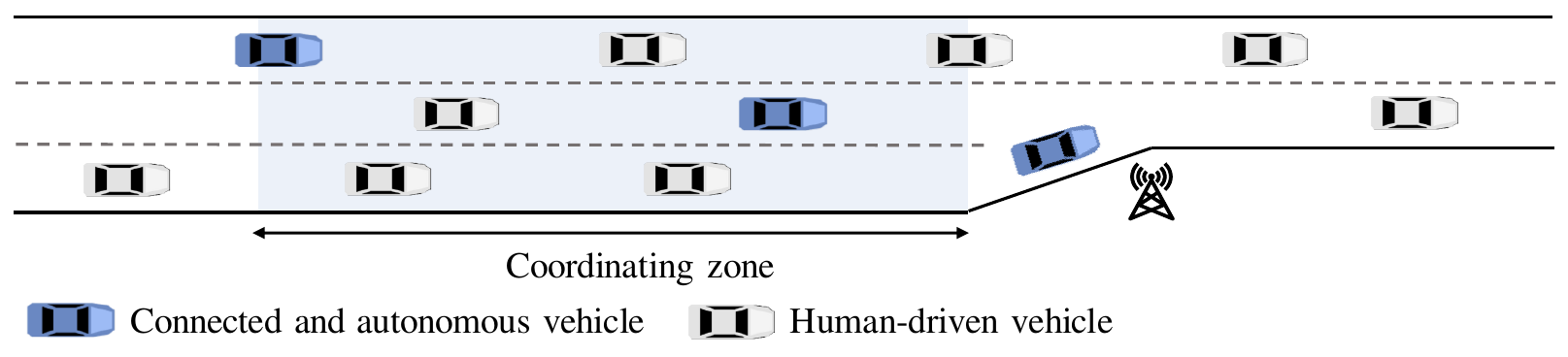}
    \caption{Illustration of CAV coordination in mixed autonomy at a highway bottleneck.}
    \label{merging-bottleneck}
\end{figure}

Most existing studies on coordinating CAVs to mitigate congestion focus on homogeneous traffic flow \citep{MA2023104266,XUE2025103209partb}. 
In mixed traffic control, microscopic models struggle with CAV coordination despite capturing individual dynamics \citep{Gao2022Optimal}, while macroscopic models achieve computational efficiency but fail to account for heterogeneous behaviors \citep{Qin2021Lighthill}.
To integrate the advantages of both microscopic and macroscopic models, \citet{Liard2023APM} proposed a PDE–ODE coupled framework that uses partial differential equations (PDEs) for density evolution and ordinary differential equations (ODEs) for CAV trajectories. Subsequent studies have demonstrated its efficacy in state estimation and flow control \citep{ZHANG2024111796,Daini2025ITS}. 
We adopt this framework to investigate cooperative CAV control in heterogeneous traffic.
However, solving coupled PDEs entails high computational complexity, imposing a significant burden on real-time applications. The cell transmission model (CTM) \citep{DAGANZO1994269} effectively addresses this by bridging the scale gap, serving as a robust discretization method for continuum models \citep{Mladen2018Traffic, piacentini2020traffic}. Building on this, \citet{QIU2023Cooperative} developed a model predictive control (MPC) approach for CAV motion planning. By minimizing objective functions over rolling horizons, the MPC implementation achieved global optimality in speed adjustments \citep{ZHANG2023199partb}.
Nevertheless, traditional centralized MPC faces dimensional challenges in complex traffic due to the exponential growth of computational costs \citep{ZHOU201969partb,SASFI2023111169}. While distributed MPC (DMPC) mitigates this by decomposing the system into parallel subsystems \citep{Goulet2022DMPC}, the reliance on lagged information exchange often leads to suboptimal solutions and complicates stability analysis \citep{LIU2022103261,ZHOU201969partb}.

The rolling optimization mechanism in traditional MPC, which employs explicit models to predict state trajectories over a finite horizon, shares conceptual similarities with the rollout mechanism in Reinforcement Learning (RL) \citep{Wang2023ITSCrollout}. In contrast to model-based MPC, RL-based multiagent control relies on agents acquiring traffic state information through environmental interaction, continuously refining policies via trial-and-error on state–action–reward outcomes \citep{Zhu2022Oper, Liu2024AMR}. 
Despite their success in traffic flow control, RL methods face significant challenges in large-scale systems, including excessive data requirements, environmental non-stationarity, and low training efficiency \citep{Bao2023ASB}. Furthermore, their computational complexity tends to scale exponentially with the system size \citep{Bertsekas2019MultiagentRA}. 
To mitigate these scalability issues, researchers have leveraged the multiagent advantage decomposition theorem to reformulate the joint policy search into a sequential decision-making process, thereby reducing complexity to be linear with the number of agents \citep{NEURIPS2021_Kuba, NEURIPS2022_Wen, JMLR-Zhong2024}. However, these methods typically necessitate a full-horizon rollout, introducing substantial computational overhead. To address this, \citet{Bhattacharya2024MultiagentTRL} proposed a multiagent truncated rollout technique that replaces full-horizon state tracking with finite-step lookahead simulations and terminal cost approximations. This approach balances control performance with computational efficiency, offering a novel paradigm for large-scale multiagent systems \citep{LI2025103154}. It is distinct from the adaptive receding horizon approach in traditional MPC \citep{Adaptive-Receding-Horizon}, which improves efficiency by adjusting the optimization variable dimensions but alters the control sequence length, potentially compromising performance.

In this study, we propose a multiagent truncated rollout approach that synergizes the predictive capability of MPC with the computational efficiency of truncated rollout. We first adopt a PDE-ODE coupled model \citep{Liard2023APM} to describe mixed traffic dynamics and derive a system-level density evolution equation via the CTM \citep{DAGANZO1994269}, capturing density variations induced by both CAVs and HDVs. 
Building on this foundation, we develop a distributed optimization framework that integrates the control information of all CAVs. In this framework, each CAV functions as an independent model predictive controller, acquiring real-time traffic states and peer trajectories through V2X communication and onboard sensors to enable accurate state prediction and optimal decision-making. 
To foster explicit cooperation and mitigate the coordination deficits of conventional DMPC, we design an agent-by-agent sequential optimization mechanism. This allows each CAV to leverage the latest control strategies of its peers, iteratively refining its actions to steer the system toward joint optimality. 
Finally, to ensure computational feasibility, we employ the truncated rollout method. By estimating the bounds of the objective function and adaptively truncating the optimization horizon, this approach significantly reduces problem dimensionality, thereby circumventing the computational burdens typical of large-scale systems.

We systematically analyze the policy improvement properties of the proposed method from the perspectives of sequential decision-making and truncated optimization, and rigorously establish the input-to-state stability (ISS) of each CAV subsystem. While strict global optimality is computationally elusive, our analysis guarantees that the control policy progressively approaches the global optimum, with the system state converging to a stable invariant set. 
Furthermore, to maximize system-level performance, we design a dynamic decision-ordering strategy and derive rigorous upper and lower bounds for the objective function. 
We also present the complete multiagent truncated rollout algorithm and conduct a time complexity analysis, demonstrating its superior computational efficiency over conventional model predictive control methods. Validation is performed using a calibrated simulation of a bottleneck segment on the Shanghai Hujin Expressway. Results indicate that, in large-scale mixed traffic, the proposed approach effectively coordinates the longitudinal dynamics of CAVs, significantly reducing both average travel time and computational overhead, thereby exhibiting strong potential for real-world deployment.

The remainder of the paper is structured as follows: 
Section~\ref{sec_model} introduces the coupled model characterizing mixed traffic flow dynamics and formulates the distributed control problem for CAVs. 
Section~\ref{sec_analysis} presents the multiagent truncated rollout approach in detail, along with its performance analysis. 
Section~\ref{sec_optimize} discusses the computational advantages of the proposed algorithm. 
Section~\ref{sec_results} evaluates the effectiveness of the method through comparative simulation experiments. 
Finally, Section~\ref{sec_conclude} concludes the paper and outlines several directions for future work. 
\section{Modeling and Formulation}
\label{sec_model}
This section introduces a fully coupled PDE-ODE model to characterize the impact of CAVs on highway traffic. By discretizing the road segment using the cell transmission model (CTM) \citep{DAGANZO1994269}, we derive the density evolution equations for individual road cells (Section~\ref{Modeling}). Furthermore, we establish a unified system density evolution equation that captures the dynamics of road cells both with and without the influence of CAVs, and formulate a distributed optimization problem in which each CAV acts as an independent actuator (Section~\ref{Problem Definition}). 

\subsection{Modeling}
\label{Modeling}
Following established research \citep{Giulia2018Traffic, M.L.2014Scalar, Piacentini2019HighwayTC, Liard2023APM, ZOU20257}, we adopt the classical Lighthill-Whitham-Richards (LWR) equation \citep{Richards1956ShockWO} to capture the evolution of traffic density. Formulated as a hyperbolic partial differential equation (PDE), this first-order macroscopic model governs traffic flow on unidirectional roads and effectively characterizes the behavior of CAVs on highways and their impact on adjacent traffic.

\begin{align}
\label{PDE}
\partial_t \rho(t,x) + \partial_x f\left(\rho(t,x)\right) = 0,
\end{align}
where $\rho(t,x) \in \left[0,R\right)$ represents the traffic density at position $x \in \mathbb{R}$ at time $t \geq 0$, $R$ is the jam density of the road. 
The flow function $f\left(\rho(t,x)\right)$ is defined as $f\left(\rho(t,x)\right) = \rho(t,x) v\left(\rho(t,x)\right)$. In this study, we adopt the Greenshields model to characterize the flow-density relationship \citep{greenshields1935study}, expressed as $v\left(\rho(t,x)\right) =V \left(1- \rho(t,x)/R \right)$, where the constant $V$ denotes the maximum vehicle speed. Consequently, this model implies that the flow $f\left(\rho(t,x)\right)$ reaches its maximum value at $\rho_*= R/2$. While triangular fundamental diagrams are widely employed in empirical studies to capture traffic characteristics \citep{SHI2021279}, the Greenshields model offers distinct advantages regarding mathematical tractability. Its continuous differentiability and strict concavity are critical for establishing the theoretical stability of the control system and ensuring the computational efficiency of the algorithm \citep{Zu2018RealtimeET, piacentini2020traffic}. Furthermore, it is important to clarify that this linear speed-density relationship is intended to characterize the aggregate equilibrium of the traffic flow, rather than the microscopic instantaneous dynamics of individual CAVs. 

The position of the CAV is represented by $y(t) \in [0, L]$, and its trajectory can be described by the following ODE \citep{M.L.2014Scalar},
\begin{align}
\label{ODE}
\partial_t y(t) = \min \Big\{ u(t), v\left(\rho\left(t,{y(t)+}\right)\right) \Big\},
\end{align}
where $L \in \mathbb{N}$ is the length of the road , $u(t) \in {U}$ is the desired velocity of the CAV, ${U}=[u_{\min},u_{\max}]$, and $\rho\left(t,{y(t)+}\right)$ denotes the density value downstream of $y(t)$. This formula indicates that the CAV can proceed at its desired speed if the downstream traffic is not congested, otherwise it will adjust to the downstream traffic speed. 

By integrating Eq.~(\ref{PDE}) and  Eq.~(\ref{ODE}), a fully coupled PDE-ODE model with multiple CAVs is developed,
\begin{subequations}
\label{PDE-ODE}
	\begin{align}
	&\partial_t \rho(t,x) + \partial_x f\left( \rho(t,x)\right) = 0, &t \textgreater 0, x \in \mathbb{R},\label{PDE-a}\\
    & \partial_t y_i(t) = \min \Big\{ u_i(t), v\left(\rho(t,y_{i}(t)+)\right) \Big\}, &t \textgreater 0, i \in \mathcal{I}(t),\\
    &f\left(\rho(t,y_i(t))\right) - \partial_t y_i(t) \times \rho(t,y_i(t)) \leq \frac{\alpha R}{4V} {\left(V-\partial_t y_i(t)\right)}^2, &t \textgreater 0, i \in \mathcal{I}(t),\label{CAV reduction}\\
    &f \left(\rho(t,0)\right) = f_{\text{in}}(t), &t \textgreater 0,\\
    &f\left(\rho(t,L)\right) = f_{\text{out}}(t), &t \textgreater 0,
	\end{align}
\end{subequations}
where $\mathcal{I}(t) = \{0,1,2,\ldots, I\}$ is the set of CAVs at time $t$, and $i\in \mathcal{I}(t)$ is the index of the CAV. The inequality ~(\ref{CAV reduction}) serves as a moving boundary constraint, limiting the flux passing the CAV based on its speed. The parameter $\alpha \in (0,1)$ indicates the reduction in road capacity due to the presence of the CAVs, and we set $\alpha = (W-1)/{W}$, where $W >1$ denotes the number of lanes upstream of the merge point~\citep{Daini2022Centralized}.

To solve the vehicle conservation equation, Eq.~(\ref{PDE-a}) is discretized in both time and space. We set the system update interval as $\Delta t > 0$, and the length of the road cell $j \in \mathcal{J}=\{0,1,\ldots, J\}$ in the coordination zone is $\Delta x >0$. These parameters are chosen to satisfy the Courant-Friedrichs-Lewy (CFL) condition \citep{ZHANG2001337PartB}, $V \Delta t \leq C \Delta x$, where $C \in (0, 1]$ is the Courant number coefficient used to ensure numerical stability. In this study, we set $C=0.9$.
In addition, we define the center position of cell $j$ as $x_j = (j + 1/2) \Delta x$, with the upstream and downstream boundaries given by $x_{j-1/2} = j \Delta x $ and $x_{j+1/2} = (j+1) \Delta x $, respectively. Specifically, $x_{-1⁄2}=0$ and $x_{J+1⁄2}=L$.
At each time step $k \in \{1,2,\ldots\}$, the density $\rho_j(k)$ of each cell $j$, bounded by $\mathcal{D}=[0, R]$, is updated using the supply-demand equation \citep{JIN20121000PartB},
\begin{subequations}
\label{supply-demand}
\begin{align}
&\rho_j({k+1}) = \rho_j(k) - \frac{\Delta t}{\Delta x} \left(F_{j+\frac{1}{2}}(k) -F_{j-\frac{1}{2}}(k) \right),\label{supply-demand-a}\\
&F_{j+\frac{1}{2}}(k) = \min \Big\{ D \left(\rho_j(k) \right), S\left(\rho_{j+1}(k)\right) \Big\},\label{supply-demand-b}\\
&F_{j-\frac{1}{2}}(k) = \min \Big\{ D \left(\rho_{j-1}(k) \right), S\left(\rho_{j}(k)\right) \Big\},\label{supply-demand-c}\\
&D \left(\rho_j(k) \right) = f \left(\min \left\{\rho_j(k), \rho_* \right\} \right),\label{supply-demand-d}\\
&S \left(\rho_j(k) \right) = f \left(\max \left\{\rho_j(k), \rho_* \right\} \right).\label{supply-demand-e}
\end{align}
\end{subequations}
This equation is governed by the upstream flow $F_{j-\frac{1}{2}}(k)$ and the downstream flow $F_{j+\frac{1}{2}}(k)$. And the $F_{-1/2}(k)$ and $F_{J+1/2}(k)$ are calculated using $D\left(\rho_{-1}(k)\right)=f_{\text{in}}(k)$ and $S\left(\rho_{J+1}(k)\right)=f_{\text{out}}(k)$, where $\rho_{-1}(k)= {f_{\text{in}}(k)}/{V}$ and $\rho_{J+1}(k)={f_{\text{out}}(k)}/{V}$. 

\subsection{Problem Definition}
\label{Problem Definition}
Considering the scenario depicted in Fig.~\ref{merging-bottleneck}, CAVs leverage infrastructure communication to acquire real-time road density states and the control intentions of peer agents. Operating upstream of the bottleneck, they predict future traffic evolution and optimize longitudinal control actions to proactively regulate the inflow of following HDVs. This regulation strategy serves to dampen density oscillations and maintain smooth spacing, thereby mitigating congestion triggers at the merge point and maximizing overall road throughput.

However, in mixed autonomy traffic flow, the speed control of CAVs may introduce discontinuities in the surrounding traffic density. To investigate  this problem, at time step $k$, we consider the upstream and downstream densities of the cell $j^{'}$ where CAV $i \in \mathcal{I}(k)$ is located as Riemann-type initial reference. When the vehicle moves at the desired speed $u_i(k)$ and the constraint (\ref{CAV reduction}) is not satisfied (such CAVs are referred to as controlled CAVs) , its upstream and downstream densities are updated to $\hat{\rho}_{j^{'}-\frac{1}{2}}(u_i(k))$ and $\check{\rho}_{j^{'}+\frac{1}{2}}(u_i(k))$ respectively~\citep{Daini2022Centralized}. These updated values simulate the congestion effect induced by the controlled CAV.
The flow between the cell $j^{'}$ and $j^{'}-1$, $j^{'}+1$ can be reconstructed as,
\begin{align}
& F_{j'-\frac{1}{2}}(k) = \min \left \{ D \left(\rho_{j'-1}(k) \right), S\left(\hat{\rho}_{j'-\frac{1}{2}}(u_i(k))\right) \right\},\label{CAV_up}\\
& \Delta t \times F_{j'+\frac{1}{2}}(k) = \min \left \{ \Delta t_{i,j'}(k), \Delta t \right\} f\left( \check{\rho}_{j'+\frac{1}{2}}(u_i(k))\right) \nonumber\\
& \qquad \qquad \qquad \quad + \max \left \{ \Delta t - \Delta t_{i,j'}(k), 0 \right\} f\left( \hat{\rho}_{j'-\frac{1}{2}}(u_i(k))\right),\label{CAV_down}
\end{align}
where $\Delta t_{i,j^{'}}(k)$ represents the time required for CAV $i$ to reach the downstream $x_{j^{'}+1/2}$, 
\begin{align*}
    & \Delta t_{i,j^{'}}(k) = \frac{1-d_{i,j^{'}}(k)}{u_i(k)} \Delta x, \\
    & d_{i,j^{'}}(k) = \frac{\rho_{j^{'}}(k)-\hat{\rho}_{j^{'}-\frac{1}{2}}(u_i(k))}{\check{\rho}_{j^{'}+\frac{1}{2}}(u_i(k))-\hat{\rho}_{j^{'}-\frac{1}{2}}(u_i(k))}.
\end{align*}
This reconstruction process of density and flow directly reflects the constraint (\ref{CAV reduction}), and therefore, it will not be restated explicitly in the subsequent discussion.
When multiple CAVs are in the same cell, they are checked sequentially for compliance with constraint (\ref{CAV reduction}). If there is more than one controlled CAV, the controlled CAV closest to the downstream is used for flow reconstruction, as it has a greater impact on the traffic flow. 

Consequently, for each time step $k$, the manner in which cell densities are updated within the multiagent system depends on the speed command $u(k)$ executed by the CAV, and is formally described as follows,
\begin{align*}
\rho_j(k+1) = \rho_j(k) +& f_1\Big(\rho_{j-1}(k),\rho_{j}(k),\rho_{j+1}(k)\Big), \nonumber \\
&\qquad \text{if } \forall i \in \mathcal{I}(k), \ y_{i}(k) \notin \Omega_j \text{ or } u_i(k) =0 \text{ or } \neg \Gamma_{i,j}(k),  \\[10pt]
\rho_j(k+1) = \rho_j(k) +& f_2\Big(\rho_{j-1}(k),\rho_{j}(k)\Big) + f_3\left(u_i(k)\right), \nonumber \\
&\qquad \text{if } \exists i \in \mathcal{I}(k), \ y_{i}(k) \in \Omega_j, \ u_i(k) \neq 0 \text{ and } \Gamma_{i,j}(k), 
\end{align*}
where $f_1(\cdot)$, $f_2(\cdot)$, $f_3(\cdot)$ are all nonlinear functions computed from the flow at the cell interfaces and the CAV actions in Eq.~(\ref{supply-demand}). $\Omega_j$ denotes the spatial domain of cell $j$, defined as $ \left[j \Delta x, (j+1)\Delta x \right)$.
Condition $\Gamma_{i,j}(k) := \gamma_1\left(\rho_j(k)\right) < {u}_i(k) < \gamma_2\left(\rho_j(k)\right)$ checks if the control action of CAV $i$ is within the range defined by $\gamma_1\left(\rho_{j}(k) \right)$ and $\gamma_2\left(\rho_{j}(k) \right)$, which violates the constraint (\ref{CAV reduction}).  The condition $\neg \Gamma_{i,j}(k) := {u}_i(k) \leq \gamma_1\left(\rho_j(k)\right) \text{ or } {u_i}(k) \geq \gamma_2\left(\rho_j(k)\right)$ means that CAV $i$ acts as a passive vehicle (or tracer) rather than an active actuator. The detailed derivations are provided in~\ref{Appendix A} and~\ref{Appendix B}.

To unify the description of density transitions for different scenarios, we define the vector $\mathbf{O}(k) = {\left[o_0(k), o_1(k), \ldots, o_J(k)\right]}^T \in \mathbb{R}^{J+1}$, where $o_j(k)=1$ indicates that cell $j$ contains a CAV whose action remains within the maximum allowable downstream limit but does not satisfy the maximum flow reduction constraint at time step $k$, and $o_j(k)=0$ otherwise. The control action of CAV $i$ is also transformed into a vector $\mathbf{u}_i(k) = {\left[u_{i,0}(k), u_{i,1}(k), \ldots, u_{i,J}(k)\right]}^T \in \mathbb{R}^{J+1}$, with non-zero input only in the cell occupied by this CAV. The state transition equation of the overall system can be modeled as,
\begin{align}
\label{overall system}
\bm{\rho}(k+1) = \bm{\rho}(k) + \mathbf{{A}}(k) \mathbf{O}(k) + \mathbf{{B}}(k) + \mathbf{{C}}({\mathbf{u}}(k)),
\end{align}
where $\bm{\rho}(k) = {\left[\rho_0(k), \rho_1(k), \ldots, \rho_J(k)\right]}^T \in \mathbb{R}^{J+1}$ represents the density state of the whole road segment, ${\mathbf{u}}(k) = {\left[\mathbf{u}_0(k), \mathbf{u}_1(k), \ldots, \mathbf{u}_I(k)\right]} \in \mathbb{R}^{(J+1) \times (I+1)}$ denotes the aggregate control actions of all CAVs. The matrix $\mathbf{{A}}(k) \in \mathbb{R}^{(J+1) \times (J+1)}$ and vectors $\mathbf{{B}}(k)\in \mathbb{R}^{J+1}$, $\mathbf{{C}}({\mathbf{u}}(k))\in \mathbb{R}^{J+1}$ are calculated as following,
\begin{align*}
    &\mathbf{{A}}(k) = \mathrm{diag}{\Big(\text{A}_{0,0}(k),{\text{A}}_{1,1}(k),\dots,{\text{A}}_{J,J}(k)\Big)}, \\ 
    &\qquad {\text{A}}_{j,j}(k)=f_2\Big(\rho_{j-1}(k),\rho_{j}(k)\Big) - f_1\Big(\rho_{j-1}(k),\rho_{j}(k),\rho_{j+1}(k)\Big),\\
    &\mathbf{{B}}(k) = {\Big[f_1\Big(\rho_{-1}(k),\rho_{0}(k),\rho_{1}(k)\Big),f_1\Big(\rho_{0}(k),\rho_{1}(k),\rho_{2}(k)\Big),\dots,f_1\Big(\rho_{J-1}(k),\rho_{J}(k),\rho_{J+1}(k)\Big)\Big]}^T,\\
    &\mathbf{{C}}({\mathbf{u}}(k)) ={\Big[{\text{C}}_0({\mathbf{u}(k)}),{\text{C}}_1({\mathbf{u}(k)}),\dots,{\text{C}}_J({\mathbf{u}(k)})\Big]}^T, \\
    &\qquad {\text{C}}_j({\mathbf{u}(k)})=\begin{cases}
    f_3\left(u_{i_j^*,j}(k)\right), & o_{j}(k) =1, \\
    0, & o_{j}(k)= 0 ,
     \end{cases}\\
    & \qquad o_{j}(k)= \begin{cases}
    1, & \exists i \in \mathcal{I}(k), \ y_{i}(k) \in \Omega_j, \, u_{i,j}(k) \neq 0   \ \text{and} \ \Gamma_{i,j}(k), \\
    0, & \forall i \in \mathcal{I}(k), \ y_{i}(k) \notin \Omega_j \ \text{or}\ u_{i,j}(k) = 0  \ \text{or}  \ \neg \Gamma_{i,j}(k),
     \end{cases}\\
    & \qquad  i_j^* = \min \left\{ i \in \mathcal{I}(k) \, \middle| \, y_i(k) \in \Omega_j, \, u_{i,j}(k) \neq 0, \, \Gamma_{i,j}(k) \right\},
\end{align*}
where $i_j^*$ denotes the index of the CAV located in cell $j$ that is closest to the downstream boundary and induces a local capacity reduction. By casting the discretized dynamics into this compact matrix form, we consolidate the complex PDE-ODE interactions into a unified state-space representation. This structural transformation is pivotal, as it streamlines the explicit derivation of the prediction model.

The position $y_i \in [0, L]$ of each CAV $i$ is updated as,
\begin{align*}
y_i(k+1) = y_i(k) + \min\Big\{\max\left(\mathbf{u}_i(k)\right), v\left(\rho(k,y_i(k)+)\right)\Big\} \Delta t.
\end{align*}

Based on the aforementioned state transition equations, we formulate a distributed cooperative control problem for each CAV, where the traffic density of road cells is treated as the state variable, the speed of the CAV serves as the control input, and the objective is to minimize the total travel time of all vehicles within the controlled road segment.
Taking the subsystem of CAV $i$ as an example, at sampling time step $k$, both the prediction horizon and control horizon are set to $N>0$, and the perceived traffic state is denoted by $\bm{\rho}_i(k)=\bm{\rho}(k) \in \bm{\mathcal{D}}$, where $\bm{\mathcal{D}} \in \mathbb{R}^{J+1}$ and each component $\rho_{i,j}(k)= \rho_{j}(k) \in {\mathcal{D}}$, $j \in \mathcal{J}$. The objective function, denoted by, 
\begin{align*}
    J_i\left(\bm{\xi}_i(k),\bm{\mu}_i(k)\right)=\sum_{t=0}^{N-1}\sum_{j=0}^{J}\rho_{i,j}(k+t) \Delta t \Delta x,
\end{align*}
characterizes the total time spent by all vehicles through the evolution of traffic density across road cells over time. Crucially, minimizing this density-based cost inherently mitigates the formation of high-density clusters upstream of the bottleneck. This effectively reduces the turbulence and capacity drop associated with mandatory lane changes and creates favorable gaps for HDVs to merge.
As a system-level cost, it enables each agent to independently optimize its local control actions while collectively contributing to the maximization of overall traffic efficiency. The $\bm{\mu}_i(k) =\left[\mathbf{u}_i(k+0),\mathbf{u}_i(k+1), \ldots, \mathbf{u}_i(k+N-1)\right] \in \mathbb{R}^{(J+1) \times N}$ represents the predicted inputs applied by CAV $i$ from time step $k\Delta t$ to $(k+N-1)\Delta t$, and $\bm{\xi}_i(k) =\left[\bm{\rho}_i(k+1),\bm{\rho}_i(k+2), \ldots, \bm{\rho}_i(k+N)\right] \in \mathbb{R}^{(J+1) \times N}$ denotes the corresponding predicted evolution of traffic states. The prediction model employed during optimization is expressed as,
\begin{align*}
\bm{\rho}_i(k+1) = \bm{\rho}_i(k) + \mathbf{{A}}_i(k) \mathbf{O}_i(k) + \mathbf{{B}}_i(k) + \mathbf{{C}}_{i}(\mathbf{u}(k)).
\end{align*}
Utilizing matrix $\mathbf{{C}}_{i}(\mathbf{u}(k))$, CAV $i$ can explicitly account for the control actions of other agents $j$, thereby ensuring coordination among control actions. 
It is worth noting that while the prediction model established here utilizes deterministic state evolution for computational tractability, the actual traffic environment involves stochastic behaviors. The robustness against such uncertainties is explicitly mitigated by the receding horizon mechanism of the proposed control framework, as detailed in Section~\ref{MPC framework}. 
\section{Optimal Policy with Multiagent Truncated Rollout}
\label{sec_analysis}
In this section, we introduce the optimization framework of the multiagent truncated rollout method (Section~\ref{Framework_Truncated_Rollout}) and present its enhancement in cost performance (Section~\ref{Section_Cost_Improvement}). Additionally, we conduct a stability analysis of the proposed approach (Section~\ref{Stability analysis}).

\subsection{Framework for Multiagent Rollout}
\label{Framework_Truncated_Rollout}
As a special case of the rollout, model predictive control (MPC) enables online prediction of future states and optimization of vehicle actions \citep{ZHANG2023199partb}. Leveraging this property, in this section, we propose a multiagent truncated rollout approach based on the distributed model predictive control (DMPC) framework. 
By integrating an agent-by-agent sequential optimization mechanism with the truncated rollout approach that shortens the optimization horizon, the proposed method enhances optimization in both spatial and temporal dimensions while alleviating computational burden, thereby achieving more efficient and effective control of mixed traffic flow.

\subsubsection{Multiagent Model Predictive Control}
\label{MPC framework}
In the multiagent truncated rollout method, each CAV is equipped with its own controller. These controllers receive driving information from all other CAVs, predict the evolution of traffic states, and independently solve an optimization problem to obtain an optimal control sequence. Specifically, for each CAV $i \in \mathcal{I}(k)$, the travel cost optimization problem can be formulated as,
\begin{subequations}
\label{control}
\begin{align}
&\bm{\mu}_i^{*}(k) = \arg\min_{\bm{\mu}_i(k)} J_i\left(\bm{\xi}_i(k),\bm{\mu}_i(k)\right), \label{control-a}\\
\text{s.t.} \quad &\bm{\rho}_i(k+t+1)=\bm{\rho}_i(k+t)+ \mathbf{{A}}_i(k+t)\mathbf{O}_i(k+t)+\mathbf{{B}}_i(k+t)+\mathbf{{C}}_i({\mathbf{u}(k+t)}),\label{control-b}\\
& y_{i'}(k+t+1) = y_{i'}(k+t) + \min\Big\{\max\left(\mathbf{u}_{i'}(k+t)\right), v\left(\rho(k+t,y_{i'}(k+t)+)\right)\Big\} \Delta t,  \nonumber \\
& \qquad {i'} \in \mathcal{M}_i(k),\label{control-c}\\
&o_{i,j}(k+t)  = \nonumber \\ 
&\qquad \quad \begin{cases}
    1, & \exists \ i^{'} \in \mathcal{M}_i(k), \ y_{i^{'}}(k+t) \in \Omega_j, \, u_{i^{'},j}(k+t) \neq 0   \ \text{and} \ \Gamma_{i^{'},j}(k+t), \\
    0, & \forall \ i^{'} \in \mathcal{M}_i(k), \ y_{i^{'}}(k+t) \notin \Omega_j \ \text{or}\ u_{i^{'},j}(k+t) = 0  \ \text{or}  \ \neg \Gamma_{i^{'},j}(k+t),
     \end{cases} \label{control-d}  \\
&V_i\left(\bm{\xi}_i(k),\bm{\mu}_i(k)\right) \leq \eta_i\left(\bm{\xi}_i(k),\lambda_i\right) \label{control-g}, \\
&\bm{\rho}_i(k) = \bm{\rho}(k), \quad \bm{\rho}_{i}(k+t) \in \bm{\mathcal{D}}, \quad \bm{\rho}_{i}(k+N) \in \bm{\mathcal{D}}_{i,T},  \quad  \max\left(\mathbf{u}_{i}(k+t)\right) \in U, \label{control-e}\\
&t \in \{0,1,\ldots, N-1\}, \quad  j \in \mathcal{J},\label{control-f}
\end{align}
\end{subequations}
where $\bm{\mu}_i^{*}(k)$ represents the optimal cost control sequence at time $k$, while $\bm{\xi}_i^{*}(k)$ denotes the corresponding prediction state sequence. $\mathcal{M}_i(k) \in \mathcal{I}(k)$ represents the set of CAVs for which control sequences have already been computed when solving subsystem $i$ at time step $k$. Since the CAVs execute an agent-by-agent sequential decision-making approach, $\mathcal{M}_i(k)$ remains unchanged within the prediction horizon. The position update of CAV $i'\in \mathcal{M}_i(k)$ follows the constraint~(\ref{control-c}). And CAV $i^{''} \in \mathcal{I}(k)-\mathcal{M}_i(k)$ that have not yet obtained their control sequences will temporarily follow the same car-following model as HDVs. $\mathbf{O}_i(k+t)={\left[o_{i,0}(k+t),o_{i,1}(k+t), \ldots,o_{i,J}(k+t)\right]}^T$ indicates whether each cell $j$ contains any CAV bottleneck.
The initial condition is $\bm{\rho}_i(k) = \bm{\rho}(k)$, and the terminal constraint is $\bm{\rho}_{i}(k+N) \in \bm{\mathcal{D}}_{i,T}$, where $\bm{\mathcal{D}}_{i,T} \in \mathbb{R}^{J+1}$ is the terminal constraint set for CAV $i$, and its solution method is detailed in~\citep{Ferramosca2009nonlinearMPC}.
The sets of states and actions contain certain equilibrium points $\left(\bm{\rho}_i^e,\mathbf{u}_i^e\right)$. This paper assumes that the optimal cost equilibrium point is the origin. The function $\eta_i\left(\bm{\xi}_i(k),\lambda_i\right)$ in constraint~(\ref{control-g}) is the contraction function that needs to be designed. 
The stability objective $V_i\left(\bm{\xi}_i(k), \bm{\mu}_i(k)\right)$ is a classical MPC tracking cost function, designed to ensure that the system gradually converges to the desired state without divergence,
\begin{align}
    & V_i\left(\bm{\xi}_i(k),\bm{\mu}_i(k)\right) = \sum_{t=0}^{N-1}L_s\left(\bm{\rho}_i(k+t),\mathbf{u}_i(k+t)\right) +E\left(\bm{\rho}_i(k+N)\right), \label{valuefunction} \\
    & L_s\left(\bm{\rho}_i(k+t),\mathbf{u}_i(k+t)\right)={\bm{\rho}_i(k+t)}^T \mathbf{Q}_i \bm{\rho}_i(k+t) + {\mathbf{u}_i(k+t)}^T \mathbf{R}_i \mathbf{u}_i(k+t), \\
    & E\left(\bm{\rho}_i(k+N)\right)={\bm{\rho}_i(k+N)}^T \mathbf{H}_i \bm{\rho}_i(k+N), \label{E_rho_k}
\end{align}
where the matrices $\mathbf{H}_i \in \mathbb{R}^{(J+1)\times (J+1)}$, $\mathbf{Q}_i \in \mathbb{R}^{(J+1)\times (J+1)}$, and $\mathbf{R}_i \in \mathbb{R}^{(J+1)\times (J+1)}$ are all positive definite and symmetric. $\mathbf{Q}_i$ and $\mathbf{R}_i$ are the given state and control weighting matrices, respectively. $\mathbf{H}_i$ is the terminal state weighting matrix, which satisfies the Lyapunov equation \citep{Magni2006Regional},
\begin{align}
\label{Lyapunov equation}
    {\mathbf{Z}_i}^T \mathbf{H}_i \mathbf{Z}_i - \mathbf{H}_i = - (\mathbf{Q}_i^* +\Delta \mathbf{Q}_i), \quad \mathbf{Q}_i^* = \mathbf{Q}_i + {\mathbf{K}_i}^T \mathbf{R}_i \mathbf{K}_i,
\end{align}
$\Delta \mathbf{Q}_i \in \mathbb{R}^{(J+1)\times (J+1)}$ is a positive definite matrix, $\mathbf{Z}_i \in \mathbb{R}^{(J+1)\times (J+1)}$ and $\mathbf{K}_i \in \mathbb{R}^{(J+1)\times (J+1)}$ are defined as detailed in Section \ref{Stability analysis}.

To construct the contraction function $\eta_i(\bm{\xi}_i(k),\lambda_i)$, we define the optimization problem for the stability function as,
    \begin{align}
	&\bm{\mu}_i^{v}(k) = \arg\min_{\bm{\mu}_i(k)} V_i\left(\bm{\xi}_i(k),\bm{\mu}_i(k)\right), \nonumber \\
	\text{s.t.} \quad & \text{constraints~(\ref{control-b})-(\ref{control-d})}, ~(\ref{control-e})-(\ref{control-f}),\label{value function optimal}
	\end{align}
where $\mu_i^{v}(k)$ represents the optimal stability solution at time $k$. 
Based on $\mu_i^{v}(k)$, the optimal control sequence $\bm{\mu}_i^{*}(k-1)$ and the corresponding predicted state sequence $\bm{\xi}_i^{*}(k-1)$ obtained at time step $k-1$ from the control problem in Eq.~(\ref{control}), we define $\eta_i\left(\bm{\xi}_i(k), \lambda_i\right)$ as
\begin{align}
\eta_i\left(\bm{\xi}_i(k),\lambda_i\right)=V_i\left(\bm{\xi}_i^{v}(k),\bm{\mu}_i^{v}(k)\right) + \lambda_i\left[V_i\left(\bm{\xi}_i^{*}(k-1),\bm{\mu}_i^{*}(k-1)\right)-V_i\left(\bm{\xi}_i^{v}(k),\bm{\mu}_i^{v}(k)\right)\right],
\end{align}
and $\lambda_i \geq 0$. $\bm{\xi}_i^{v}(k)$ represents the predicted state sequence corresponding to the predictive sequence $\bm{\mu}_i^{v}(k)$. 

At each sampling time $k$, CAV $i \in \mathcal{I}(k)$ first minimizes the stability objective function to construct the contraction constraint, and then explores the optimal cost control sequence. If the control problem in Eq.~(\ref{control}) for subsystem $i$ is feasible, the first control input $\mathbf{u}_i^{*}(k+0)=\mathbf{u}_i^{*}(k)$ of the final solution $\bm{\mu}_i^{*}(k)$ is defined as the distributed coupled model predictive control law, $\mathbf{u}_i(k)=\mathbf{u}_i^{*}(k)$.
The closed-loop system corresponding to CAV $i$  is,
\begin{align}
\label{closed-loop}
\bm{\rho}_i(k+1)=\bm{\rho}_i(k)+ \mathbf{{A}}_i(k)\mathbf{O}_i(k)+\mathbf{{B}}_i(k)+\mathbf{{C}}_i({\mathbf{u}}(k)) = \mathbf{g}_i\left(\bm{\rho}_i(k), \mathbf{u}_i(k), \mathbf{e}_i(k)\right).
\end{align}
Both $\mathbf{O}_i(k)$ and $\mathbf{{C}}_i({\mathbf{u}}(k)) $ include the position of CAV $i$ and the travel information of all other CAVs. For convenience, we denote this additional information by $\mathbf{e}_i(k)$.

Although the density prediction model employed in the optimization framework does not explicitly account for the stochasticity of HDVs, the optimization problem is re-initialized at each time step $k$ based on the actual measured state $\bm{\rho}(k)$. This feedback loop enables the controller to continuously compensate for prediction errors arising from HDV stochasticity or model mismatches, thereby ensuring the robustness of the control strategy in mixed traffic environments. 

\subsubsection{Sequential Optimization with Truncated Rollout}
In multiagent systems, conventional DMPC, while computationally efficient, typically yields suboptimal solutions due to the lack of global coordination. To bridge this gap, we propose an agent-by-agent iterative optimization approach that maintains the distributed nature of DMPC while explicitly enhancing coordination among CAVs through real-time policy sharing.

\begin{figure}[htbp]
  \centering
  \subfigure[Agent-by-agent optimality]
  {\label{Agent-by-agent}
  \includegraphics[width=0.9\textwidth,trim=0 0 0 0, clip]{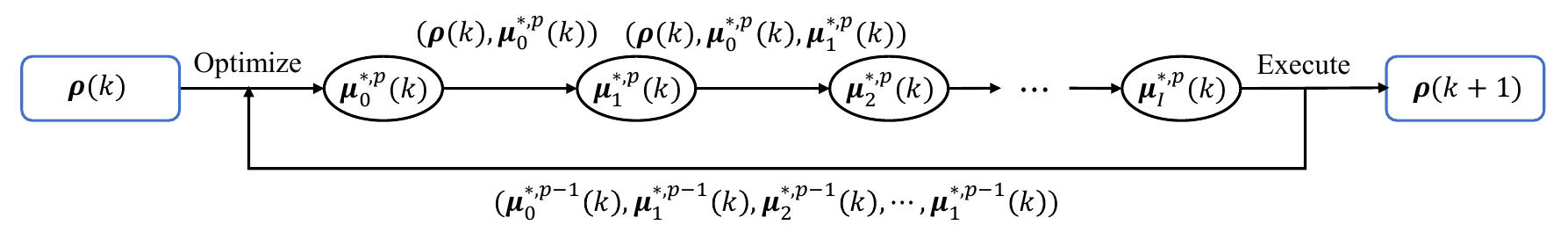}}
  \subfigure[Truncated rollout strategy in each iteration]
  {\label{Truncated-rollout}
  \includegraphics[width=0.9\textwidth,trim=0 0 0 0, clip]{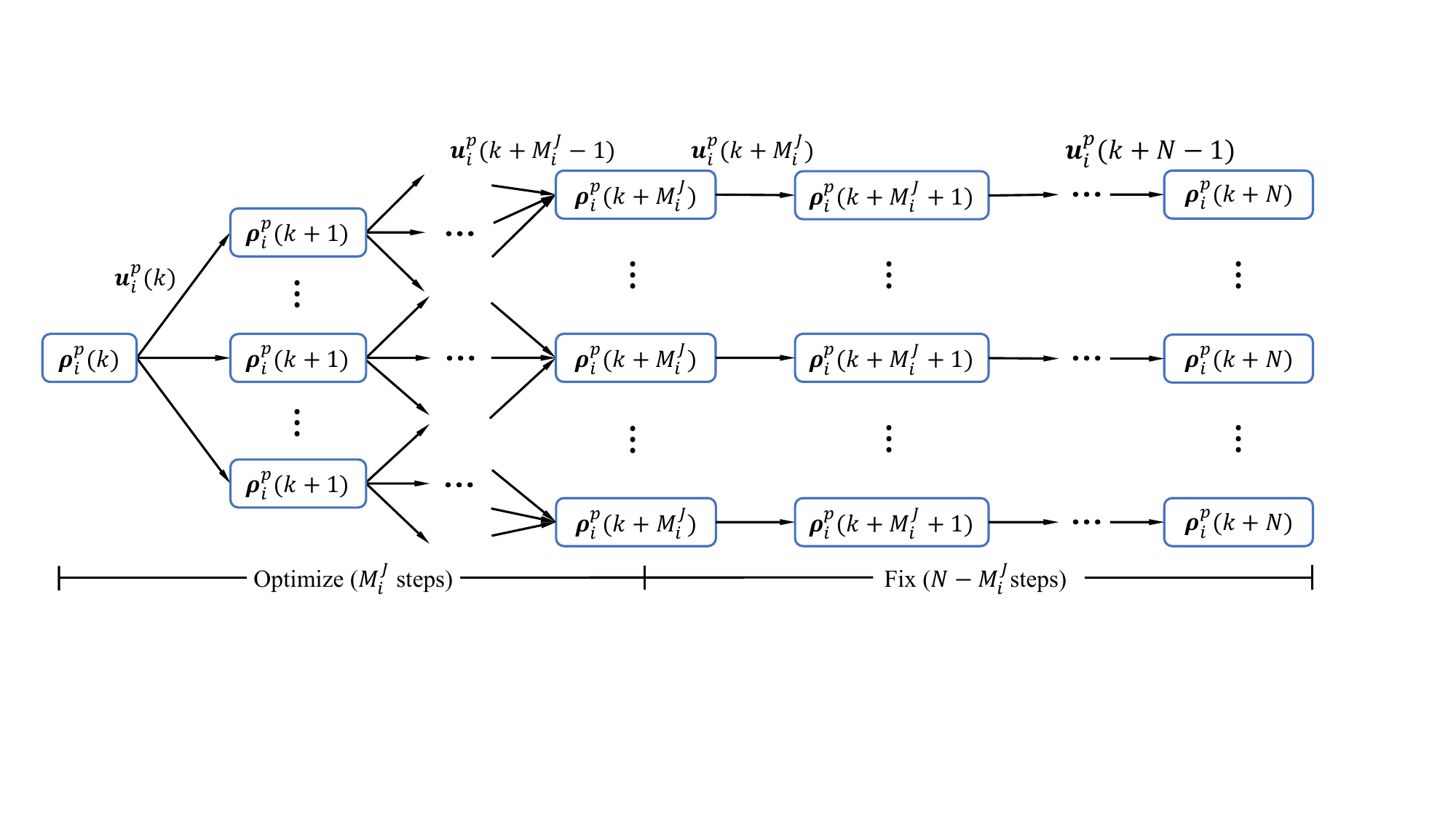}} 
  \caption{The framework for multiagent truncated rollout approach.} \label{framework}
\end{figure}

As illustrated in Fig.~\ref{Agent-by-agent}, at each time step $k$, the CAVs within the set $\mathcal{I}(k)=\{0,1,\ldots,I\}$ sequentially compute their control sequences according to a predetermined order. Iterative refinement is performed after all CAVs complete one round of policy computation to progressively converge towards the optimal control solution. In this section, we assume CAVs execute their optimization in descending index order (i.e., from highest to lowest index). Specifically, during the $p$-th iteration, each CAV $i \in \mathcal{I}(k)$ receives the latest control sequences $\bm{\mu}_{0:i-1}^{*,p}(k)$ from CAV $0$ to $i-1$, and the historical control strategies $\bm{\mu}_{i+1:I}^{*,p-1}(k)$ from CAV $i+1$ to $I$. Leveraging the density state transition equation to predict future traffic states, CAV $i$ then solves the optimization problem in Eq.~(\ref{control}) to obtain its optimal control strategy $\bm{\mu}_{i}^{*,p}(k)$. Subsequently, CAV $i$ propagates this updated policy to its succeeding CAVs. This process establishes a sequential cooperative decision chain, ensuring real-time responsiveness and collaborative effectiveness.

To enhance optimization performance while improving computational efficiency, we propose a truncated rollout optimization strategy (see Fig.~\ref{Truncated-rollout}), which reduces the dimensionality of the optimization variables by truncating the prediction horizon during each iteration \citep{Bertsekas2020ConstrainedMR}. At time step $k$, the CAVs first execute an $N$-step prediction to obtain an initial feasible solution $\bm{\mu}_{i}^{*,0} (k)=\left[\mathbf{u}_{i}^{*,0}(k+0),\mathbf{u}_{i}^{*,0}(k+1), \ldots, \mathbf{u}_{i}^{*,0}(k+N-1)\right]$, ensuring the system can make reasonable adjustments in the early stages. During the iteration $p \geq 1$, the prevailing control sequence $\bm{\mu}_{i}^{*,p-1}(k)$ of CAV $i$ is divided into two parts: the first $M_i^J$ steps (where $M_i^J \leq N$) for optimization, denoted as $\bm{\mu}_{i,f}^{*,p-1}(k)$; and the remaining $N-M_i^J$ steps for fixing, denoted as $\bm{\mu}_{i,r}^{*,p-1}(k)$,
\begin{align*}
    &\bm{\mu}_{i,f}^{*, p-1}(k) = \left[\mathbf{u}_{i}^{*, p-1}(k+0),\mathbf{u}_{i}^{*, p-1}(k+1),\ldots,\mathbf{u}_{i}^{*, p-1}(k+M_i^J-1)\right],\\
    &\bm{\mu}_{i,r}^{*,p-1}(k) = \left[\mathbf{u}_{i}^{*,p-1}(k+M_i^J),\mathbf{u}_{i}^{*, p-1}(k+M_i^J+1),\ldots,\mathbf{u}_{i}^{*, p-1}(k+N-1)\right].
\end{align*}
The controller of CAV $i$ optimizes the first $M_i^J$ control steps based on the current environmental state and $\bm{\mu}_{i,r}^{*,p-1}(k)$, while $\bm{\mu}_{i,r}^{*,p-1}(k)$ remains unchanged throughout this process.  This maintains the control sequence length at $N$ and reduces computational complexity.
The value of $M_i^J$ depends on the current system objective function $J_i\left(\bm{\xi}_i^{*}(k),\bm{\mu}_i^{*}(k)\right)$,
\begin{align}
\label{M_value_J}
M_i^J = M_{\min} + (N - M_{\min}) \frac{\max \left\{J_i^{LB}(k), \min \left\{J_i \left(\bm{\xi}_i^{*}(k), \bm{\mu}_i^{*}(k)\right), J_i^{UB}(k) \right\} \right\} - J_i^{LB}(k)}{J_i^{UB}(k) - J_i^{LB}(k)},
\end{align}
where $M_{\min}$ represents the minimum truncation horizon, $J_i^{UB}(k)$ and $J_i^{LB}(k)$ denote the upper and lower bounds of $J_i\left(\bm{\xi}_i(k),\bm{\mu}_i(k)\right)$, respectively. When the current objective value approaches $J_i^{UB}(k)$, it reveals a significant suboptimality gap between the incumbent cost control policy $\bm{\mu}_i^{*}(k)$ and the theoretical Pareto-optimal solution. In this case, increasing the truncation length $M_i^J$ enhances the global search capability of the optimization problem. Conversely, proximity of $J_i\left(\bm{\xi}_i^{*}(k),\bm{\mu}_i^{*}(k)\right)$ to its lower bound $J_i^{LB}(k)$ implies that the current policy $\bm{\mu}_i^{*}(k)$ resides within a neighborhood of the local optimum, where the benefit of extended-horizon optimization diminishes. In this situation, adaptively decreasing $M_i^J$ can reduce the computational cost while further approximating the globally optimal solution.
The stability function $V_i\left(\bm{\xi}_i(k),\bm{\mu}_i(k)\right)$ is solved iteratively using the same truncation strategy, and
\begin{align}
\label{M_value_V}
M_i^V = M_{\min} + (N - M_{\min}) \frac{\max \left\{V_i^{LB}(k), \min \left\{V_i \left(\bm{\xi}_i^{v}(k), \bm{\mu}_i^{v}(k)\right), V_i^{k,UB} \right\} \right\} - V_i^{LB}(k)}{V_i^{UB}(k)- V_i^{LB}(k)}.
\end{align}

Based on the sequential agent-by-agent optimization property, the state prediction of the last CAV $I$ in Eq.~(\ref{control-b}) incorporates the control information propagated from all CAVs in the current iteration. Consequently, the optimization process terminates when the change in the objective value of CAV $I$ between two consecutive iterations falls below a predefined threshold  $\epsilon$,
\begin{align*}
    \left| J_I\left(\bm{\xi}_{I}^{*,p}(k),\bm{\mu}_{I}^{*,p}(k)\right) - J_I\left(\bm{\xi}_{I}^{*,p-1}(k),\bm{\mu}_{I}^{*,p-1}(k)\right) \right| \leq \epsilon.
\end{align*}

\subsection{Cost Improvement Properties}
\label{Section_Cost_Improvement}
In this section, the performance of cost improvement in the multiagent truncated rollout is analyzed through the following propositions. 
\begin{prp}
\label{agent-by-agent}
The agent-by-agent sequential decision-making approach, which couples the control actions of other CAVs, is no worse than the parallel decision-making approach.
\end{prp}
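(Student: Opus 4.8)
The plan is to show that any feasible solution produced by the parallel (conventional DMPC) scheme is also reachable as a feasible iterate of the agent-by-agent sequential scheme, so that the optimization performed by each agent in the sequential version can only decrease (never increase) the shared objective $J_i$. Concretely, I would fix an arbitrary time step $k$ and an arbitrary iteration index $p$, and compare the two update rules: in the parallel scheme each CAV $i$ optimizes its own $\bm{\mu}_i^{k}$ while holding all others at their values $\bm{\mu}_{j}^{k*,p-1}$ from the previous iteration, whereas in the sequential scheme CAV $i$ optimizes against the \emph{already updated} sequences $\bm{\mu}_{0:i-1}^{k*,p}$ and the previous-iteration sequences $\bm{\mu}_{i+1:I}^{k*,p-1}$. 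The key observation is that, because the objective $J_i(\bm{\xi}_i^k,\bm{\mu}_i^k)=\sum_{t=0}^{N-1}\sum_{j=0}^{J}\rho_{i,j}^{k+t}\Delta t\Delta x$ is a common (system-level) cost and the prediction model~(\ref{control-b})--(\ref{control-d}) is identical across agents, the feasible set over which CAV $i$ optimizes in the sequential scheme always \emph{contains} the point corresponding to ``keep $\bm{\mu}_i$ unchanged,'' so the new iterate satisfies $J\bigl(\bm{\mu}_{0:i}^{k*,p},\bm{\mu}_{i+1:I}^{k*,p-1}\bigr)\le J\bigl(\bm{\mu}_{0:i-1}^{k*,p},\bm{\mu}_{i:I}^{k*,p-1}\bigr)$.

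I would then chain these inequalities along $i=0,1,\dots,I$ within one iteration to obtain a monotone nonincreasing sequence of objective values, and chain across iterations $p=1,2,\dots$ to conclude that the sequential scheme produces a cost that is no larger at every stage than the initial feasible solution it starts from. For the comparison with the parallel scheme, I would argue that both schemes can be initialized from the same feasible $\bm{\mu}^{k*,0}$, and that after one sweep the sequential update dominates the parallel one because each sequential subproblem optimizes over a superset of the information the parallel subproblem uses (specifically, the parallel solution for agent $i$ is itself a feasible candidate in the sequential subproblem for agent $i$, once we note the parallel update does not see the within-iteration updates of agents $0:i-1$). This establishes ``no worse than'' in the sense of the attained system cost after each synchronization round.

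The main obstacle I anticipate is making the feasibility argument fully rigorous in the presence of the coupling terms $\mathbf{O}_i^{k+t}$, $\mathbf{C}_{\mathbf{u},i}^{k+t}$, and the moving-bottleneck logic encoded by $\Gamma_{i,j}^{k+t}$ and the selector $i_j^*$: one must verify that when CAV $i$ holds $\bm{\mu}_i$ at a previously feasible value while other agents' sequences change, the resulting predicted trajectory $\bm{\xi}_i^k$ still satisfies the state bounds $\bm{\rho}_i^{k+t}\in\bm{\mathcal{D}}$, the terminal constraint $\bm{\rho}_i^{k+N}\in\bm{\mathcal{D}}_{i,T}$, and the contraction constraint~(\ref{control-g}). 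The cleanest way to handle this is to invoke the standard recursive-feasibility construction (shift-and-append with the terminal controller, as referenced for $\bm{\mathcal{D}}_{i,T}$ and the Lyapunov equation~(\ref{Lyapunov equation})) to exhibit an explicit feasible fallback sequence for agent $i$ at every sequential step, rather than relying on the naive ``unchanged'' candidate, and then note that the optimizer does at least as well as this fallback.

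A secondary subtlety is the truncated-rollout layer: because only the first $M_i^J$ steps are re-optimized while $\bm{\mu}_{i,r}^{k*,p-1}$ is frozen, the relevant ``do nothing'' candidate is ``keep all $N$ steps unchanged,'' which is always in the truncated feasible set, so the monotonicity argument carries over verbatim; I would state this as a short remark so the proposition applies to the full method and not merely to its untruncated special case. Finally I would remark that this establishes a Pareto-type improvement but \emph{not} convergence to the global optimum, which is consistent with the paper's stated scope, and that the monotone-and-bounded-below sequence of objective values guarantees convergence of $J_I(\bm{\xi}_I^{k*,p},\bm{\mu}_I^{k*,p})$, justifying the termination test $\bigl|J_I^{p}-J_I^{p-1}\bigr|\le\epsilon$.
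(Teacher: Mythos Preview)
Your proposal follows essentially the same route as the paper: both arguments rest on the coordinate-descent observation that the incumbent (or parallel) control sequence is a feasible candidate in each sequential subproblem, so the minimizer can only decrease the shared cost $J$, and both chain this across agents within one sweep to conclude that the sequential joint cost is no larger than the parallel one. The paper's proof writes this out as an explicit inequality chain that interleaves the parallel and sequential joint costs, whereas you first establish within-sweep monotonicity of the sequential scheme and then invoke the ``parallel solution is feasible for the sequential subproblem'' step; the paper does \emph{not} address the feasibility concerns you raise about $\bm{\mathcal{D}}_{i,T}$, the contraction constraint~(\ref{control-g}), or the moving-bottleneck selector, nor does it treat the truncated variant or the termination criterion---those are reasonable refinements on your part but lie outside the paper's own argument.
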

\begin{proof}
Assume that at time step $k$, the decisions made by all CAVs in the $p$-th iteration under the parallel decision-making method are $\bm{\mu}_0^{'*,p}(k),\bm{\mu}_1^{'*,p}(k),\ldots, \bm{\mu}_I^{'*,p}(k)$, while those made by the agent-by-agent sequential decision-making method are $\bm{\mu}_0^{*,p}(k),\bm{\mu}_1^{*,p}(k),\ldots, \bm{\mu}_I^{*,p}(k)$.  All actions in the control sequences $\bm{\mu}_i^{'*,p}(k)$ and $\bm{\mu}_i^{*,p}(k)$ of any CAV $i \in \mathcal{I}(k)$ are derived from the same control action set $U$. The corresponding system objective values are $J\left(\bm{\xi}^{'*,p}(k),\bm{\mu}_0^{'*,p}(k),\bm{\mu}_1^{'*,p}(k),\ldots, \bm{\mu}_I^{'*,p}(k)\right)$ and $J\left(\bm{\xi}^{*,p}(k), \bm{\mu}_0^{*,p}(k),\bm{\mu}_1^{*,p}(k),\ldots, \bm{\mu}_I^{*,p}(k)\right)$. 
Furthermore, we assume that both methods are based on the optimal solution $\bm{\mu}_0^{*,p-1}(k),\bm{\mu}_1^{*,p-1}(k)
$, $\ldots, \bm{\mu}_I^{*,p-1}(k)$ found in the $(p-1)$-th iteration.

According to the optimization principles of both methods, the cost improvement for the first CAV relies on the results of the previous iteration,
\begin{align*}
    \bm{\mu}_0^{*,p}(k) = \bm{\mu}_0^{'*,p}(k)= \arg \min_{\bm{\mu}_0(k)} J\left(\bm{\xi}_0(k), \bm{\mu}_0(k), \bm{\mu}_1^{*,p-1}(k), \dots, \bm{\mu}_I^{*,p-1} (k)\right).
\end{align*}
Thus, the system objective values satisfy,
\begin{align*}
    J\left(\bm{\xi}_0(k), \bm{\mu}_0^{'*,p}(k),\bm{\mu}_1^{*,p-1}(k),\ldots, \bm{\mu}_I^{*,p-1}(k)\right) = J\left(\bm{\xi}_0(k), \bm{\mu}_0^{*,p}(k),\bm{\mu}_1^{*,p-1}(k),\ldots, \bm{\mu}_I^{*,p-1}(k)\right).
\end{align*}

For the optimization of strategies for subsequent CAVs, during parallel decision-making, they still execute the control sequence obtained from the $(p-1)$-th iteration. In contrast, within the sequential decision-making approach, the latest policies of the preceding vehicles in the current iteration is used to replace the results from the previous iteration.
This means that during agent-by-agent decision-making, CAVs not only consider the environmental state but also coordinate in real time with the actions of preceding CAVs, ensuring a more optimal overall system objective. 
As an example, for the second CAV, the sequential method optimizes by considering the latest strategy $\bm{\mu}_0^{*,p}(k)$ of the first CAV, whereas the parallel method only uses the $\bm{\mu}_0^{*,p-1}(k)$,
\begin{align*}
  &  \min_{\bm{\mu}_1(k)}J\left(\bm{\xi}_1(k),\bm{\mu}_0^{*,p-1}(k),\bm{\mu}_1(k),\bm{\mu}_2^{*,p-1}(k), \ldots,\bm{\mu}_I^{*,p-1}(k) \right) \\
    &= J\left(\bm{\xi}_1(k),\bm{\mu}_0^{*,p-1}(k),\bm{\mu}_1^{'*,p}(k),\bm{\mu}_2^{*,p-1}(k), \ldots,\bm{\mu}_I^{*,p-1}(k) \right) \\
    & \geq J\left(\bm{\xi}_1(k),\bm{\mu}_0^{'*,p}(k),\bm{\mu}_1^{'*,p}(k),\bm{\mu}_2^{*,p-1}(k), \ldots,\bm{\mu}_I^{*,p-1}(k)\right)\\
    & \geq \min_{\bm{\mu}_1(k)}J\left(\bm{\xi}_1(k),\bm{\mu}_0^{*,p}(k),\bm{\mu}_1(k),\bm{\mu}_2^{*,p-1}(k), \ldots,\bm{\mu}_I^{*,p-1}(k) \right) \\
    & = J\left(\bm{\xi}_1(k),\bm{\mu}_0^{*,p}(k),\bm{\mu}_1^{*,p}(k), \bm{\mu}_2^{*,p-1}(k),\ldots,\bm{\mu}_I^{*,p-1}(k) \right).
\end{align*}

Therefore, extending this reasoning to the entire system, we can obtain,
\begin{align*}
   & \min_{\bm{\mu}_I(k)}J\left(\bm{\xi}_I(k),\bm{\mu}_0^{*,p-1}(k),\bm{\mu}_1^{*,p-1}(k), \ldots,\bm{\mu}_{I-1}^{*,p-1}(k),\bm{\mu}_I(k) \right) \\
    &= J\left(\bm{\xi}_I(k),\bm{\mu}_0^{*,p-1}(k),\bm{\mu}_1^{*,p-1}(k), \ldots,\bm{\mu}_{I-1}^{*,p-1}(k),\bm{\mu}_I^{'*,p}(k) \right) \\
    & \geq J\left(\bm{\xi}_I^{'*,p}(k),\bm{\mu}_0^{'*,p}(k),\bm{\mu}_1^{'*,p}(k), \ldots,\bm{\mu}_{I-1}^{'*,p}(k),\bm{\mu}_I^{'*,p}(k) \right)\\
    & \geq \min_{\bm{\mu}_I(k)}J\left(\bm{\xi}_I(k),\bm{\mu}_0^{*,p}(k),\bm{\mu}_1^{*,p}(k), \ldots,\bm{\mu}_{I-1}^{*,p}(k),\bm{\mu}_I(k) \right) \\
    & = J\left(\bm{\xi}_I^{*,p}(k),\bm{\mu}_0^{*,p}(k),\bm{\mu}_1^{*,p}(k),\ldots, \bm{\mu}_{I-1}^{*,p}(k),\bm{\mu}_I^{*,p}(k) \right),
\end{align*}
which shows that the system performance resulting from sequential decision-making will not be worse than that from parallel decision-making.
\end{proof}

\begin{prp}
\label{truncated-rollout}
The truncated rollout optimization method can significantly enhance the asymptotic performance of the control strategy and ensure a strict monotonic decrease in the system objective function.
\end{prp}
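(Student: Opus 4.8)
The plan is to prove the two assertions in turn: a per-iteration strict decrease of the system objective $J$, and convergence of the resulting sequence to a limit that quantifies the asymptotic performance. Everything rests on one structural fact — that the truncated subproblem solved at iteration $p$ always admits the previous iterate as a feasible point. Indeed, for CAV $i$ the tail $\bm{\mu}_{i,r}^{k*,p-1}$ is held fixed while the head ranges over all input sequences of length $M_i^J$; choosing the head to be exactly $\bm{\mu}_{i,f}^{k*,p-1}$ reconstitutes $\bm{\mu}_i^{k*,p-1}$, which satisfies the prediction dynamics (\ref{control-b})–(\ref{control-d}), the state/input/terminal constraints (\ref{control-e}), and the contraction constraint (\ref{control-g}). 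Hence the truncated optimum obeys $J_i(\bm{\xi}_i^{k*,p},\bm{\mu}_i^{k*,p}) \leq J_i(\bm{\xi}_i^{k*,p-1},\bm{\mu}_i^{k*,p-1})$, and the analogous bound holds for the stability objective $V_i$ with horizon $M_i^V$.

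Next I would lift this agent-wise inequality to the system level by reusing the telescoping argument of Proposition~\ref{agent-by-agent}: within iteration $p$, when CAV $i$ re-optimizes, CAVs $0,\ldots,i-1$ are already at their iteration-$p$ policies while CAVs $i+1,\ldots,I$ remain at iteration-$(p-1)$ policies, so each local improvement is a valid move in the joint optimization and is not undone by a later agent. Evaluating the cost with the prediction of the last CAV $I$ — which by construction incorporates the freshest policies of all agents — gives $J(\bm{\xi}^{k*,p},\bm{\mu}_0^{k*,p},\ldots,\bm{\mu}_I^{k*,p}) \leq J(\bm{\xi}^{k*,p-1},\bm{\mu}_0^{k*,p-1},\ldots,\bm{\mu}_I^{k*,p-1})$. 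Strictness then follows by contradiction: if a full round produced no decrease, then every CAV's head re-optimization must have been a no-op, i.e.\ the joint policy is simultaneously optimal for all truncated subproblems — a fixed point of the iteration — at which point the stopping test $|J_I(\bm{\xi}_I^{k*,p},\bm{\mu}_I^{k*,p}) - J_I(\bm{\xi}_I^{k*,p-1},\bm{\mu}_I^{k*,p-1})| \leq \epsilon$ has already fired; hence before convergence the decrease is strict.

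For the asymptotic claim I would use that each density component satisfies $\rho_{i,j}^{k+t}\in[0,\rho_{\max}]$, so $J_i \geq 0$; a monotone decreasing sequence bounded below converges, so $\{J(\cdot^{k*,p})\}_{p\geq 0}$ has a limit whose attaining policy is a person-by-person fixed point of the truncated rollout. Invoking the rollout cost-improvement principle \cite{Bertsekas2019MultiagentRA,Bertsekas2020ConstrainedMR} together with the terminal-cost design (\ref{Lyapunov equation}) and the contraction constraint, this limiting policy is no worse than — and, through the strictly decreasing iterates, strictly better than — the initial $N$-step feasible policy $\bm{\mu}_i^{k*,0}$, which is the claimed enhancement of asymptotic performance. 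I would also remark that the data-dependent horizons $M_i^J$ in (\ref{M_value_J}) and $M_i^V$ in (\ref{M_value_V}) do not break this chain: the feasibility and optimality steps hold verbatim for any truncation length in $[M_{\min},N]$, so adapting the horizon only trades search breadth against computation without affecting monotonicity.

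The hard part will be the two places where the argument is not mere bookkeeping. First, I must secure recursive feasibility of the warm start across iterations: between the last time CAV $i$ optimized and iteration $p$ its neighbors' policies — hence the coupling term $\mathbf{e}_i^k$, the terminal set $\bm{\mathcal{D}}_{i,T}$, and the contraction bound $\eta_i(\bm{\xi}_i^k,\lambda_i)$ — have shifted, so $\bm{\mu}_i^{k*,p-1}$ need not remain feasible; I expect this to require either an explicit recursive-feasibility assumption in the spirit of standard DMPC, or a short lemma showing that the design of $\eta_i$ with $\lambda_i\geq 0$ absorbs the perturbation. Second, the strictness argument needs the fixed-point characterization to be genuine, i.e.\ that ``no agent can improve its truncated subproblem'' really coincides with the termination condition used by the algorithm; tying the $\epsilon$-threshold on $J_I$ to the all-agents no-improvement statement via the agent-$I$ prediction (which alone sees every agent) is the delicate link. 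Everything else — monotonicity, boundedness, and convergence of the scalar sequence — is routine once those two points are settled.
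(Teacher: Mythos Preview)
Your first paragraph \emph{is} the paper's proof: the paper simply observes that, since the tail $\bm{\mu}_{i,r}^{k*,p-1}$ is frozen and the previous head $\bm{\mu}_{i,f}^{k*,p-1}$ is feasible for the truncated subproblem, minimizing over the head yields $J_i(\bm{\xi}_i^{k*,p-1},\bm{\mu}_i^{k*,p-1}) \geq J_i(\bm{\xi}_i^{k*,p},\bm{\mu}_i^{k*,p})$, then chains these inequalities over $p$ and cites the sequential-improvement property of rollout. That is all the paper does.

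Everything beyond that in your proposal --- the system-level telescoping via Proposition~\ref{agent-by-agent}, the strictness-by-contradiction tied to the $\epsilon$-stopping rule, the monotone-convergence argument using $\rho_{i,j}^{k+t}\in[0,\rho_{\max}]$, and the observation that the adaptive horizons $M_i^J,M_i^V$ are immaterial --- goes further than the paper. In particular, the paper's displayed chain uses only $\geq$, so the word ``strict'' in the statement is never actually justified there; your contradiction argument is an honest attempt to recover it. Likewise, the two ``hard parts'' you flag --- recursive feasibility of $\bm{\mu}_i^{k*,p-1}$ after neighbors have updated (so the contraction bound $\eta_i$ and the terminal constraint may have shifted), and whether the $J_I$-based termination truly coincides with the all-agents fixed-point condition --- are genuine issues that the paper's proof does not confront; it stays at the single-agent level and implicitly treats the neighbors' inputs as exogenous and fixed within each minimization. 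So your approach is the same in spirit but strictly more careful, and the difficulties you anticipate are real gaps in the paper's argument rather than in yours.
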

\begin{proof}
Assume that at any time step $k$, the initial objective function of CAV $i$ is given by $J_i(\bm{\xi}_i^{*,0}(k),\bm{\mu}_{i}^{*,0}(k))$. To minimize $J_i(\bm{\xi}_i(k),\bm{\mu}_{i}(k))$, the first $M_i^J$ steps of the control sequence are optimized in the action set $U$ through truncated rollout approach,
\begin{align*}
    \bm{\mu}_{i,f}^{*,1}(k) = \arg\min_{\bm{\mu}_{i,f}(k)} J_i\left(\bm{\xi}_i(k),\bm{\mu}_{i,f}(k), \bm{\mu}_{i,r}^{*,0}(k)\right).
\end{align*}
The later part of the control sequence $\bm{\mu}_{i,r}^{*,0}(k)$ remains unchanged, and the new optimized control sequence is formed,
\begin{align*}
    \bm{\mu}_{i}^{*,1}(k) = \left[\bm{\mu}_{i,f}^{*,1}(k), \bm{\mu}_{i,r}^{*,0}(k)\right].
\end{align*}
This process ensures that, at any prediction step $k+t$, the density
$\bm{\rho}_{i}^{1}(k+t) \in \bm{\mathcal{D}}$, where $t \in \{0,1,\ldots,N\}$. The $\max\left(\mathbf{u}_i^{1}(k+t)\right) \in U$ satisfies the sequential improvement assumption of the rollout method \citep{Bertsekas1997RolloutAF}, guaranteeing monotonic improvement of the system objective,
\begin{align*}
    J_i\left(\bm{\xi}_i^{*,0}(k),\bm{\mu}_{i}^{*,0}(k)\right) \geq J_i\left(\bm{\xi}_i^{*,1}(k),\bm{\mu}_{i}^{*,1}(k)\right).
\end{align*}

As the iterations progress, the control sequence of CAV $i$ gradually approaches an approximate optimal solution,
\begin{align*}
    J_i\left(\bm{\xi}_i^{*,0}(k),\bm{\mu}_{i}^{*,0}(k)\right) \geq J_i\left(\bm{\xi}_i^{*,1}(k),\bm{\mu}_{i}^{*,1}(k)\right) \geq \ldots \geq J_i\left(\bm{\xi}_i^{*,p}(k),\bm{\mu}_{i}^{*,p}(k)\right),
\end{align*}
i.e., the truncated rollout strategy can effectively improve the control sequence through iterations, achieving superior control performance.
\end{proof}

\subsection{Stability Analysis}
\label{Stability analysis}
In this section, we will prove the stability of the established nonlinear system in Eq.~(\ref{closed-loop}). It should be noted that, although the truncated rollout optimizes only the first $M$ steps of the iteration, the constraints of the optimization problem ensure that the complete control sequence, consisting of the optimized and fixed parts, does not affect the stability of the system. We provide several fundamental definitions and assumptions \citep{Luo2022Multiobjective},

\begin{dfn}
\label{definition 1}
Given a discrete-time system $\bm{\rho}_i(k+1)=\mathbf{g}_i\left(\bm{\rho}_i(k), \mathbf{u}_i(k), \mathbf{e}_i(k)\right)$, a set $\bm{\mathcal{D}} \subseteq \mathbb{R}^{J+1}$ is called an invariant set if for any state $\bm{\rho}_i(k) \in \bm{\mathcal{D}}$ and all admissible control inputs $\max\left(\mathbf{u}_i(k)\right) \in U$, the subsequent state satisfies $\bm{\rho}_i(k+1) \in \bm{\mathcal{D}}$.
\end{dfn}

\begin{dfn}
\label{definition 2}
For the system in Definition~\ref{definition 1} with invariant set $\bm{\mathcal{D}}$, the system is input-to-state stability (ISS) within $\bm{\mathcal{D}}$ if for any initial condition $\bm{\rho}_i(k) \in \bm{\mathcal{D}}$, there exists a KL-class function $\bm{\beta}(\cdot):\mathbb{R}_{\geq 0} \times \mathbb{N} \rightarrow \mathbb{R}_{\geq 0}$ such that the state trajectory satisfies \citep{Magni2006Regional},
\begin{align*}
   \left \|\bm{\rho}_i(k+t)\right\|_{\bm{\mathcal{D}}} \leq \bm{\beta}\left(\left\|\bm{\rho}_i(k)\right\|, t\right),  \forall t \in \{0,1,2,\ldots, N-1\},
\end{align*}
where $\bm{\rho}_i(k+t)$ denotes the state response at time $k+t$ with initial condition $\bm{\rho}_i(k)$, $\left \|\cdot\right\|$ is the induced norm of the set $\bm{\mathcal{D}}$.
\end{dfn}

\begin{lmm}
\label{Lemma 1}
For the system $\bm{\rho}_i(k+1)=\mathbf{g}_i\left(\bm{\rho}_i(k), \mathbf{u}_i(k), \mathbf{e}_i(k)\right)$ and invariant set $\bm{\mathcal{D}} \subseteq \mathbb{R}^{J+1}$, if there exists a function $ V_i\left(\bm{\xi}_i(k), \bm{\mu}_i(k)\right) : \mathbb{R}^{J+1} \times \mathbb{R}^{J+1} \to \mathbb{R}_{\geq 0}$ such that for all $\bm{\rho}_i(k) \in \bm{\mathcal{D}}$ the following conditions hold,
\begin{align*}
    &\alpha_1\left(\left\| \bm{\rho}_i(k) \right\|\right) \leq  V_i\left(\bm{\xi}_i(k), \bm{\mu}_i(k)\right)
     \leq \alpha_2\left(\left\| \bm{\rho}_i(k) \right\|\right), \\
    & V_i\left(\bm{\xi}_i(k+1), \bm{\mu}_i(k+1)\right)- V_i\left(\bm{\xi}_i(k), \bm{\mu}_i(k)\right) \leq -\alpha_3\left(\left\| \bm{\rho}_i(k) \right\|\right),
\end{align*}
where $\alpha_1(\cdot)$, $\alpha_2(\cdot)$, and $\alpha_3(\cdot)$ are $K_\infty$ class functions, then $V_i\left(\bm{\xi}_i(k), \bm{\mu}_i(k)\right)$ is referred to as an ISS-Lyapunov function for the system, and the system is ISS within the invariant set $\bm{\mathcal{D}}$.
\end{lmm}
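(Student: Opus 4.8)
The plan is to run the classical argument showing that a Lyapunov function which obeys a sandwich bound together with a strict one-step decrease produces a $KL$ estimate on the state trajectory, which is exactly the inequality demanded by Definition~\ref{definition 2}. Write $V_k := V_i\!\left(\bm{\xi}_i^k,\bm{\mu}_i^k\right) \ge 0$ for brevity. The first bookkeeping point is that, because $\bm{\mathcal{D}}$ is an invariant set (Definition~\ref{definition 1}) and $\max\!\left(\mathbf{u}_i^{k+t}\right)\in U$ along the closed loop, every state $\bm{\rho}_i^{k+t}$ generated from $\bm{\rho}_i^k\in\bm{\mathcal{D}}$ stays in $\bm{\mathcal{D}}$; hence both hypotheses of the lemma may legitimately be applied at every step $k+t$ of the trajectory, and no separate input-gain term involving $\mathbf{e}_i^k$ is required, since the decrease condition is postulated along the actual closed-loop evolution.

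First I would convert the additive decrease into a scalar contraction on $V_k$. From the lower sandwich $V_k\le\alpha_2\!\left(\left\|\bm{\rho}_i^k\right\|\right)$ we get $\left\|\bm{\rho}_i^k\right\|\ge\alpha_2^{-1}(V_k)$, and since $\alpha_3$ is $K_\infty$ (hence increasing),
\begin{align*}
V_{k+1} \;\le\; V_k - \alpha_3\!\left(\alpha_2^{-1}(V_k)\right) \;=:\; \sigma(V_k),
\end{align*}
where $\sigma(s)<s$ for all $s>0$ and $\sigma(0)=0$. After, if necessary, replacing $\alpha_3$ with a suitable $K_\infty$ minorant so that $s\mapsto\sigma(s)$ is continuous, nondecreasing and maps $\mathbb{R}_{\geq 0}$ into $\mathbb{R}_{\geq 0}$, iterating gives $V_{k+t}\le\sigma^{(t)}(V_k)$, and the classical comparison lemma for monotone scalar recursions with $\sigma(s)<s$ furnishes a $KL$-class function $\beta_V$ with $V_{k+t}\le\beta_V(V_k,t)$ for every $t\ge 0$.

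Next I would propagate the bound back to the state. Combining the upper sandwich at time $k$ with the lower sandwich at time $k+t$,
\begin{align*}
\alpha_1\!\left(\left\|\bm{\rho}_i^{k+t}\right\|\right) \;\le\; V_{k+t} \;\le\; \beta_V(V_k,t) \;\le\; \beta_V\!\left(\alpha_2\!\left(\left\|\bm{\rho}_i^k\right\|\right),\,t\right),
\end{align*}
and therefore, using $\left\|\bm{\rho}_i^{k+t}\right\|_{\bm{\mathcal{D}}}\le\left\|\bm{\rho}_i^{k+t}\right\|$,
\begin{align*}
\left\|\bm{\rho}_i^{k+t}\right\|_{\bm{\mathcal{D}}} \;\le\; \alpha_1^{-1}\!\left(\beta_V\!\left(\alpha_2\!\left(\left\|\bm{\rho}_i^k\right\|\right),\,t\right)\right) \;=:\; \bm{\beta}\!\left(\left\|\bm{\rho}_i^k\right\|,\,t\right).
\end{align*}
Since $\bm{\beta}$ is the composition of the $K_\infty$ maps $\alpha_1^{-1}$ and $\alpha_2$ with the $KL$ function $\beta_V$, it is itself of class $KL$, which is precisely the estimate in Definition~\ref{definition 2}; hence $V_i$ is an ISS-Lyapunov function and the system is ISS within $\bm{\mathcal{D}}$.

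The step I expect to be the main obstacle is the middle one: for arbitrary $K_\infty$ data the one-step map $\sigma(s)=s-\alpha_3\!\left(\alpha_2^{-1}(s)\right)$ need not be monotone nor bounded above by the identity, so extracting a genuine $KL$ bound from $V_{k+1}\le\sigma(V_k)$ first requires regularizing $\sigma$ — passing to an appropriate class-$\mathcal{K}$ minorant of $\alpha_3$ and, if needed, truncating so that $0\le\sigma\le\mathrm{id}$ — and then invoking the scalar comparison lemma. Everything else (the two sandwich inequalities, the invariance/feasibility bookkeeping, and the closure of $KL$ under composition with $K_\infty$ maps) is routine. I also note a cheaper shortcut, available because Definition~\ref{definition 2} only asks for the bound on the finite index set $t\in\{0,1,\ldots,N-1\}$: monotonicity of $V_k$ alone already yields a $KL$ bound over any finite horizon without the regularization, but I would still present the asymptotic version for a cleaner statement.
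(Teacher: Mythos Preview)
The paper does not supply its own proof of Lemma~\ref{Lemma 1}; the statement is recorded as a standard ISS--Lyapunov characterization (in the spirit of the cited reference on regional ISS for nonlinear MPC) and is used as a black box in the proof of Theorem~\ref{theorem 2}. So there is nothing to compare your argument against directly.

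That said, your proposal is the correct classical argument: combine the upper sandwich $V_k\le\alpha_2(\|\bm{\rho}_i^k\|)$ with the decrease condition to obtain the scalar recursion $V_{k+1}\le V_k-\alpha_3\circ\alpha_2^{-1}(V_k)$, extract a $KL$ bound on $V_k$ via the comparison lemma (after the usual regularization so that the one-step map is monotone and dominated by the identity), and then translate back to the state via $\alpha_1^{-1}$ and $\alpha_2$. Your identification of the regularization of $\sigma$ as the only nontrivial step is accurate, and your observation that Definition~\ref{definition 2} only demands the bound on the finite index set $\{0,\dots,N-1\}$ is a legitimate shortcut in this setting. One cosmetic slip: you call $V_k\le\alpha_2(\|\bm{\rho}_i^k\|)$ the ``lower sandwich'' when it is the upper one, but the ensuing inequality $\|\bm{\rho}_i^k\|\ge\alpha_2^{-1}(V_k)$ and the rest of the chain are correct.
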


\begin{dfn}
\label{definition 3}
Consider the closed-loop system  $\bm{\rho}_i(k+1)=\mathbf{g}_i\left(\bm{\rho}_i(k), \mathbf{u}_i(k), \mathbf{e}_i(k)\right)$, and let $\bm{\mathcal{D}} \subseteq \mathbb{R}^{J+1}$ be an invariant set. A state $\bm{\rho}_i(k) \in \bm{\mathcal{D}}$ is called a feasible initial state if there exists a feasible predictive control sequence $\bm{\mu}_i(k)=\left[\mathbf{u}_i(k),\mathbf{u}_i(k+1),\ldots,\mathbf{u}_i(k+N-1)\right]$ such that the closed-loop trajectory satisfies $\bm{\rho}_i(k+t) \in \bm{\mathcal{D}}$ for all $t \in \{0,1,\ldots,N-1\}$. The set of all such feasible initial states is denoted by $\bm{\mathcal{D}}_N$, which satisfies $\bm{\mathcal{D}}_{i,T}\subseteq \bm{\mathcal{D}}_N \subseteq \bm{\mathcal{D}}$, where $\bm{\mathcal{D}}_{i,T}$ denotes the terminal constraint set.
\end{dfn}

\begin{asm}
\label{Assumption 1}
The environmental state $\bm{\rho}(k)$ is globally observable, and all subsystems share identical environmental states,
\begin{align*}
\bm{\rho}_0(k)=\bm{\rho}_1(k)=\ldots=\bm{\rho}_I(k)=\bm{\rho}(k).
\end{align*}
\end{asm}

\begin{asm}
\label{Assumption 3}
During the prediction horizon $[k,k+N]$, the number of CAVs in the system remains constant, that is, no new CAVs are allowed to enter. For any CAV that exits the system at time step $k+t$ within the prediction horizon, its control inputs are assumed to be zero for the remaining time steps $[k+t+1,k+N]$.
\end{asm}

\begin{asm}
\label{Assumption 2}
There exists a local linear feedback control law $\mathbf{u}_i(k)=\mathbf{K}_i \bm{\rho}_i(k)$, where $\mathbf{K}_i \in \mathbb{R}^{(J+1)\times (J+1)}$, such that $\max\left(\mathbf{u}_i(k)\right) \in U$ holds for all $\bm{\rho}_i(k) \in \bm{\mathcal{D}_{i,T}}$. The design of this control law is based on the linearization of the system in Eq.~(\ref{closed-loop}) around the origin (an asymptotically stable equilibrium point), given by,
\begin{align}
\label{linearization}
    \bm{\rho}_i(k+1) = \bm{\psi}_i \bm{\rho}_i(k) + \bm{\delta}_i \mathbf{u}_i(k),
\end{align}
where $\bm{\psi}_i=\frac{\partial \mathbf{g}_i}{\partial \bm{\rho}_i}|_{\bm{\rho}_i = 0}$ and $\bm{\delta}_i=\frac{\partial \mathbf{g}_i}{\partial \mathbf{u}_i}|_{\mathbf{u}_i = 0}$. The resulting closed-loop matrix $\mathbf{Z}_i = \bm{\psi}_i + \bm{\delta}_i \mathbf{K}_i$ is Schur stable,  $\mathbf{Z}_i \in \mathbb{R}^{(J+1)\times (J+1)}$  \citep{Magni2006Regional}.
\end{asm}

\begin{lmm}
\label{Lemma 2}
If Assumptions \ref{Assumption 1} and \ref{Assumption 2} hold, then the terminal constraint set  $\bm{\mathcal{D}}_{i,T}$ is an invariant set for the closed-loop system in Eq.~(\ref{closed-loop}), and for any $\bm{\rho}_i(k) \in \bm{\mathcal{D}_{i,T}}$, the following inequality always holds,
\begin{align*}
E\left( \bm{\psi}_i \bm{\rho}_i(k) + \bm{\delta}_i \mathbf{u}_i(k)\right)-E\left(\bm{\rho}_i(k)\right) \leq -L_s\left(\bm{\rho}_i(k),\mathbf{K}_i\bm{\rho}_i(k)\right).
\end{align*}
\end{lmm}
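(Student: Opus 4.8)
The plan is to handle the two claims in the natural order: first the descent inequality, which is a short algebraic computation resting on the Lyapunov equation~(\ref{Lyapunov equation}), and then the invariance of $\bm{\mathcal{D}}_{i,T}$, which I would bootstrap from that inequality. I will use throughout that $\bm{\mathcal{D}}_{i,T}$ is constructed as a sublevel set of the terminal cost, $\bm{\mathcal{D}}_{i,T}=\{\bm{\rho}:E(\bm{\rho})\le\alpha_i\}$ for a suitably small $\alpha_i>0$ --- the standard choice whose existence is guaranteed by~\cite{Ferramosca2009nonlinearMPC, Magni2006Regional} --- and that Assumption~\ref{Assumption 1} makes the coupling data $\mathbf{e}_i^k$ common to all subsystems, so the terminal analysis can be carried out on the local model around the origin.

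For the inequality, I would substitute the local feedback $\mathbf{u}_i^k=\mathbf{K}_i\bm{\rho}_i^k$ into the linearization~(\ref{linearization}); since $\mathbf{Z}_i=\bm{\psi}_i+\bm{\delta}_i\mathbf{K}_i$ (Assumption~\ref{Assumption 2}), the one-step successor equals $\bm{\psi}_i\bm{\rho}_i^k+\bm{\delta}_i\mathbf{u}_i^k=\mathbf{Z}_i\bm{\rho}_i^k$. Expanding the quadratic terminal cost $E(\bm{\rho})={\bm{\rho}}^{T}\mathbf{H}_i\bm{\rho}$ gives
\[
E\bigl(\mathbf{Z}_i\bm{\rho}_i^k\bigr)-E\bigl(\bm{\rho}_i^k\bigr)={\bm{\rho}_i^k}^{T}\bigl(\mathbf{Z}_i^{T}\mathbf{H}_i\mathbf{Z}_i-\mathbf{H}_i\bigr)\bm{\rho}_i^k=-{\bm{\rho}_i^k}^{T}\bigl(\mathbf{Q}_i^{*}+\Delta\mathbf{Q}_i\bigr)\bm{\rho}_i^k,
\]
where the last step is exactly~(\ref{Lyapunov equation}). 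Since $L_s(\bm{\rho}_i^k,\mathbf{K}_i\bm{\rho}_i^k)={\bm{\rho}_i^k}^{T}(\mathbf{Q}_i+\mathbf{K}_i^{T}\mathbf{R}_i\mathbf{K}_i)\bm{\rho}_i^k={\bm{\rho}_i^k}^{T}\mathbf{Q}_i^{*}\bm{\rho}_i^k$ and $\Delta\mathbf{Q}_i$ is positive definite, dropping the nonnegative term ${\bm{\rho}_i^k}^{T}\Delta\mathbf{Q}_i\bm{\rho}_i^k$ yields $E(\bm{\psi}_i\bm{\rho}_i^k+\bm{\delta}_i\mathbf{u}_i^k)-E(\bm{\rho}_i^k)\le-L_s(\bm{\rho}_i^k,\mathbf{K}_i\bm{\rho}_i^k)$, which is the claimed bound.

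For invariance, I would take any $\bm{\rho}_i^k\in\bm{\mathcal{D}}_{i,T}$ and apply $\mathbf{u}_i^k=\mathbf{K}_i\bm{\rho}_i^k$, which by Assumption~\ref{Assumption 2} is admissible ($\max(\mathbf{u}_i^k)\in U$) on $\bm{\mathcal{D}}_{i,T}$. Writing the true closed-loop map as $\mathbf{g}_i(\bm{\rho}_i^k,\mathbf{K}_i\bm{\rho}_i^k,\mathbf{e}_i^k)=\mathbf{Z}_i\bm{\rho}_i^k+r_i(\bm{\rho}_i^k)$ with $r_i$ the higher-order remainder of the linearization, the strict quadratic margin ${\bm{\rho}_i^k}^{T}\Delta\mathbf{Q}_i\bm{\rho}_i^k$ dominates the contribution of $r_i$ on a sufficiently small $\bm{\mathcal{D}}_{i,T}$, so $E(\bm{\rho}_i^{k+1})\le E(\bm{\rho}_i^k)\le\alpha_i$ and hence $\bm{\rho}_i^{k+1}\in\bm{\mathcal{D}}_{i,T}$; this is the invariance of the terminal set for~(\ref{closed-loop}) in the sense of Definition~\ref{definition 1}, understood relative to the local stabilizing law $\mathbf{K}_i$.

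I expect the only genuinely delicate point to be the invariance part: the manipulation of~(\ref{Lyapunov equation}) is routine, but making invariance rigorous hinges on the terminal-set construction --- picking $\alpha_i$ small enough that simultaneously $\mathbf{K}_i\bm{\rho}$ stays in $U$ and the nonlinear remainder $r_i$ is dominated by the quadratic decrease supplied by $\Delta\mathbf{Q}_i$ --- which I would invoke from~\cite{Ferramosca2009nonlinearMPC, Magni2006Regional} rather than rederive, together with a brief remark reconciling the ``all admissible inputs'' wording of Definition~\ref{definition 1} with the fact that here invariance is certified through the particular feedback $\mathbf{K}_i$.
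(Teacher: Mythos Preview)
Your proposal is correct and follows essentially the same route as the paper: substitute $\mathbf{u}_i^k=\mathbf{K}_i\bm{\rho}_i^k$, collapse $\bm{\psi}_i\bm{\rho}_i^k+\bm{\delta}_i\mathbf{u}_i^k$ to $\mathbf{Z}_i\bm{\rho}_i^k$, expand $E$ as a quadratic form, apply the Lyapunov identity~(\ref{Lyapunov equation}) to obtain $-{\bm{\rho}_i^k}^{T}(\mathbf{Q}_i^{*}+\Delta\mathbf{Q}_i)\bm{\rho}_i^k$, and drop the $\Delta\mathbf{Q}_i$ term; invariance then follows from the sublevel-set description of $\bm{\mathcal{D}}_{i,T}$ (the paper writes it as $E(\bm{\rho}_i^k)^{1/2}\le\varepsilon$). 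The one point where you go further than the paper is the invariance step: the paper simply identifies the successor with $\mathbf{Z}_i\bm{\rho}_i^k$ and concludes directly from the descent of $E$, whereas you explicitly introduce the nonlinear remainder $r_i$ and argue that the $\Delta\mathbf{Q}_i$ margin absorbs it on a sufficiently small terminal set --- this is the more rigorous version of the argument and is exactly what the cited references \cite{Magni2006Regional, Ferramosca2009nonlinearMPC} supply, so your instinct to flag it as the delicate part is well placed.
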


\begin{proof}
According to Assumption~\ref{Assumption 2}, for any $\bm{\rho}_i(k) \in \bm{\mathcal{D}_{i,T}}$, the local control law $\mathbf{u}_i(k)=\mathbf{K}_i \bm{\rho}_i(k)$ is feasible, and the state evolution follows $\bm{\rho}_i(k+1) = \mathbf{Z}_i \bm{\rho}_i(k)$, where $\mathbf{Z}_i = \bm{\psi}_i + \bm{\delta}_i \mathbf{K}_i$. Furthermore, ${E\left(\bm{\rho}_i(k)\right)}^{\frac{1}{2}} \leq \varepsilon$ within $\bm{\mathcal{D}_{i,T}}$. Taking the difference of the function $E\left(\bm{\rho}_i(k+1)\right)$, and using Eq.~(\ref{Lyapunov equation}), we have,
\begin{align*}
    &E\left(\bm{\rho}_i(k+1)\right)-E\left(\bm{\rho}_i(k)\right) \\
    &=  E\left(\bm{\psi}_i \bm{\rho}_i(k) + \bm{\delta}_i \mathbf{u}_i(k)\right)-E\left(\bm{\rho}_i(k)\right) \nonumber\\
    &= E\left(\bm{\psi}_i \bm{\rho}_i(k) + \bm{\delta}_i \mathbf{K}_i \bm{\rho}_i(k) \right)-E\left(\bm{\rho}_i(k)\right) \nonumber \\
    &= E\left(\mathbf{Z}_i \bm{\rho}_i(k)\right)-E\left(\bm{\rho}_i(k)\right)  \nonumber \\
    &= {\bm{\rho}_i(k)}^T {\mathbf{Z}_i}^T \mathbf{H}_i \mathbf{Z}_i \bm{\rho}_i(k)-{\bm{\rho}_i(k)}^T \mathbf{H}_i \bm{\rho}_i(k)  \nonumber \\
    &= {\bm{\rho}_i(k)}^T \left({\mathbf{Z}_i}^T \mathbf{H}_i \mathbf{Z}_i-\mathbf{H}_i\right) {\bm{\rho}_i(k)}^T  \nonumber \\
    &=  {\bm{\rho}_i(k)}^T \left[-\left(\mathbf{Q}_i^* + \Delta \mathbf{Q}_i\right)\right] \bm{\rho}_i(k)  \nonumber \\
    &= {\bm{\rho}_i(k)}^T \left[-\left(\mathbf{Q}_i + {\mathbf{K}_i}^T \mathbf{R}_i \mathbf{K}_i + \Delta \mathbf{Q}_i\right)\right] \bm{\rho}_i(k)  \nonumber \\
    &= -{\bm{\rho}_i(k)}^T\mathbf{Q}_i \bm{\rho}_i(k) - {\bm{\rho}_i(k)}^T{\mathbf{K}_i}^T \mathbf{R}_i \mathbf{K}_i \bm{\rho}_i(k)-{\bm{\rho}_i(k)}^T \Delta \mathbf{Q}_i \bm{\rho}_i(k)  \nonumber \\
    &= -L_s\left(\bm{\rho}_i(k), \mathbf{K}_i \bm{\rho}_i(k)\right)-{\bm{\rho}_i(k)}^T \Delta \mathbf{Q}_i \bm{\rho}_i(k)  \nonumber \\
    &\leq -L_s\left(\bm{\rho}_i(k),\mathbf{K}_i \bm{\rho}_i(k)\right).
\end{align*}
Given the positive definiteness of $L_s\left(\bm{\rho}_i(k),\mathbf{K}_i \bm{\rho}_i(k)\right)$, we have $E\left(\bm{\rho}_i(k+1)\right)-E\left(\bm{\rho}_i(k)\right) \leq 0$. Consequently, this implies the condition ${E\left(\bm{\rho}_i(k+1)\right)}^{\frac{1}{2}}\leq {E\left(\bm{\rho}_i(k)\right)}^{\frac{1}{2}} \leq \varepsilon$ holds. According to Definition~\ref{definition 1}, this indicates that the set $\bm{\mathcal{D}_{i,T}}$ is an invariant set for the closed-loop system in Eq.~(\ref{closed-loop}).  
\end{proof}

Next, we demonstrate the recursive feasibility of the optimal control problem in Eq.~(\ref{control}) and the stability of the closed-loop system in Eq.~(\ref{closed-loop}).

\begin{thm}
\label{theorem 1}
If Assumptions \ref{Assumption 1} and \ref{Assumption 2} hold, then for any given $\lambda_i \geq 0$, the optimal control problem in Eq.~(\ref{control}) is recursively feasible within the feasible initial state set $\bm{\mathcal{D}}_N$, and $\bm{\mathcal{D}}_N$ is an invariant set for the closed-loop system in Eq.~(\ref{closed-loop}).
\end{thm}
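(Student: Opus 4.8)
The plan is to run the classical shift-and-append recursive-feasibility argument for terminal-set MPC, with the extra work concentrated on the contraction constraint~(\ref{control-g}). Fix a step $k$ and suppose $\bm{\rho}_i^{k}\in\bm{\mathcal{D}}_N$, so that Eq.~(\ref{control}) admits an optimal solution $\bm{\mu}_i^{k*}=[\mathbf{u}_i^{k+0*},\ldots,\mathbf{u}_i^{k+N-1*}]$ with predicted trajectory $\bm{\rho}_i^{k+1},\ldots,\bm{\rho}_i^{k+N}$ obeying $\bm{\rho}_i^{k+t}\in\bm{\mathcal{D}}$ and $\bm{\rho}_i^{k+N}\in\bm{\mathcal{D}}_{i,T}$. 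Applying $\mathbf{u}_i^{k}=\mathbf{u}_i^{k+0*}$ gives $\bm{\rho}_i^{k+1}=\mathbf{g}_i(\bm{\rho}_i^{k},\mathbf{u}_i^{k+0*},\mathbf{e}_i^{k})\in\bm{\mathcal{D}}$. I would then take as the feasibility witness at $k+1$ the shifted sequence completed by the local feedback of Assumption~\ref{Assumption 2},
\[
\tilde{\bm{\mu}}_i^{k+1}=\left[\mathbf{u}_i^{k+1*},\ldots,\mathbf{u}_i^{k+N-1*},\ \mathbf{K}_i\bm{\rho}_i^{k+N}\right].
\]
Using Assumption~\ref{Assumption 1} (a common, globally observed state) together with Assumption~\ref{Assumption 3} (constant CAV count, zero input for CAVs that leave), the intermediate states generated by $\tilde{\bm{\mu}}_i^{k+1}$ reproduce $\bm{\rho}_i^{k+2},\ldots,\bm{\rho}_i^{k+N}$ and hence stay in $\bm{\mathcal{D}}$, while the new terminal state $\mathbf{Z}_i\bm{\rho}_i^{k+N}$ lies in $\bm{\mathcal{D}}_{i,T}$ by the invariance part of Lemma~\ref{Lemma 2}; the input bounds hold because $\max(\mathbf{u}_i^{k+t*})\in U$ for $t=1,\ldots,N-1$ and $\max(\mathbf{K}_i\bm{\rho}_i^{k+N})\in U$ on $\bm{\mathcal{D}}_{i,T}$, again by Assumption~\ref{Assumption 2}.

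The second step is to dispatch the contraction constraint, which is the only nonstandard ingredient. Since $\tilde{\bm{\mu}}_i^{k+1}$ satisfies every constraint of the auxiliary problem~(\ref{value function optimal}) at $k+1$, that problem is feasible and, by compactness of $\bm{\mathcal{D}}$, $\bm{\mathcal{D}}_{i,T}$, $U$ and continuity of the quadratic $V_i$, it has a minimizer $\bm{\mu}_i^{k+1,v}$ with $V_i(\bm{\xi}_i^{k+1,v},\bm{\mu}_i^{k+1,v})\leq V_i(\tilde{\bm{\xi}}_i^{k+1},\tilde{\bm{\mu}}_i^{k+1})$. Expanding $V_i(\tilde{\bm{\xi}}_i^{k+1},\tilde{\bm{\mu}}_i^{k+1})$ via~(\ref{valuefunction})--(\ref{E_rho_k}) and using the terminal-cost decrease $L_s(\bm{\rho}_i^{k+N},\mathbf{K}_i\bm{\rho}_i^{k+N})+E(\mathbf{Z}_i\bm{\rho}_i^{k+N})\leq E(\bm{\rho}_i^{k+N})$ from Lemma~\ref{Lemma 2}, the stage costs telescope to
\[
V_i(\tilde{\bm{\xi}}_i^{k+1},\tilde{\bm{\mu}}_i^{k+1})\leq V_i(\bm{\xi}_i^{k*},\bm{\mu}_i^{k*})-L_s(\bm{\rho}_i^{k},\mathbf{u}_i^{k+0*})\leq V_i(\bm{\xi}_i^{k*},\bm{\mu}_i^{k*}),
\]
so $V_i(\bm{\xi}_i^{k+1,v},\bm{\mu}_i^{k+1,v})\leq V_i(\bm{\xi}_i^{k*},\bm{\mu}_i^{k*})$. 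Because $\lambda_i\geq0$, it follows that
\[
\eta_i(\bm{\xi}_i^{k+1},\lambda_i)=V_i(\bm{\xi}_i^{k+1,v},\bm{\mu}_i^{k+1,v})+\lambda_i\left[V_i(\bm{\xi}_i^{k*},\bm{\mu}_i^{k*})-V_i(\bm{\xi}_i^{k+1,v},\bm{\mu}_i^{k+1,v})\right]\geq V_i(\bm{\xi}_i^{k+1,v},\bm{\mu}_i^{k+1,v}),
\]
i.e. $\bm{\mu}_i^{k+1,v}$ already satisfies~(\ref{control-g}), and it satisfies all remaining constraints of~(\ref{control}) because those coincide with the constraints of~(\ref{value function optimal}). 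Hence Eq.~(\ref{control}) is feasible at $k+1$, that is $\bm{\rho}_i^{k+1}\in\bm{\mathcal{D}}_N$; as $\bm{\rho}_i^{k}\in\bm{\mathcal{D}}_N$ was arbitrary, this simultaneously gives recursive feasibility on $\bm{\mathcal{D}}_N$ and, by Definition~\ref{definition 1}, invariance of $\bm{\mathcal{D}}_N$ for the closed loop~(\ref{closed-loop}). The truncated-rollout horizon shortening does not interfere: as remarked just before Definition~\ref{definition 1}, fixing the tail keeps the concatenated length-$N$ sequence admissible, so the above argument goes through unchanged.

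I expect the main obstacle to be the coupling term $\mathbf{e}_i^{k}$. Because $\mathbf{g}_i$ depends on the predicted positions and actions of the other CAVs through $\mathbf{O}_i^{k}$ and $\mathbf{C}_{\mathbf{u},i}^{k}$, the shifted candidate at $k+1$ regenerates the tail of the $k$-step trajectory only if the neighbors' information is propagated consistently. Making this rigorous requires carrying the sequential receding-horizon structure into the argument: CAVs in $\mathcal{M}_i^{k}$ retain their shifted optimal tails, CAVs outside $\mathcal{M}_i^{k}$ follow the same HDV car-following model at both steps, and CAVs that exit contribute zero input (Assumption~\ref{Assumption 3}), so that Assumption~\ref{Assumption 1} guarantees subsystem $i$ sees at $k+1$ exactly the environment its own plan predicted. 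A lesser point to handle carefully is the existence of $\bm{\mu}_i^{k+1,v}$ and the continuity arguments behind evaluating $V_i$ along the candidate trajectory and invoking the Lemma~\ref{Lemma 2} inequality on its terminal stretch; these all follow from the quadratic form of $V_i$, $L_s$, $E$ and the boundedness of the constraint sets.
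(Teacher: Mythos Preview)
Your proposal is correct and follows essentially the same route as the paper: construct the shifted-and-appended candidate using the local feedback $\mathbf{K}_i$ on the terminal state, invoke Lemma~\ref{Lemma 2} to get the terminal-cost decrease and hence the telescoped bound $V_i(\tilde{\bm{\xi}}_i^{k+1},\tilde{\bm{\mu}}_i^{k+1})\leq V_i(\bm{\xi}_i^{k*},\bm{\mu}_i^{k*})$, conclude that the auxiliary minimizer $\bm{\mu}_i^{k+1,v}$ satisfies the contraction constraint~(\ref{control-g}) for every $\lambda_i\geq0$, and use $\bm{\mu}_i^{k+1,v}$ itself as the feasible witness for Eq.~(\ref{control}). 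The paper's proof is indexed $k-1\to k$ rather than $k\to k+1$, and it is less explicit than you are about the coupling term $\mathbf{e}_i^k$ and the role of Assumption~\ref{Assumption 3}, but the argument is the same.
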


\begin{proof}
First, based on the state of subsystem $i$ at time step $k-1$, denoted as $\bm{\rho}_i(k-1) \in \bm{\mathcal{D}_{N}}$, and the corresponding optimal control sequence of the optimization problem in Eq.~(\ref{control})  $\bm{\mu}_i^{*}(k-1)=\left[\mathbf{u}_i^{*}(k-1+0),\mathbf{u}_i^{*}(k-1+1),\ldots,\mathbf{u}_i^{*}(k-1+N-1)\right]$, as well as the associated optimal predicted state sequence $\bm{\xi}_i^{*}(k-1)=\left[\bm{\rho}_i^{*}(k-1+1),\bm{\rho}_i^{*}(k-1+2),\ldots,\bm{\rho}_i^{*}(k-1+N)\right]$, we construct a control sequence at time $k$,
\begin{align*}
    \hat{\bm{\mu}}_i(k)=\left[\mathbf{u}_i^{*}(k-1+1),\mathbf{u}_i^{*}(k-1+2),\ldots,\mathbf{u}_i^{*}(k-1+N-1),  \mathbf{K}_i \bm{\rho}_i^{*}(k-1+N)\right].
\end{align*}

Since the terminal state $\bm{\rho}_i^{*}(k-1+N)$  belongs to the terminal set $\bm{\mathcal{D}_{i,T}}$, and the local control law $\mathbf{u}_i^{*}(k-1+N)=\mathbf{K}_i \bm{\rho}_i^{*}(k-1+N)$  ensures that $\max\left(\mathbf{u}_i^{*}(k-1+N)\right) \in U$, the constructed control sequence  $\hat{\bm{\mu}}_i(k)$ satisfies the control constraint (\ref{control-e}). The corresponding predicted state sequence is,
\begin{align*}
    \hat{\bm{\xi}}_i(k)=\left[\bm{\rho}_i^{*}(k-1+2),\bm{\rho}_i^{*}(k-1+3),\ldots,\bm{\rho}_i^{*}(k-1+N), \hat{\bm{\rho}}_i(k+N)\right],
\end{align*}
where $\hat{\bm{\rho}}_i(k+N)=\bm{\psi}_i \bm{\rho}_i^{*}(k-1+N) + \bm{\delta}_i \mathbf{K}_i \bm{\rho}_i^{*}(k-1+N)$.
According to Lemma~\ref{Lemma 2}, $\hat{\bm{\rho}}_i(k+N) \in \bm{\mathcal{D}}_{i,T}$, implying that $ \hat{\bm{\xi}}_i(k)$ satisfies the state constraints, and $E\left(\hat{\bm{\rho}}_i(k+N)\right) - E\left(\bm{\rho}_i^{*}(k-1+N)\right) \leq -L_s\left(\bm{\rho}_i^{*}(k-1+N), \mathbf{K}_i\bm{\rho}_i^{*}(k-1+N)\right)$.
Therefore, the control sequence $\hat{\bm{\mu}}_i(k)$ satisfies all the constraints of the stability optimization problem in Eq.~(\ref{value function optimal}), ensuring the feasibility of the problem at time $k$. 

Assume that the optimal solution of the problem in Eq.~(\ref{value function optimal}) at time $k$ is $\bm{\mu}_i^{v}(k)$, and $V_i\left(\bm{\xi}_i^{v}(k),\bm{\mu}_i^{v}(k)\right) \leq {V}_i\left(\hat{\bm{\xi}}_i(k),\hat{\bm{\mu}}_i(k)\right)$. By comparing the optimal cost values at consecutive time steps, we have,
\begin{align}
\label{contraction}
&  V_i\left(\bm{\xi}_i^{v}(k),\bm{\mu}_i^{v}(k)\right)- V_i\left(\bm{\xi}_i^{*}(k-1),\bm{\mu}_i^{*}(k-1)\right) \nonumber \\
    &\leq {V}_i \left(\hat{\bm{\xi}}_i(k), \hat{\bm{\mu}}_i(k)\right) -V_i\left(\bm{\xi}_i^{*}(k-1),\bm{\mu}_i^{*}(k-1)\right) \nonumber \\
    &=E\left(\hat{\bm{\rho}}_i(k+N)\right) + \sum_{t=1}^{N-1}L_s\left(\bm{\rho}_i^{*}(k-1+t),\mathbf{u}_i^{*}(k-1+t)\right) \nonumber \\ 
    &\phantom{=\;}+L_s\left(\bm{\rho}_i^{*}(k-1+N),\mathbf{K}_i\bm{\rho}_i^{*}(k-1+N)\right)-E\left(\bm{\rho}_i^{*}(k-1+N)\right)\nonumber\\
    &\phantom{=\;} -\sum_{t=0}^{N-1}L_s\left(\bm{\rho}_i^{*}(k-1+t),\mathbf{u}_i^{*}(k-1+t)\right)\nonumber\\
   &=E\left(\hat{\bm{\rho}}_i(k+N)\right)+L_s\left(\bm{\rho}_i^{*}(k-1+N),\mathbf{K}_i\bm{\rho}_i^{*}(k-1+N)\right)\nonumber\\
  &\phantom{=\;}-E\left(\bm{\rho}_i^{*}(k-1+N)\right)-L_s\left(\bm{\rho}_i^{*}(k-1),\mathbf{u}_i^{*}(k-1)\right)\nonumber\\
 &\leq L_s\left(\bm{\rho}_i^{*}(k-1+N),\mathbf{K}_i\bm{\rho}_i^{*}(k-1+N)\right) -L_s\left(\bm{\rho}_i^{*}(k-1),\mathbf{u}_i^{*}(k-1)\right) \nonumber\\
&\phantom{=\;}  -L_s\left(\bm{\rho}_i^{*}(k-1+N),\mathbf{K}_i\bm{\rho}_i^{*}(k-1+N)\right)\nonumber\\
    & =  -L_s\left(\bm{\rho}_i^{*}(k-1),\mathbf{u}_i^{*}(k-1)\right) \nonumber\\
    & \leq 0.
\end{align}
In the contraction function $\eta_i\left(\bm{\xi}_i(k),\lambda_i\right)$, we have $V_i\left(\bm{\xi}_i^{*}(k-1), \bm{\mu}_i^{*}(k-1)\right) - V_i\left(\bm{\xi}_i^{v}(k), \bm{\mu}_i^{v}(k)\right)$ $ \geq 0$, and for any $\lambda_i \geq 0$, it holds that $\eta_i\left(\bm{\xi}_i(k),\lambda_i\right) \geq 0$ and $V_i\left(\bm{\xi}_i^{v}(k), \bm{\mu}_i^{v}(k)\right) \leq \eta_i\left(\bm{\xi}_i(k),\lambda_i\right)$. Therefore, $\bm{\mu}_i^{v}(k)$ is a feasible solution to Eq.~(\ref{control}), demonstrating that the problem exhibits recursive feasibility . 

Consequently,  for any $\bm{\rho}_i(k) \in \bm{\mathcal{D}}_N$, there exists a feasible solution such that the successor state $\bm{\rho}_i(k+1)=\mathbf{g}_i\left(\bm{\rho}_i(k), \mathbf{u}_i(k), \mathbf{e}_i(k)\right) \in \bm{\mathcal{D}}_N$. According to Definitions~\ref{definition 1} and~\ref{definition 3}, the set $\bm{\mathcal{D}}_N$ is positively invariant under the closed-loop system in Eq.~\ref{closed-loop}. 
\end{proof}

\begin{thm}
\label{theorem 2}
If Assumptions \ref{Assumption 1} and \ref{Assumption 2} hold and the optimal control problem in Eq.~(\ref{control}) admits a feasible solution at the initial time, then for any $\lambda_i \in [0,1)$, the closed-loop system in Eq.~(\ref{closed-loop}) is ISS within the invariant set $\bm{\mathcal{D}}_N$.
\end{thm}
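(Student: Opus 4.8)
The plan is to exhibit the stability objective evaluated along the actually-applied (cost-optimal) solution, $\bar V_i^k := V_i\!\left(\bm{\xi}_i^{k*},\bm{\mu}_i^{k*}\right)$, as an ISS-Lyapunov function on $\bm{\mathcal{D}}_N$ and to discharge the two hypotheses of Lemma~\ref{Lemma 1}. Theorem~\ref{theorem 1} already supplies recursive feasibility on $\bm{\mathcal{D}}_N$ and its positive invariance, so from any $\bm{\rho}_i^0\in\bm{\mathcal{D}}_N$ the sequence $\{\bar V_i^k\}$ is well defined and the trajectory remains in $\bm{\mathcal{D}}_N$. I would first remark that the truncated-rollout warm-starting plays no role in the certificate: every truncated iterate is constrained to satisfy the full-horizon constraints of Eq.~(\ref{control}), in particular the terminal constraint and the contraction constraint~(\ref{control-g}), so $\bar V_i^k$ is a valid feasible value at each step irrespective of how many control steps were re-optimized.

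For the lower bound, since $\mathbf{Q}_i\succ0$ and $E,L_s\ge0$, keeping only the first stage cost gives $\bar V_i^k\ge L_s\!\left(\bm{\rho}_i^{k+0},\mathbf{u}_i^{k+0*}\right)\ge\lambda_{\min}(\mathbf{Q}_i)\,\|\bm{\rho}_i^k\|^2=:\alpha_1(\|\bm{\rho}_i^k\|)$ with $\alpha_1\in K_\infty$. The decrease is where the contraction constraint pays off: from~(\ref{control-g}), $\bar V_i^k\le\eta_i(\bm{\xi}_i^k,\lambda_i)=(1-\lambda_i)\,V_i\!\left(\bm{\xi}_i^{k,v},\bm{\mu}_i^{k,v}\right)+\lambda_i\,\bar V_i^{k-1}$, while inequality~(\ref{contraction}) from the proof of Theorem~\ref{theorem 1} gives $V_i\!\left(\bm{\xi}_i^{k,v},\bm{\mu}_i^{k,v}\right)\le \bar V_i^{k-1}-L_s\!\left(\bm{\rho}_i^{k-1},\mathbf{u}_i^{k-1}\right)$. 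Substituting and shifting the index yields $\bar V_i^{k+1}-\bar V_i^k\le-(1-\lambda_i)\,L_s\!\left(\bm{\rho}_i^k,\mathbf{u}_i^k\right)\le-(1-\lambda_i)\lambda_{\min}(\mathbf{Q}_i)\|\bm{\rho}_i^k\|^2=:-\alpha_3(\|\bm{\rho}_i^k\|)$, and $\alpha_3\in K_\infty$ precisely because $\lambda_i\in[0,1)$ (at $\lambda_i=1$ the decrease is only non-strict, which is why the statement excludes it).

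It remains to produce the upper bound $\bar V_i^k\le\alpha_2(\|\bm{\rho}_i^k\|)$, and this is the step I expect to be the main obstacle, because $\bar V_i^k$ is not a pure state-feedback map — the contraction constraint couples it to the previous value $\bar V_i^{k-1}$. The route I would take: for $\bm{\rho}_i^k\in\bm{\mathcal{D}}_{i,T}$, Assumption~\ref{Assumption 2} and the telescoping identity used in Lemma~\ref{Lemma 2} show that the local-law sequence $\mathbf{u}_i=\mathbf{K}_i\bm{\rho}_i$ is feasible for the stability problem in Eq.~(\ref{value function optimal}) with cost $\sum_{t\ge0}L_s(\mathbf{Z}_i^t\bm{\rho}_i^k,\mathbf{K}_i\mathbf{Z}_i^t\bm{\rho}_i^k)+E(\mathbf{Z}_i^N\bm{\rho}_i^k)\le E(\bm{\rho}_i^k)\le\lambda_{\max}(\mathbf{H}_i)\|\bm{\rho}_i^k\|^2$, so $V_i(\bm{\xi}_i^{k,v},\bm{\mu}_i^{k,v})$ enjoys a local quadratic bound; plugging this into $\bar V_i^k\le(1-\lambda_i)V_i(\bm{\xi}_i^{k,v},\bm{\mu}_i^{k,v})+\lambda_i \bar V_i^{k-1}$ and using the monotonicity just established — so every trajectory stays in the compact sublevel set $\{\bm{\rho}\in\bm{\mathcal{D}}_N:V\le \bar V_i^0\}$ — together with continuity of the optimal stability cost as a function of the state, one extends the local bound to a global $\alpha_2\in K_\infty$ on $\bm{\mathcal{D}}_N$ by the standard argument for MPC value functions. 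With the sandwich $\alpha_1(\|\bm{\rho}_i^k\|)\le\bar V_i^k\le\alpha_2(\|\bm{\rho}_i^k\|)$ and the strict decrease $-\alpha_3$, Lemma~\ref{Lemma 1} certifies that $\bar V_i^k$ is an ISS-Lyapunov function and that the closed loop in Eq.~(\ref{closed-loop}) is ISS within $\bm{\mathcal{D}}_N$; the $\mathcal{KL}$ estimate of Definition~\ref{definition 2} then follows from the triple $(\alpha_1,\alpha_2,\alpha_3)$ by the usual comparison-function construction.
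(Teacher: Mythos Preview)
Your proposal is correct and follows essentially the same route as the paper: the same Lyapunov candidate $V_i(\bm{\xi}_i^{k*},\bm{\mu}_i^{k*})$, the same lower bound via the first stage cost and $\lambda_{\min}(\mathbf{Q}_i)$, and the same decrease argument combining the contraction constraint~(\ref{control-g}) with inequality~(\ref{contraction}) to get the $(1-\lambda_i)$-scaled descent, then invoking Lemma~\ref{Lemma 1}. The only divergence is the upper bound: the paper simply bounds $V_i\le E_{\max}+N L_{s,\max}$ on the compact set $\bm{\mathcal{D}}_N$ and then picks $\theta$ large enough so that $\theta\|\bm{\rho}_i^k\|^2$ dominates this constant, whereas you build a local quadratic bound on $\bm{\mathcal{D}}_{i,T}$ via the terminal law and extend it---your route is more careful near the origin, but structurally both arguments serve the same role.
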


\begin{proof}
From the Eq.~(\ref{valuefunction}), we have,
\begin{align*}
    \alpha_{i,1}\left(\left\| \bm{\rho}_i(k)\right\|\right) =q \times {\left\| \bm{\rho}_i(k)\right\|}^2 \leq L_s\left(\bm{\rho}_i(k),\mathbf{u}_i(k)\right) \leq V_i\left(\bm{\xi}_i(k),\bm{\mu}_i(k)\right), 
\end{align*}
where $q$ denotes the minimum eigenvalue of matrix $\mathbf{Q}_i$, and $\alpha_{i,1}\left(\left\| \bm{\rho}_i(k)\right\|\right)$ is a $K_\infty$ class function, which can be regarded as a lower bound of the stability function $V_i\left(\bm{\xi}_i(k),\bm{\mu}_i(k)\right)$ at time step $k$.

To establish an upper bound for the stability function, we define,
\begin{align*}
    &{L}_{s,\max}=\max\Big\{L_s\left(\bm{\rho}_i(k),\mathbf{u}_i(k)\right)|\bm{\rho}_i(k) \in \bm{\mathcal{D}}_N,\max\left(\mathbf{u}_i(k)\right)\in U\Big\}, \\ 
    &{E}_{\max}=\max\Big\{E\left(\bm{\rho}_i(k)\right)|\bm{\rho}_i(k) \in \bm{\mathcal{D}}_N\Big\}.
\end{align*}
Then it follows that,
\begin{align*}
    V_i\left(\bm{\xi}_i(k), \bm{\mu}_i(k)\right) \leq E_{\max}+N L_{s,\max}.
\end{align*}
Based on this, we construct a function,
\begin{align*}
\alpha_{i,2}\left(\left\| \bm{\rho}_i(k)\right\|\right) = \theta \times {\left\| \bm{\rho}_i(k)\right\|}^2 ,
\end{align*}
with a properly chosen parameter $\theta \geq 0$, such that,
\begin{align*}
\theta \times {\left\| \bm{\rho}_i(k)\right\|}^2 \geq E_{\max}+N L_{s,\max}, \quad  \forall \bm{\rho}_i(k) \in \bm{\mathcal{D}}_N.
\end{align*}
This ensures that  $V_i\left(\bm{\xi}_i(k),\bm{\mu}_i(k)\right) \leq \alpha_{i,2}\left(\left\| \bm{\rho}_i(k)\right\|\right)$, and $\alpha_{i,2}\left(\left\| \bm{\rho}_i(k)\right\|\right)$ is a $K_\infty$ class function. 

According to the recursive feasibility established in Theorem~\ref{theorem 1}, a feasible solution exists at every time step $k$. Consider the optimal solutions at two consecutive time steps, $\bm{\mu}_i^{*}(k-1)$ and $\bm{\mu}_i^{*}(k)$.  When  $\lambda_i \in [0,1)$,  from the contraction constraint (\ref{control-g}) and the Eq.~(\ref{contraction}), we obtain,
\begin{align*}
    &V_i\left(\bm{\xi}_i^{*}(k), \bm{\mu}_i^{*}(k)\right)-V_i\left(\bm{\xi}_i^{*}(k-1), \bm{\mu}_i^{*}(k-1)\right) \\
    &\leq (1-\lambda_i)\left[V_i\left(\bm{\xi}_i^{v}(k), \bm{\mu}_i^{v}(k)\right)-V_i\left(\bm{\xi}_i^{*}(k-1), \bm{\mu}_i^{*}(k-1)\right)\right] \\
    &\leq (1-\lambda_i)\left[-L_s\left(\bm{\rho}_i^{*}(k-1), \mathbf{u}_i^{*}(k-1)\right)\right]\\
    &\leq (1-\lambda_i)\left[q \times {\left\| \bm{\rho}_i^{*}(k-1)\right\|}^2\right]\\
    &=-(\lambda_i - 1)\alpha_{i,1}\left(\left\| \bm{\rho}_i^{*}(k-1)\right\|\right)\\
    &=-\alpha_{i,3}\left(\left\| \bm{\rho}_i^{*}(k-1)\right\|\right),
\end{align*}
where $\alpha_{i,3}\left(\left\| \bm{\rho}_i^{*}(k-1)\right\|\right)$ is also a $K_\infty$ class function. Clearly, this implies that,
\begin{align*}
    V_i\left(\bm{\xi}_i^{*}(k), \bm{\mu}_i^{*}(k)\right) \leq V_i\left(\bm{\xi}_i^{*}(k-1), \bm{\mu}_i^{*}(k-1)\right),
\end{align*}
i.e., the value function $V_i\left(\bm{\xi}_i(k), \bm{\mu}_i(k)\right)$ is monotonically decreasing along the closed-loop trajectory. 

In conclusion, the value function $V_i\left(\bm{\xi}_i(k), \bm{\mu}_i(k)\right)$ satisfies the key conditions of Lemma~\ref{Lemma 1}. Therefore, it qualifies as an ISS-Lyapunov function for subsystem $i$, and subsystem $i$ is ISS within the invariant set $\bm{\mathcal{D}}_N$. 
\end{proof}

In the multiagent truncated rollout framework, the optimization problem for each subsystem is solved sequentially while incorporating control inputs from all other CAVs. The terminal subsystem, by aggregating the ISS properties of all constituent subsystems, guarantees the stability of the overall system within the invariant set $\bm{\mathcal{D}}_N$. 
\section{Computational Issues in Rollout Algorithms}
\label{sec_optimize}
In this section, we analyze the computational complexity of the multiagent truncated rollout method and compare it with centralized model predictive control (MPC) and distributed model predictive control (DMPC) approaches (Section~\ref{Time Complexity Analysis}). We also investigate the optimal decision sequence for CAVs (Section~\ref{CAV Decision Order}) and derive theoretical error bounds for the objective function (Section~\ref{Error Bounds}). These analyses establish the foundation for the performance and stability assurances of the proposed control scheme. We finally present the detailed solution algorithms for implementing the proposed strategy in multiagent environments (Section~\ref{Solution Algorithm}).

\subsection{Time Complexity Analysis}
\label{Time Complexity Analysis}
In this subsection, we systematically compare and analyze the computational complexity reduction afforded by this approach in a multiagent system comprising $(I+1)$ CAVs. We assume that each CAV has a prediction and control horizon of $N$ and that its control action space is discretized into a set containing $S$ discrete control actions at any prediction time $k+t$. 

It is known that in centralized MPC, the dynamic behavior of all subsystems is managed by a central controller, which collects system state information and computes the globally optimal control inputs for all CAVs. Although this approach can account for the cooperation of all CAVs and theoretically achieve a global optimum, as the system scale increases, the number of variables and constraints that the central controller must handle grows significantly. In the assumed environment, at each time step, the algorithm is required to predict the control actions of all CAVs over the optimal horizon $N$. Consequently, the computational complexity grows exponentially with both the number of CAVs and the prediction horizon, and is given by $\textit{O}\left(S^{(I+1) \times N}\right)$.

In DMPC, the overall system is decomposed into multiple subsystems, where each subsystem solves its own optimization problem independently. Over the prediction horizon $N$, the complexity for each CAV is $\textit{O}(S^N)$, and the total system complexity can be expressed as $\textit{O}\left(S^N \times (I+1)\right)$. However, the individually optimal solutions of CAVs do not necessarily coincide with the system global optimum, and independent solving leads to poor coordination among agents. To address this, CAVs are allowed multiple rounds of communication, iteratively optimizing their control actions after acquiring updated control information from other CAVs.
Although the iterative optimization introduces some computational delay, in practice, the number of iterations required for the system to converge to an approximately optimal solution is quite small (e.g., $P=3 \ to \ 5$). Moreover, in large-scale systems, computational complexity that grows linearly with $(I+1)$ is significantly lower than complexity that grows exponentially. This enables DMPC to often achieve shorter computation times and better scalability than centralized MPC within a limited number of iterations.

Different from traditional DMPC, the proposed multiagent truncated rollout method not only adopts an agent-by-agent approach to convert the coordination of multiple subsystems into sequential interactions, but also employs truncated rollout during the iteration process to reduce the optimization horizon. As shown in Table~\ref{complexity}, at each iteration, each CAV $i$ optimizes only the first $M_i^{J} \leq N$ steps of its control sequence, reducing the computational complexity for a single vehicle to $\textit{O}\left(S^{M_i^{J}}\right)$. Consequently, the overall computational complexity per iteration is reduced to $\textit{O}\left(\sum_{i=0}^I S^{M_i^{J}}\right)$.

\begin{table}[h]
    \centering
    \renewcommand{\arraystretch}{1.2}
    \caption{Comparison of computational complexity.}
    \label{complexity}  
    \begin{tabular}{l l}
        \hline
        Approach & Time complexity\\
        \hline
        Model predictive control &$O\left(S^{N\times (I+1)}\right)$\\
        Distributed model predictive control &$O\left(S^{N}\times (I+1)\right)$\\
        Multiagent truncated rollout &$\textit{O}\left(\sum_{i=0}^I S^{M_i^{J}}\right)$\\
        \hline
    \end{tabular}
\end{table}

\subsection{Decision Order Optimization}
\label{CAV Decision Order}
In the agent-by-agent sequential decision-making framework, the order in which CAVs make decisions directly impacts the quality of the generated strategies. In this section, we propose a dynamic decision-ordering mechanism that reconstructs the decision sequence in multiagent truncated rollout algorithms by evaluating the potential contribution of each CAV to the overall traffic state at the current time step. This approach aims to improve system-level performance while preserving the cost improvement property of rollout \citep{Dimitri2021}.

Specifically, at each time step $k$, the decision order of the CAVs is determined through the following steps,

(1) For all CAVs $i \in \{0,1,\ldots, I\}$, the optimal objective value $J_i\left(\bm{\xi}_i^{*}(k),\bm{\mu}_i^{*}(k)\right)$ is evaluated in parallel under the assumption that CAV $i$ is the first to optimize, while all other CAVs follow a predefined car-following model. The CAV with the lowest objective value is selected as the first agent to optimize,
\begin{align*}
    i_0^* = \arg \min_{i \in \{0,1,\ldots,I\}} J_i\left(\bm{\xi}_i^{*}(k),\bm{\mu}_i^{*}(k)\right).
\end{align*}

(2) Fix the control result of CAV $i_0^*$, i.e., $\bm{\mu}_{i_0^*}^{*}(k)$, and for the remaining CAVs, compute in parallel their optimal objective values assuming each is the second decision-maker, while the others continue to follow the car-following model. The one with the lowest cost is selected as the second decision-maker.

(3) Repeat the above procedure until a complete decision sequence $i_0^* \to i_1^* \to \ldots \to i_I^*$ is obtained. This also yields an initial joint control input $\bm{\mu}_{i_0^*:i_I^*}^{*}(k)$  for all CAVs at time step $k$.

This dynamic mechanism ensures that CAVs with greater influence on traffic dynamics are prioritized in decision-making, thereby providing more accurate future-state predictions for subsequent vehicles. Although this method requires a total of $(I+1)+I+\ldots+1=(I+1)(I+2)/2$ objective function evaluations, the computational burden is substantially alleviated through multithreaded parallel processing, making the runtime comparable to that of agent-by-agent rollout with a fixed decision order.

\subsection{Error Bounds for the Objective Function}
\label{Error Bounds}
In Eq.~(\ref{M_value_J}) and Eq.~(\ref{M_value_V}), the selection of the truncated horizons $M_i^J$ and $M_i^V$ is related to the upper and lower bounds of the corresponding objective functions. According to Lemma~\ref{Lemma 1} and Theorem~\ref{theorem 2}, there exist two $K_\infty$ class functions, $\alpha_{i,1}\left(\left\| \bm{\rho}_i(k)\right\|\right) =q \times {\left\| \bm{\rho}_i(k)\right\|}^2$ and $\alpha_{i,2}\left(\left\| \bm{\rho}_i(k)\right\|\right) = \theta \times {\left\| \bm{\rho}_i(k)\right\|}^2$, such that the stability objective function $V_i\left(\bm{\xi}_i(k), \bm{\mu}_i(k)\right)$ satisfies,
\begin{align}
\label{V_bound}
\alpha_{i,1}\left(\left\| \bm{\rho}_i(k)\right\|\right) \leq V_i\left(\bm{\xi}_i(k), \bm{\mu}_i(k)\right) \leq \alpha_{i,2}\left(\left\| \bm{\rho}_i(k)\right\|\right).
\end{align}
This implies that the upper and lower bounds of the function $V_i\left(\bm{\xi}_i(k), \bm{\mu}_i(k)\right)$ are determined by $\alpha_{i,1}\left(\left\| \bm{\rho}_i(k)\right\|\right)$ and $\alpha_{i,2}\left(\left\| \bm{\rho}_i(k)\right\|\right)$, respectively. 
Next, we derive similar bounds for the control objective function $J_{i}\left(\bm{\xi}_{i}(k), \bm{\mu}_{i}(k)\right)$.

\begin{prp}
\label{truncated-rollout}
Let $\bm{\mu}_i^{v}(k)$ be the optimal solution of $V_i\left(\bm{\xi}_i(k), \bm{\mu}_i(k)\right)$  at time step $k$ in Eq.~(\ref{value function optimal}). When the jam density of each road cell is $R$, and the set of control actions for CAV $i\in \mathcal{I}(k)$ is $U = [u_{\min}, u_{\max}]$, the optimal value $\bm{\mu}_{i}^{*}(k)$ of the control objective function $J_{i}\left(\bm{\xi}_{i}(k), \bm{\mu}_{i}(k)\right)$ in Eq.~(\ref{control}) satisfies,
\begin{align}
\label{J_bound}
    \frac{\Delta x \Delta t}{q} \left[ V_{i}\left( \bm{\xi}_{i}^{v}(k), \bm{\mu}_{i}^{v}(k) \right) - N r u_{\max}^{2} - (J + 1) h R^{2} \right] \leq J_{i}\left( \bm{\xi}_{i}^{*}(k), \bm{\mu}_{i}^{*}(k) \right) \leq J_{i}\left( \bm{\xi}_{i}^{v}(k), \bm{\mu}_{i}^{v}(k) \right),
\end{align}
where $q$, $r$, and $h$ are the maximum values in the state weighting matrix $\mathbf{Q}_i$, control action weighting matrix $\mathbf{R}_i$, and terminal state weighting matrix $\mathbf{H}_i$, respectively.
\end{prp}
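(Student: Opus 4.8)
The plan is to prove the two halves of (\ref{J_bound}) separately, each by comparing the two optimization problems (\ref{control}) and (\ref{value function optimal}), which share the same dynamics, density box, terminal set and input constraints and differ only in the objective ($J_i$ versus $V_i$) and in the presence of the contraction constraint (\ref{control-g}). Throughout I write $\bm{\mu}_i^{k*}$ and $\bm{\mu}_i^{k,v}$ for the minimizers of $J_i$ and $V_i$, and $\bm{\xi}_i^{k*}$, $\bm{\xi}_i^{k,v}$ for the corresponding predicted state trajectories.

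\emph{Upper bound.} First I would show that $\bm{\mu}_i^{k,v}$ is feasible for problem (\ref{control}). Since (\ref{value function optimal}) already enforces everything except the contraction constraint, it suffices to check $V_i(\bm{\xi}_i^{k,v},\bm{\mu}_i^{k,v})\le \eta_i(\bm{\xi}_i^k,\lambda_i)$. Substituting the definition of $\eta_i$, this reduces to $\lambda_i\bigl[V_i(\bm{\xi}_i^{k-1*},\bm{\mu}_i^{k-1*})-V_i(\bm{\xi}_i^{k,v},\bm{\mu}_i^{k,v})\bigr]\ge 0$, which holds because $\lambda_i\ge 0$ and, by the contraction inequality (\ref{contraction}) established in the proof of Theorem~\ref{theorem 1}, $V_i(\bm{\xi}_i^{k-1*},\bm{\mu}_i^{k-1*})\ge V_i(\bm{\xi}_i^{k,v},\bm{\mu}_i^{k,v})$. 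Feasibility of $\bm{\mu}_i^{k,v}$ for (\ref{control}) together with optimality of $\bm{\mu}_i^{k*}$ then gives $J_i(\bm{\xi}_i^{k*},\bm{\mu}_i^{k*})\le J_i(\bm{\xi}_i^{k,v},\bm{\mu}_i^{k,v})$, the right-hand inequality of (\ref{J_bound}).

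\emph{Lower bound.} Conversely, the feasible set of (\ref{control}) is contained in that of (\ref{value function optimal}) (the latter merely drops (\ref{control-g})), so $\bm{\mu}_i^{k*}$ is feasible for (\ref{value function optimal}) and hence $V_i(\bm{\xi}_i^{k,v},\bm{\mu}_i^{k,v})\le V_i(\bm{\xi}_i^{k*},\bm{\mu}_i^{k*})$. Next I would expand $V_i(\bm{\xi}_i^{k*},\bm{\mu}_i^{k*})=\sum_{t=0}^{N-1}\bigl({\bm{\rho}_i^{k+t*}}^T\mathbf{Q}_i\bm{\rho}_i^{k+t*}+{\mathbf{u}_i^{k+t*}}^T\mathbf{R}_i\mathbf{u}_i^{k+t*}\bigr)+{\bm{\rho}_i^{k+N*}}^T\mathbf{H}_i\bm{\rho}_i^{k+N*}$ and bound each quadratic form above by the largest weight $q$, $r$, $h$ times the squared norm: the input term uses that $\mathbf{u}_i^{k+t*}$ has a single nonzero component in $U$, so $\|\mathbf{u}_i^{k+t*}\|^2\le u_{\max}^2$; the terminal term uses $\rho_{i,j}^{k+N*}\in[0,R]$, so $\|\bm{\rho}_i^{k+N*}\|^2\le(J+1)R^2$; and the stage state terms use the density box $0\le\rho_{i,j}^{k+t*}\le R$ to dominate the quadratic sum $\sum_j(\rho_{i,j}^{k+t*})^2$ by (a multiple of) the linear sum $\sum_j\rho_{i,j}^{k+t*}$ that defines $J_i$. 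Since $\Delta t\,\Delta x\sum_{t=0}^{N-1}\sum_{j=0}^{J}\rho_{i,j}^{k+t*}=J_i(\bm{\xi}_i^{k*},\bm{\mu}_i^{k*})$, collecting terms yields $V_i(\bm{\xi}_i^{k*},\bm{\mu}_i^{k*})\le \frac{q}{\Delta t\,\Delta x}J_i(\bm{\xi}_i^{k*},\bm{\mu}_i^{k*})+Nr u_{\max}^2+(J+1)hR^2$; chaining with $V_i(\bm{\xi}_i^{k,v},\bm{\mu}_i^{k,v})\le V_i(\bm{\xi}_i^{k*},\bm{\mu}_i^{k*})$ and rearranging for $J_i$ produces the left-hand inequality of (\ref{J_bound}).

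\emph{Main obstacle.} I expect two points to need the most care. The first is the feasibility transfer underlying the upper bound: it rests entirely on the monotonic decrease of the stability cost along the closed loop, so the argument must explicitly invoke the recursive-feasibility construction (the shifted sequence $\hat{\bm{\mu}}_i^k$) and inequality (\ref{contraction}) from Theorem~\ref{theorem 1}. The second is the state-norm comparison in the lower bound: converting the quadratic state cost ${\bm{\rho}_i^{k+t}}^T\mathbf{Q}_i\bm{\rho}_i^{k+t}$ into the linear density sum appearing in $J_i$ must be carried out through the density constraint, and this is exactly the step where $q$ must be read as the \emph{largest} weight of $\mathbf{Q}_i$ rather than its smallest eigenvalue as in Theorem~\ref{theorem 2}; one must also be careful about the boundary cells and confirm that $\mathbf{u}_i^{k+t}$ contributes only one $u_{\max}^2$ term per step.
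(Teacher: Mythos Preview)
Your proposal is correct and follows essentially the same route as the paper: feasibility of $\bm{\mu}_i^{k,v}$ for (\ref{control}) via the contraction inequality (\ref{contraction}) gives the upper bound, and feasibility of $\bm{\mu}_i^{k*}$ for (\ref{value function optimal}) together with termwise bounding of $V_i$ gives the lower bound. The only step you left imprecise is the quadratic--to--linear comparison: the paper obtains $\sum_j(\rho_{i,j}^{k+t})^2\le\sum_j\rho_{i,j}^{k+t}$ with constant $1$ (not $R$) by invoking $\rho_{i,j}^{k+t}<1$ in the chosen units, which is exactly what yields $\frac{\Delta x\,\Delta t}{q}$ rather than $\frac{\Delta x\,\Delta t}{qR}$ in the final bound---so your hedge ``(a multiple of)'' should be resolved that way.
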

\begin{proof}
First, we explain the upper bound of the $J_{i}\left( \bm{\xi}_{i}^{*}(k), \bm{\mu}_{i}^{*}(k) \right)$. From Eq.~(\ref{contraction}), the optimal solution $\bm{\mu}_{i}^{*}(k-1)$ of the stability objective function at time step $k-1$ satisfies,
\begin{align*}
  V_{i}\left( \bm{\xi}_{i}^{*}(k-1), \bm{\mu}_{i}^{*}(k-1)\right)-
  V_{i}\left( \bm{\xi}_{i}^{v}(k), \bm{\mu}_{i}^{v}(k)\right) \geq 0.
\end{align*}
When $\lambda_i \leq 0$, we have,
\begin{align*}
V_{i}\left( \bm{\xi}_{i}^{v}(k), \bm{\mu}_{i}^{v}(k) \right) \leq V_{i}\left( \bm{\xi}_{i}^{v}(k), \bm{\mu}_{i}^{v}(k) \right) + \lambda_{i} \left[ V_{i}\left( \bm{\xi}_{i}^{*}(k-1), \bm{\mu}_{i}^{*}(k-1) \right) - V_{i}\left( \bm{\xi}_{i}^{v}(k), \bm{\mu}_{i}^{v}(k) \right) \right].
\end{align*}
This implies that $\bm{\mu}_i^{v}(k)$ satisfies constraints~(\ref{control-b})-(\ref{control-g}), and it is a feasible solution to the control problem in Eq.~(\ref{control}), but not necessarily the optimal one. Based on the minimization property of the objective function, we can conclude,
\begin{align*}
    J_{i}\left( \bm{\xi}_{i}^{*}(k), \bm{\mu}_{i}^{*}(k) \right) \leq J_{i}\left( \bm{\xi}_{i}^{v}(k), \bm{\mu}_{i}^{v}(k) \right).
\end{align*}

Next, we prove the lower bound of $J_{i}\left( \bm{\xi}_{i}^{*}(k), \bm{\mu}_{i}^{*}(k) \right)$. From the definition of $V_{i}\left( \bm{\xi}_{i}(k), \bm{\mu}_{i}(k) \right)$ in Eq.~(\ref{valuefunction})-(\ref{E_rho_k}), we obtain,
\begin{align*}
    &V_{i}\left( \bm{\xi}_{i}(k), \bm{\mu}_{i}(k) \right) \\
    &= \sum_{t=0}^{N-1} \left( {\bm{\rho}_{i}(k+t)}^T \mathbf{Q}_{i} \bm{\rho}_{i}(k+t) + {\mathbf{u}_{i}(k+t)}^T \mathbf{R}_{i} \mathbf{u}_{i}(k+t) \right) + {\bm{\rho}_{i}(k+N)}^T \mathbf{H}_{i} \bm{\rho}_{i}(k+N) \\
    &\leq \sum_{t=0}^{N-1} \left( q \sum_{j=0}^{J} {\rho_{i,j}(k+t)}^2 + r \sum_{j=0}^{J} {u_{i,j}(k+t)}^2 \right) + h \sum_{j=0}^{J} {\rho_{i,j}(k+N)}^2.
\end{align*}
Since only the action of the cell where CAV $i$ is located is non-zero in $\mathbf{u}_{i}(k+t)$, and $\max\left(\mathbf{u}_{i}(k+t)\right) \leq u_\max$, 
\begin{align*}
    r \sum_{j=0}^{J} {u_{i,j}(k+t)}^2 \leq r u_{\max}^{2}.
\end{align*}
Similarly, for each cell $j$, $\rho_{i,j}(k+N)\leq R$,
\begin{align*}
    h \sum_{j=0}^{J} {\rho_{i,j}(k+N)}^2 \leq (J + 1) h R^{2}.
\end{align*}
Thus,
\begin{align*}
    V_{i}\left( \bm{\xi}_{i}^{v}(k), \bm{\mu}_{i}^{v}(k) \right) \leq V_{i}\left( \bm{\xi}_{i}(k), \bm{\mu}_{i}(k) \right) \leq q \sum_{t=0}^{N-1} \sum_{j=0}^{J} {\rho_{i,j}(k+t)}^2 + N r u_{\max}^{2} + (J + 1) h R^{2},
\end{align*}
and then,
\begin{align*}
    \sum_{t=0}^{N-1} \sum_{j=0}^{J} {\rho_{i,j}(k+t)}^2 \geq \frac{1}{q} \left[ V_{i}\left( \bm{\xi}_{i}^{v}(k), \bm{\mu}_{i}^{v}(k) \right) - N r u_{\max}^{2} - (J + 1) h R^{2} \right],
\end{align*}
where $q >0$.

Since the vehicle density in each cell $j$ satisfies $\rho_{i,j}(k+t) < 1$, so $\rho_{i,j}(k+t) > {\rho_{i,j}(k+t)}^2$ and 
\begin{align*}
    \sum_{t=0}^{N-1} \sum_{j=0}^{J} \rho_{i,j}(k+t) > \sum_{t=0}^{N-1} \sum_{j=0}^{J} {\rho_{i,j}(k+t)}^2.
\end{align*}
Consequently,
\begin{align*}
    &J_{i}\left( \bm{\xi}_{i}(k), \bm{\mu}_{i}(k) \right) = \sum_{t=0}^{N-1} \sum_{j=0}^{J} \rho_{i,j}(k+t) \Delta x \Delta t \\
    &\geq J_{i}\left( \bm{\xi}_{i}^{*}(k), \bm{\mu}_{i}^{*}(k) \right) \\
    &> \sum_{t=0}^{N-1} \sum_{j=0}^{J} \rho_{i,j}(k+t)^{2} \Delta x \Delta t \\
    &\geq \frac{\Delta x \Delta t}{q} \left[ V_{i}\left( \bm{\xi}_{i}^{v}(k), \bm{\mu}_{i}^{v}(k) \right) - N r u_{\max}^{2} - (J + 1) h R^{2} \right].
\end{align*}
\end{proof}

\subsection{Solution Algorithm for Multiagent Truncated Rollout}
\label{Solution Algorithm}
Building on the previously introduced CAV decision order optimization method and the objective function bounds evaluation mechanism, this subsection outlines the overall algorithmic framework for achieving efficient coordination among multiple CAVs.
We divide the multiagent truncated rollout method into two key components: (1) optimal control sequence solving for individual CAVs, and (2) a sequential iteration scheme for all CAVs.

\begin{algorithm}[H]
  \caption{Multiagent truncated rollout algorithm.}
  \label{algorithm:Rollout-DMPC}
  \begin{algorithmic}[1]
    \Statex \textbf{Input:}
      Total optimization time $T \Delta t$; 
      Termination condition $\epsilon$;
      Prediction horizon $N$;
      Objective function $J_i\left(\bm{\xi}_i(k),\bm{\mu}_i(k)\right)$;
    \Statex \textbf{Output:}
      Total travel time $J_t$ of vehicles on the road during optimization time $T \Delta t$;
    \State Initialize $J_t := 0$;
    \For{$k = 0$ to $T-1$}
        \State Get traffic state $\bm{\rho}(k)$ and construct the set of the CAVs $\mathcal{I}(k)=\left\{1,2,\ldots,I\right\}$;
        \State Set the objective function difference $\Delta:=+\infty$; 
        \State $J_t \leftarrow J_t + \sum_{j=0}^J \left[\rho_j(k) \times \Delta t \times \Delta x\right]$;
        \State Set the iteration count $p := 0$;
        \State Conduct parallel evaluation of the $J_i\left(\bm{\xi}_i(k),\bm{\mu}_i(k)\right)$ for all CAVs;
        \State Generate the decision sequence $\ell(k)$ and initial solution set $\bm{\mu}_{0:I}^{*,0}(k)$;
        \While {$\Delta > \epsilon$}
            \State $p \leftarrow p+1$;
            \ForEach{CAV $i$ in $\ell(k)$}
                \State Calculate $M_i^V$ and solve for $\bm{\mu}_i^{v,p}(k)$;
                \State Calculate $M_i^J$ and solve for $\bm{\mu}_i^{*,p}(k)$;
            \EndFor
            \State $\Delta \leftarrow \left|J_I\left(\bm{\xi}_i^{p}(k),\bm{\mu}_i^{p}(k)\right)-J_I\left(\bm{\xi}_i^{p-1}(k),\bm{\mu}_i^{p-1}(k)\right)\right|$;
        \EndWhile
        \State Execute the first control action $\mathbf{u}_i^{*}(k)$ for all CAV $i\in \mathcal{I}(k)$.
    \EndFor
  \end{algorithmic}
\end{algorithm}

As described in Section~\ref{MPC framework}, at each time step $k\in\{1,2,\ldots\}$, the optimal control sequence for CAV $i \in \mathcal{I}(k)$ is determined by solving the problem defined in Eq.~(\ref{control}). CAV $i$ first minimizes the stability function $V_i\left(\bm{\xi}_i(k),\bm{\mu}_i(k)\right)$ to generate a candidate control sequence $\bm{\mu}_i^{v,p}(k)$. Subsequently, a contraction constraint $\eta_i\left(\bm{\xi}_i(k),\lambda_i\right)$ is constructed to ensure system stability while optimizing the cost objective function, yielding the optimal control sequence $\bm{\mu}_i^{*,p}(k)$. During the iterative process, the optimization horizons $M_i^V$ and $M_i^J$ ($M_i^V \leq N$ and $M_i^J \leq N$) are dynamically adjusted: initialization employs the full horizon $N$, followed by adaptive truncation of the prediction horizon based on evaluating solution quality using the upper and lower bounds of the objective functions ($J_i^{UB}(k)$, $J_i^{LB}(k)$ or $V_i^{UB}(k)$, $V_i^{LB}(k)$). This strategy effectively reduces computational burden without significantly compromising performance.

To coordinate the control decisions of all CAVs within the total optimization horizon $T\Delta t$ while minimizing the cumulative travel time of all vehicles, we propose an iterative scheme as detailed in Algorithm~\ref{algorithm:Rollout-DMPC}. 
At each time step $k$, the impact of individual CAVs on traffic efficiency is evaluated under the current state to optimize the decision-making order and generate initial solutions. Then, CAVs sequentially implement iterative improvements to this initial solution according to the fixed optimization order.
It is important to note that when the time step $k+N$ exceeds $T$, both the prediction and control horizons of the controlled CAV are adjusted to $T-k$. 
\section{Numerical Results}
\label{sec_results}
In this section, we present numerical experiments to validate the proposed method. We outline the experimental settings (Section~\ref{Settings}) and evaluate the performance of the multiagent truncated rollout through comparisons with several state-of-the-art baseline algorithms (Section~\ref{Performance Comparison}). To provide further insights, we then apply the method to diverse traffic scenarios and conduct sensitivity analyses (Section~\ref{Sensitivity Analysis}). 
Finally, we analyze the scalability of the multiagent truncated rollout approach on large-scale transportation systems (Section~\ref{Scalability_Analysis}).

\subsection{Experimental Settings}
\label{Settings}
We select a real-world bottleneck section of the Hujin Highway in Shanghai, China, as the case study, utilizing the simulation of urban mobility (SUMO) platform \citep{Lopez2018SUMO} for the simulation environment. As shown in Fig.~\ref{Hujin Highway Experiments}, the model features a 3000 $m$ segment with a lane reduction from three to two lanes. This includes a 2100 $m$ upstream coordination zone and a 900 $m$ warm-up segment to allow vehicles to stabilize into realistic driving behaviors before entering the control area.
In addition to average travel time, we also introduce average waiting time as a critical metric for evaluating traffic flow stability. In bottleneck scenarios, waiting time serves as a direct proxy for the intensity of stop-and-go waves and merging friction. Furthermore, human-driven vehicles (HDVs) are assumed to follow the classic intelligent driver model (IDM) \citep{TreiberIDM2000}.

\begin{figure}[htbp]
  \centering
  \subfigure[Real section of a single bottleneck]
  {\label{hujin-highway}
  \includegraphics[width=0.9\textwidth,trim=0 0 0 0, clip]{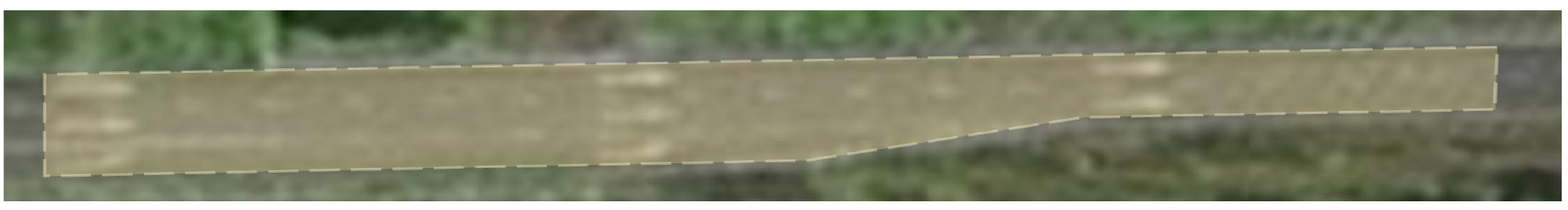}}
  \subfigure[Simulation of a single bottleneck]
  {\label{hujin-sumo}
  \includegraphics[width=0.9\textwidth,trim=0 0 0 0, clip]{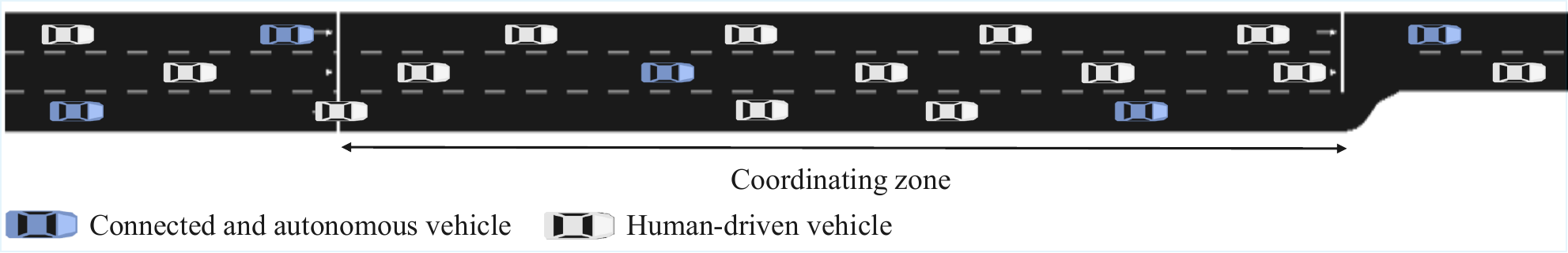}} 
  \caption{A single bottleneck section on the Hujin Highway.} 
  \label{Hujin Highway Experiments}
\end{figure}

To emulate a high-density traffic environment, vehicles are generated at the input of the experimental section (Fig.~\ref{hujin-sumo}) with a demand of 2000 $veh/h$. Lane assignment is randomized, and the CAV penetration rate is set to 15\%. The coordination zone is spatially discretized into multiple interconnected cells, each 300 $m$ in length. Regarding the controller settings, the state and control weighting matrices for each CAV $i$ are defined as $Q_i=\text{diag}(0.5, \dots, 0.5) \in \mathbb{R}^{(J+1)\times (J+1)}$ and $R_i=\text{diag}(1.0, \dots, 1.0) \in \mathbb{R}^{(J+1)\times (J+1)}$, respectively. Detailed parameter settings are provided in Table~\ref{Parameter}.

\begin{table}[h]
    \centering
    \caption{Parameter settings for the experiment.}
    \label{Parameter}  
    \begin{tabular}{l l}
        \hline
        Parameters & Values\\
        \hline
        MPC prediction horizon $N$ &7 $s$\\
        Cell length $\Delta x$ &300 $m$\\
        Jam density $R$ & 0.12 $veh/m$\\
        Free flow speed $V$ & 33.33 $m/s$\\
        Minimum control speed $u_{\min}$ &5 $m/s$\\
        Maximum control speed $u_{\max}$ &33.33 $m/s$\\
        Total road length &3000 $m$\\
        Constriction factor $\lambda_i$ &0.6\\
        \hline 
    \end{tabular}
\end{table}

\subsection{Performance Comparison}
\label{Performance Comparison}
We apply the multiagent truncated rollout method method for longitudinal coordination control of CAVs, and compare it with five commonly used methods:

\begin{itemize}
    \item Without control: This approach serves as the baseline where all vehicles (including CAVs) follow the IDM in SUMO simulations, devoid of external control interventions.
    
    \item Multiagent proximal policy optimization (MAPPO): MAPPO is a typical multiagent algorithm in RL \citep{Yu2021TheSE}, which achieves stable learning and efficient cooperation of agents by implementing centralized learning and decentralized execution.
    
    \item Heterogeneous-agent proximal policy optimization (HAPPO): The method enhances MAPPO via agent-by-agent sequential decision-making \citep{Liu2024AMR}, where each agent's policy update incorporates preceding agents' optimized strategies.

    \item Model predictive control (MPC): It optimizes the control sequence for a finite horizon and uses a feedback correction mechanism to correct the prediction error based on the predictive model, achieving optimal control of the system \citep{ZHANG2023199partb}. 
    
    \item Distributed model predictive control (DMPC): The approach decomposes the global system into decentralized subsystems, each of which is governed by a separate controller to reduce computational complexity \citep{Goulet2022DMPC}.
\end{itemize}

As shown in Table \ref{Performance-comparisons}, the multiagent truncated rollout method achieves concurrent improvements in average travel time and waiting time, leveraging its precise modeling of global dynamics and sequential agent-by-agent decision-making. Compared to the without control scenario, the average travel time decreases by 17.29\%. Most notably, the reduction in waiting time is substantial, reaching 33.86\%. These results demonstrate that although the proposed method primarily focuses on longitudinal speed regulation, it creates ample maneuvering space by smoothing traffic flow and eliminating stop-and-go phenomena. This facilitates smoother lane changing at bottlenecks, thereby mitigating the traffic friction induced by mandatory lane changes.

\begin{table}[h]
    \centering
    \caption{Performance comparisons of different approaches.}
    \label{Performance-comparisons}
    \begin{tabular}{l l l}
        \hline
        Approach & Average travel time & Average waiting time\\
        \hline
        Without control & 191.54 $s$ & 37.57 $s$\\
        MAPPO & 169.95 $s$ ($\downarrow$ 11.27$\%$) & 32.56 $s$ ($\downarrow$ 13.33$\%$)\\
        HAPPO & 168.39 $s$ ($\downarrow$ 12.09$\%$) & 30.71 $s$ ($\downarrow$ 18.26$\%$)\\
        MPC & 165.94 $s$ ($\downarrow$ 13.37$\%$) & 29.26 $s$ ($\downarrow$ 22.12$\%$)\\
        DMPC & 166.20 $s$ ($\downarrow$ 13.23$\%$) & 26.19 $s$ ($\downarrow$ 30.29$\%$)\\
        Our approach & \textbf{158.43 $s$} (\textbf{$\downarrow$ 17.29$\%$}) &\textbf{24.85 $s$} (\textbf{$\downarrow$ 33.86$\%$})\\
        \hline
    \end{tabular}
    \par \vspace{0.5em}
    \begin{minipage}{\linewidth}
        \footnotesize \textit{Note: The presented results are the average values derived from 5 independent simulation runs.}
    \end{minipage}
\end{table}

These results also indicate that, compared to the conventional MAPPO algorithm, the sequential decision-making mechanism enables HAPPO to operate in a more stationary environment during training, thereby yielding superior control performance. However, the efficacy of MARL coordination strategies is inherently constrained by the fixed agent population size defined during training. Dynamic fluctuations in agent numbers often necessitate network retraining, rendering these methods ill-suited for real-time control. Consequently, RL-based methods underperform compared to MPC-based approaches in our simulation environment.
Although centralized MPC theoretically offers global optimality, its implementation in large-scale multiagent systems faces two primary limitations. First, as the number of CAVs rises, the optimization problem becomes highly non-convex and high-dimensional, making it computationally prohibitive to find high-quality solutions within tight time windows. Second, the solver frequently gets trapped in local optima. These constraints limit the method to an average travel time reduction of only 13.37\%.
Similarly, although DMPC leverages distributed control to achieve a 13.23\% reduction in travel time, it suffers from performance degradation caused by information transmission latency during iterative optimization. This results in a performance gap of 4.06\% compared to the proposed multiagent truncated rollout method.

\subsection{Sensitivity Analysis}
\label{Sensitivity Analysis}
In this section, we explore the sensitivity of the proposed method to two key parameters: CAV penetration rate and coordination zone length, using the scenario in Fig.~\ref{Hujin Highway Experiments}. To ensure the statistical reliability of the results, each experimental scenario is executed over 5 independent simulation runs. The results presented in the subsequent figures depict the average values, with shaded regions representing the standard deviation, thereby visually demonstrating the robustness of the control performance against traffic uncertainties.

\subsubsection{CAV Penetration Rate}
\label{CAV Penetration}
\begin{figure}[htbp]
  \centering
  \subfigure[Average travel time with the penetration rate]
  {\label{Sensitivity_Traveltime}
  \includegraphics[width=0.45\textwidth,trim=0 0 0 0, clip]{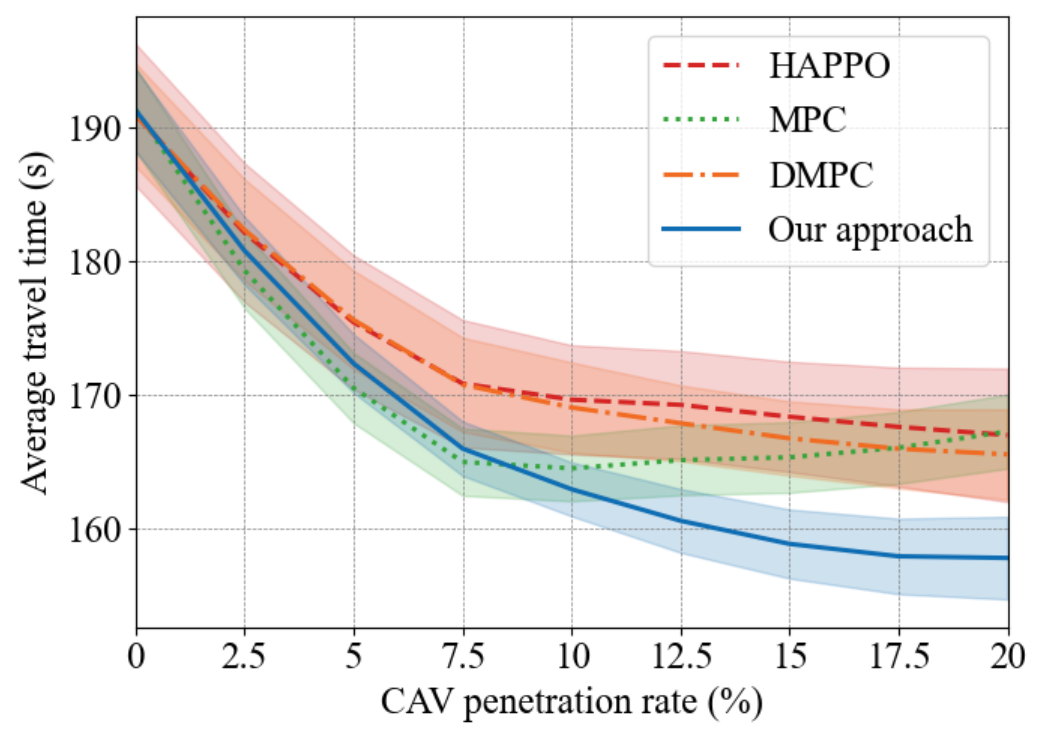}}
  \subfigure[Average speed with the penetration rate]
  {\label{Sensitivity_Speed}
  \includegraphics[width=0.45\textwidth,trim=0 0 0 0, clip]{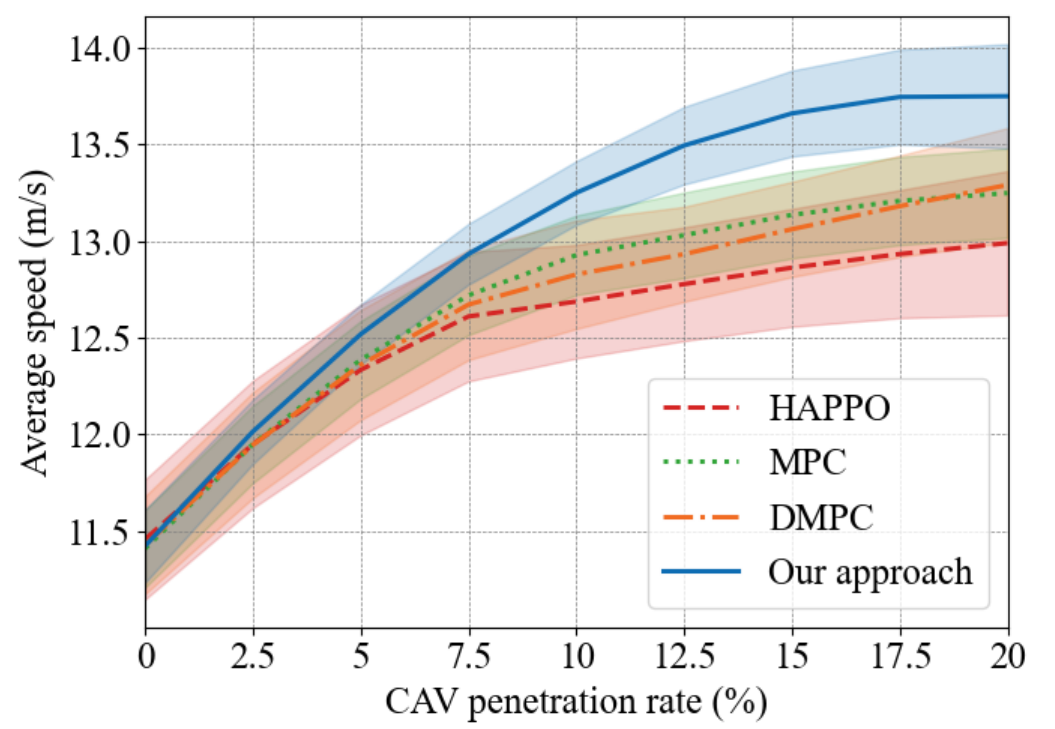}} 
  \caption{Sensitivity analysis of CAV penetration rate impact on system performance.} 
  \label{Sensitivity_penetration}
\end{figure}

We first analyze the impact of CAV penetration rate on average travel time and average speed. As shown in Fig. \ref{Sensitivity_Traveltime}, the average travel time for all methods decreases as CAV penetration increases. Specifically, at low penetration rates (below 8\%), MPC outperforms other baselines due to its precise solving capability for small-scale optimization problems. However, when the penetration rate exceeds 10\%, the rapid increase in dimensionality and coupling complexity makes the centralized MPC prone to getting trapped in local optima, preventing it from finding high-quality solutions within acceptable computation times. Consequently, its control performance degrades significantly.
While DMPC benefits from distributed control at high penetration rates, its inherent local optimization nature limits its global effectiveness. Similarly, the HAPPO algorithm underperforms when facing uncertainties in multiagent environments due to convergence challenges. These trends are further corroborated by the average speed results in Fig. \ref{Sensitivity_Speed}.
When the CAV penetration rate surpasses 15\%, the escalating complexity of cooperative interactions impedes the scalability of HAPPO, MPC, and DMPC, resulting in suboptimal performance. In contrast, the proposed multiagent truncated rollout approach demonstrates superior scalability in high-penetration scenarios, reducing average travel time by over 16\% and increasing average speed by more than 18\%. Notably, the control benefits of CAVs begin to saturate at approximately 15\% penetration. This saturation suggests that the critical stop-and-go waves have been effectively suppressed.

\subsubsection{Coordination Zone Length}
Next, we examine the impact of coordination zone length on system performance. As illustrated in Fig.~\ref{Sensitivity_zone}, extending the coordination zone from 300 $m$ to 2400 $m$ results in a non-monotonic trend where the average travel time initially decreases before subsequently rising across all methods. This phenomenon stems from the trade-off between enhanced collaboration potential and increased optimization complexity. In shorter zones, the constrained maneuvering space limits the cooperative capabilities of CAVs. Conversely, expanding the coordination range allows for earlier trajectory planning and better coordination, reducing average travel time by up to 18\%. However, when the zone length exceeds 1800 $m$, although the expanded state space improves decision-making foresight, it drastically escalates the computational burden, leading to performance degradation.
Comparative analysis reveals that while the multiagent truncated rollout method slightly trails centralized MPC in short zones (though still outperforming DMPC), it exhibits superior scalability in larger coordination areas ($>900$ $m$). By achieving lower travel times than MPC in these extended ranges, the proposed method demonstrates its robustness and adaptability to large-scale systems.

\begin{figure}[htbp]
  \centering
  \subfigure[Average travel time with the zone length]
  {\label{zone_Traveltime}
  \includegraphics[width=0.45\textwidth,trim=0 0 0 0, clip]{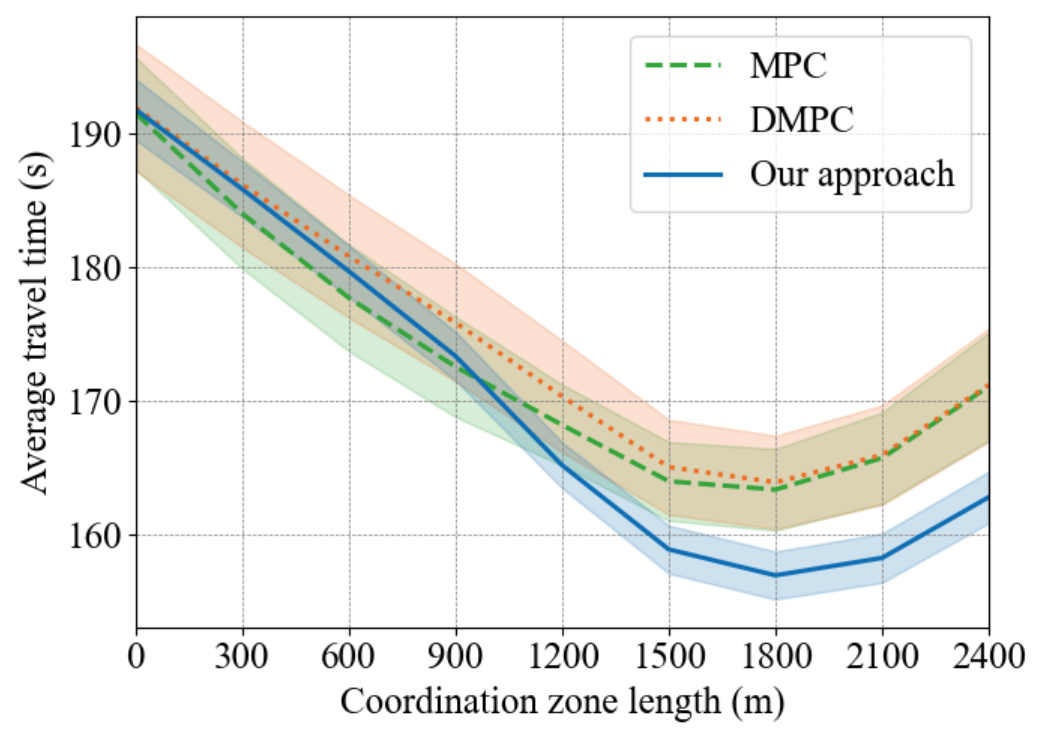}}
  \subfigure[Average speed with the zone length]
  {\label{zone_Speed}
  \includegraphics[width=0.45\textwidth,trim=0 0 0 0, clip]{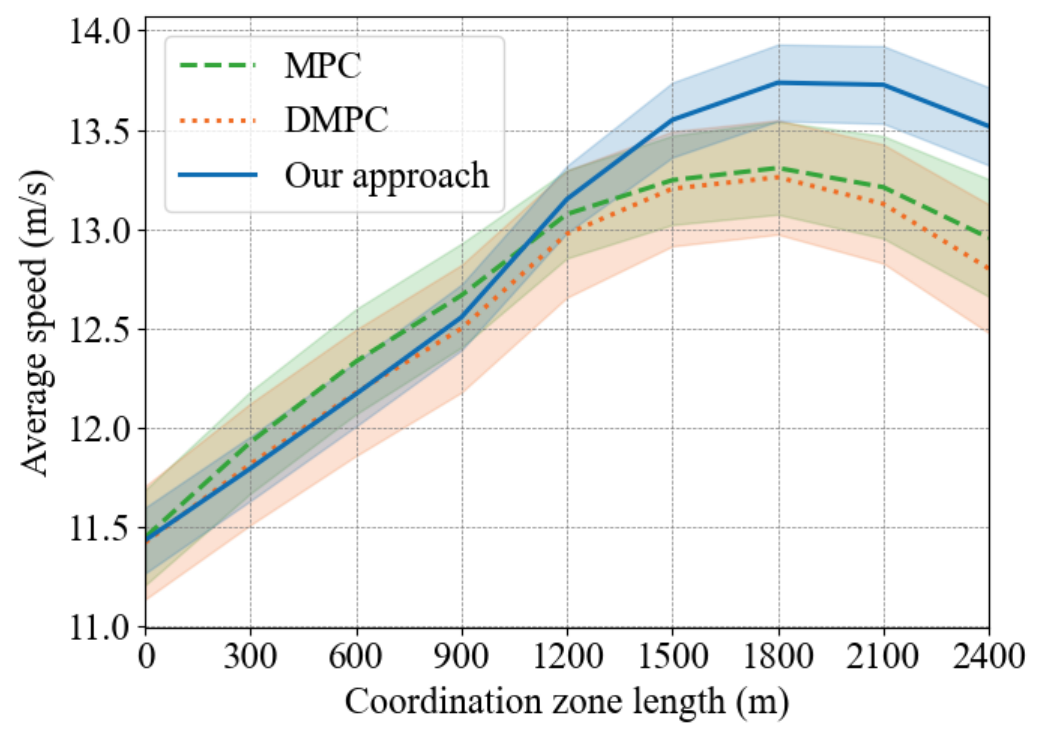}} 
  \caption{Sensitivity analysis of coordination zone length impact on system performance.} 
  \label{Sensitivity_zone}
\end{figure}

\subsection{Scalability Analysis}
\label{Scalability_Analysis}
To evaluate the scalability of the multiagent truncated rollout method in large-scale scenarios, we conducted a series of extended experiments within the 2100 $m$ control zone (Section~\ref{Settings}), focusing on computation time under varying CAV quantities. Similarly, each data point represents the average of 5 independent simulation runs. The negligible standard deviation observed confirms the stability of the solvers across different traffic states.
In addition to the MPC and DMPC benchmarks, we introduced a standard multiagent rollout baseline. 
While this method shares the distributed, agent-by-agent sequential framework with our proposed approach, it fundamentally differs in horizon management: it utilizes a fixed prediction horizon ($N=7$), standing in sharp contrast to the dynamic truncation mechanism of our proposed method.

We first recorded the average single-step coordination time for 1 to 8 CAVs in identical scenarios. Subsequently, to comprehensively validate computational efficiency, we measured total computation time over 1000 $s$ simulations with CAV penetration rates ranging from 10\% to 40\%. 
As illustrated in Fig.~\ref{computation_time}, while computational costs increase with the number of CAVs for all methods, significant algorithmic disparities exist.

\begin{figure}[H]
  \centering
  \subfigure[Single time step computation time]
  {\label{single_step_computation_time}
  \includegraphics[width=0.65\textwidth,trim=0 0 0 0, clip]{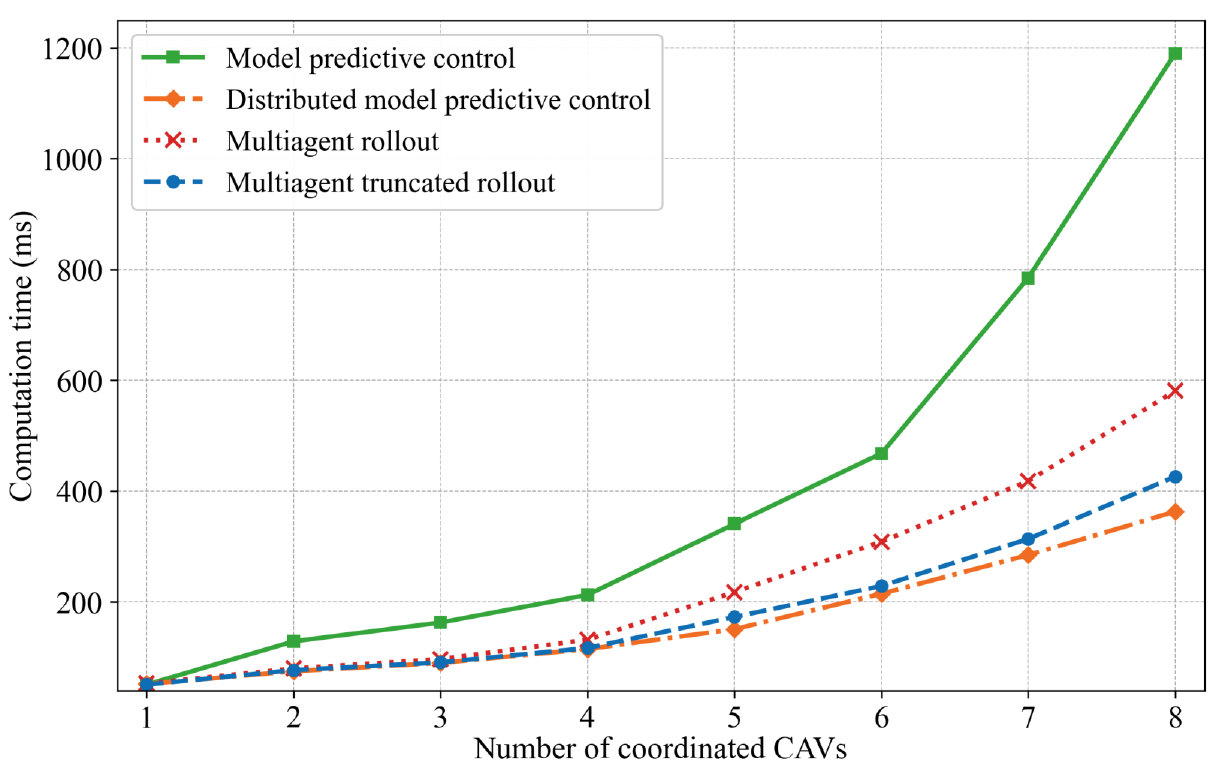}}
  \subfigure[Total system computation time]
  {\label{system_computation_time}
  \includegraphics[width=0.65\textwidth,trim=0 0 0 0, clip]{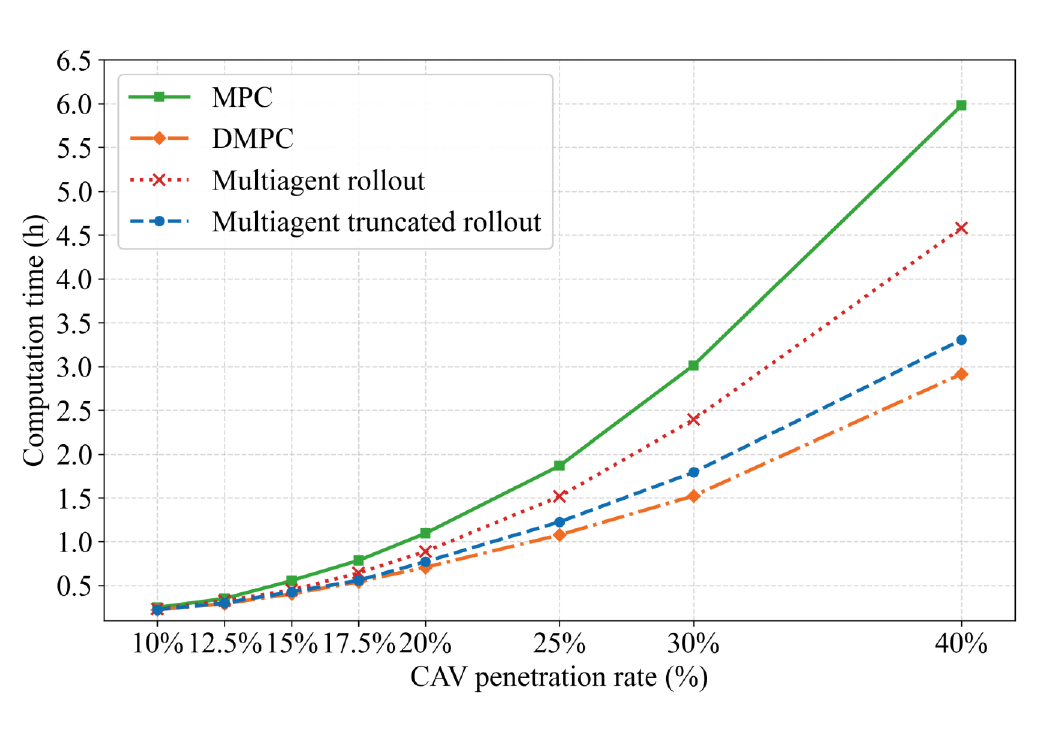}} 
  \caption{Computational scalability of control methods.} 
  \label{computation_time}
\end{figure}

Centralized MPC faces severe bottlenecks when coordinating numerous CAVs, exhibiting the steepest increase in solution time. The standard multiagent rollout incurs higher costs than DMPC, as its sequential nature inherently limits parallelization. However, the proposed truncated mechanism significantly reduces the dimensionality of decision variables, ensuring real-time feasibility even in high-penetration systems.
Crucially, sensitivity analysis (Section~\ref{CAV Penetration}) indicates that traffic regulation benefits saturate at approximately 15\% penetration. At this operating point, the proposed method reduces computation time by 22.9\% compared to MPC and is only 5.7\% higher than DMPC, all while achieving superior control performance. These findings confirm that the proposed method effectively balances high performance with computational efficiency, thereby resolving scalability concerns in practical deployments.

The computational efficiency improvement enabled by the truncated rollout mechanism is further validated across varying coordination zone lengths in the case study depicted in Fig.~\ref{Hujin Highway Experiments}. As indicated by the red curve in Fig.~\ref{improvement}, under a 15\% CAV penetration rate, coordination zones shorter than 600 $m$ yield only a marginal 2\% reduction in computation time compared to the standard rollout method. This minimal gain is attributed to the limited optimization scope, where the complexity differential between the two approaches is negligible. However, as the coordination zone expands, the efficiency gains from the truncated mechanism become significantly more pronounced. Specifically, when the zone length exceeds 2400 $m$, the total system computation time is reduced by more than 10\%, highlighting the method's advantage in larger-scale scenarios.

\begin{figure}[h]
    \centering
    \includegraphics[width=0.6\textwidth]{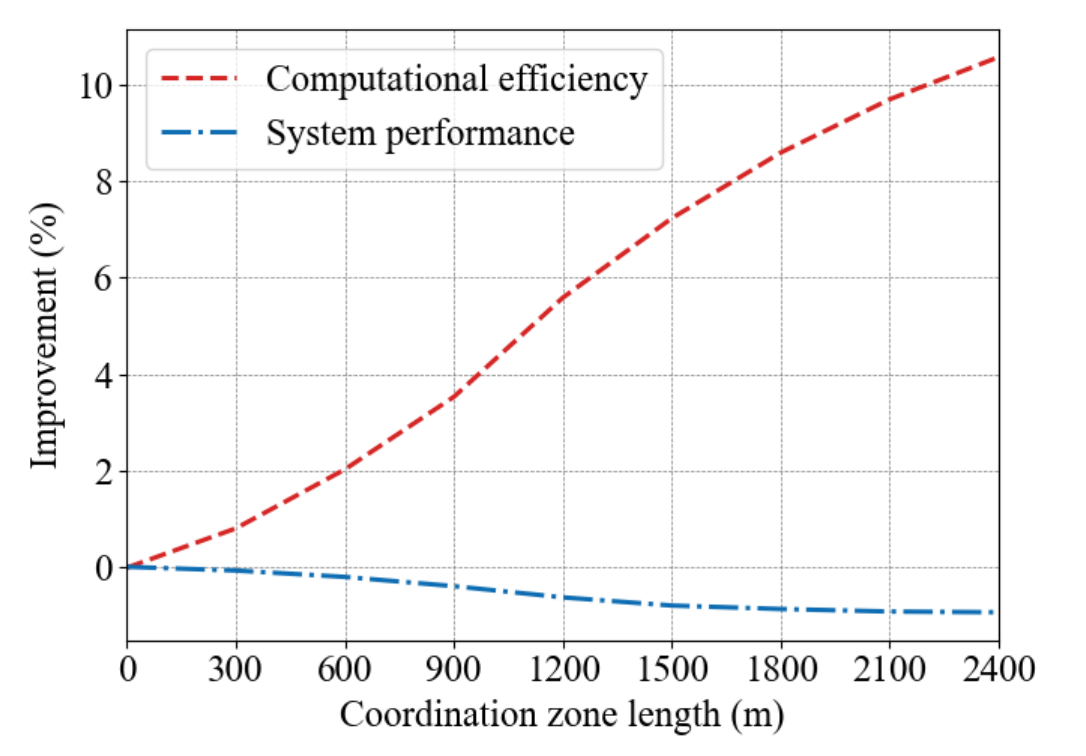}
    \caption{Improvement in computational efficiency and system performance.}
    \label{improvement}
\end{figure}

While the truncated rollout mechanism delivers substantial efficiency improvements, it incurs a marginal degradation in control performance compared to the standard rollout baseline due to the partial horizon. However, as illustrated by the blue curve in Fig.~\ref{improvement}, this performance gap remains consistently below 1\%. For instance, within the optimal 1800 $m$ control zone, the average travel time for the truncated method is 157.01 $s$ versus 155.65 $s$ for the standard method—a difference of merely 1.36 $s$ (0.87\%). Given the corresponding reduction in computation time of over 8\%, this minor performance compromise is considered a highly favorable trade-off for real-time implementation. 
\section{Conclusion}
\label{sec_conclude}
This paper proposes a novel multiagent truncated rollout method for optimizing highway traffic flow in mixed autonomy, with a specific focus on congested bottleneck scenarios. 
Based on a coupled PDE-ODE model that captures the hybrid dynamics of macroscopic flow and microscopic vehicle behaviors, we established a unified road density evolution equation. This served as the foundation for a distributed coordination framework where each CAV functions as an independent controller. 
Unlike traditional distributed model predictive control that relies on parallel optimization, our architecture introduces an agent-by-agent sequential optimization mechanism. This enables explicit coordination by allowing CAVs to incorporate the latest decisions of neighboring agents. 
Furthermore, the integration of an adaptive truncated rollout technique significantly reduces computational costs by dynamically adjusting the optimization horizon based on rigorous cost function bounds. 
Theoretical analysis guarantees the input-to-state stability of individual subsystems and the monotonic improvement of the system-level objective. 
Simulation results calibrated to the Shanghai Hujin Expressway demonstrate that the proposed method achieves a superior balance between control performance and scalability compared to conventional model predictive control approaches. Crucially, although this study focuses on longitudinal speed control, the results indicate that proactively regulating upstream traffic density creates favorable time gaps for merging vehicles. This effectively facilitates the mandatory lane changes required at the bottleneck, thereby mitigating traffic congestion even in the absence of explicit lateral control.

This work can be extended in several directions. 
First, we aim to explicitly incorporate lane-changing decision-making to achieve the joint optimization of longitudinal and lateral dynamics. This will provide a more comprehensive representation of real-world driving behaviors. 
Second, the impact of uncertainties, including sensor noise, communication delays, and the stochasticity of human-driven vehicles, will be rigorously analyzed. Future iterations will introduce robust optimization or learning-based adaptive mechanisms to enhance system reliability. 
Finally, we plan to scale the control framework to larger networks by designing spatially distributed coordination mechanisms capable of managing cascading congestion across multiple interconnected bottlenecks. 
\appendix
\section*{Appendix}
\label{Appendix}
\renewcommand{\thesubsection}{Appendix \Alph{subsection}}
\subsection{Derivation of the Density Transfer Equation}
\label{Appendix A}
\subsubsection*{Cells Without CAV Bottlenecks}
In road cells without CAV bottlenecks, the calculation of density follows Eq.~\ref{supply-demand}. We organize it as,
\begin{align*}
    \rho_j(k+1) = \rho_j(k) + f_1\Big(\rho_{j-1}(k),\rho_{j}(k),\rho_{j+1}(k)\Big),
\end{align*}
where the nonlinear function $f_1\left(\rho_{j-1}(k),\rho_{j}(k),\rho_{j+1}(k)\right)$ is derived from Eqs.~(\ref{supply-demand-b})-(\ref{supply-demand-e}),
\begin{align*}
    f_1\Big(\rho_{j-1}(k),\rho_{j}(k),\rho_{j+1}(k)\Big)=&-\frac{\Delta t}{\Delta x} \Big[\min\Big\{f\left(\min\left\{\rho_j(k),\rho_*\right\}\right),f\left(\max\left\{\rho_{j+1}(k),\rho_*\right\}\right)\Big\}\\
   & -\min\Big\{f\left(\min\left\{\rho_{j-1}(k),\rho_*\right\}\right),f\left(\max\left\{\rho_{j}(k),\rho_*\right\}\right)\Big\}\Big]
\end{align*}

\subsubsection*{Cells with CAV Bottlenecks}
If the action $u_i(k)$ of CAV $i$ does not satisfy constraint (\ref{CAV reduction}), we reconstruct the upstream density $\hat{\rho}_{j-\frac{1}{2}}\left(u_i(k)\right)$ and downstream density $\check{\rho}_{j+\frac{1}{2}}\left(u_i(k)\right)$ of the cell $j$ in which this CAV is located by solving the following equation,
\begin{align*}
    \frac{\alpha R}{4V} {\left(V-u_i(k)\right)}^2 = V \rho \left(u_i(k)\right) \left(1-\frac{\rho \left(u_i(k)\right)}{R}\right) - u_i(k)\rho \left(u_i(k)\right).
\end{align*}
Then,
\begin{align*}
    \hat{\rho}_{j-\frac{1}{2}}\left(u_i(k)\right)=R\left(V-u_i(k)\right)\frac{1+\sqrt{1-\alpha}}{2V},\\
    \check{\rho}_{j+\frac{1}{2}}\left(u_i(k)\right)=R\left(V-u_i(k)\right)\frac{1-\sqrt{1-\alpha}}{2V}.
\end{align*}

For $F_{j+\frac{1}{2}}(k)$ in Eq.~\ref{CAV_down},
\begin{enumerate}
\item[(i)] If $\Delta t_{i,j}(k) \leq \Delta t$,
\begin{align*}
& F_{j+\frac{1}{2}}(k) = \frac{\Delta t_{i,j}(k)}{\Delta t}f\left( \check{\rho}_{j+\frac{1}{2}}(u_i(k))\right)+\left(1- \frac{\Delta t_{i,j}(k)}{\Delta t}\right)f\left( \hat{\rho}_{j^-\frac{1}{2}}(u_i(k))\right)\\
& =\frac{\Delta t_{i,j}(k)}{\Delta t} \left[f\left( \check{\rho}_{j+\frac{1}{2}}(u_i(k))\right) - f\left( \hat{\rho}_{j-\frac{1}{2}}(u_i(k))\right)\right] +f\left( \hat{\rho}_{j-\frac{1}{2}}(u_i(k))\right).
\end{align*}
Since,
\begin{align*}
    \Delta t_{i,j}(k) = \frac{1-d_{i,j}(k)}{u_i(k)} \Delta x, \quad
    d_{i,j}(k) = \frac{\rho_{j}(k)-\hat{\rho}_{j-\frac{1}{2}}(u_i(k))}{\check{\rho}_{j+\frac{1}{2}}(u_i(k))-\hat{\rho}_{j-\frac{1}{2}}(u_i(k))},
\end{align*}
we can get,
\begin{align*}
&\frac{\Delta t_{i,j}(k)}{\Delta t} \left[f\left( \check{\rho}_{j+\frac{1}{2}}(u_i(k))\right) - f\left( \hat{\rho}_{j-\frac{1}{2}}(u_i(k))\right)\right] \\
&=\frac{\Delta t_{i,j}(k)}{\Delta t} \left[V  \check{\rho}_{j+\frac{1}{2}}(u_i(k))\left(1-\frac{ \check{\rho}_{j+\frac{1}{2}}(u_i(k))}{R}\right) - V  \hat{\rho}_{j-\frac{1}{2}}(u_i(k))\left(1-\frac{ \hat{\rho}_{j-\frac{1}{2}}(u_i(k))}{R}\right)\right]\\
&=\frac{1-d_{i,j}(k)}{u_i(k)}  \frac{\Delta x}{\Delta t} \left[V  \check{\rho}_{j+\frac{1}{2}}(u_i(k))\left(1-\frac{ \check{\rho}_{j+\frac{1}{2}}(u_i(k))}{R}\right) - V  \hat{\rho}_{j-\frac{1}{2}}(u_i(k))\left(1-\frac{ \hat{\rho}_{j-\frac{1}{2}}(u_i(k))}{R}\right)\right]\\
&=\frac{\check{\rho}_{j+\frac{1}{2}}(u_i(k))-\rho_{j}(k)}{\check{\rho}_{j+\frac{1}{2}}(u_i(k))-\hat{\rho}_{j-\frac{1}{2}}(u_i(k))}  \frac{\Delta x}{u_i(k)\Delta t} \Big[V  \check{\rho}_{j+\frac{1}{2}}(u_i(k))\left(1-\frac{ \check{\rho}_{j+\frac{1}{2}}(u_i(k))}{R}\right)\\
 & \quad - V  \hat{\rho}_{j-\frac{1}{2}}(u_i(k))\left(1-\frac{ \hat{\rho}_{j-\frac{1}{2}}(u_i(k))}{R}\right)\Big]\\
&=\frac{\Delta x}{u_i(k)\Delta t} \left[u_i \left( \check{\rho}_{j+\frac{1}{2}}(u_i(k))-\rho_j(k)\right)\right],
\end{align*}
so,
\begin{align*}
 F_{j+\frac{1}{2}}(k) =\frac{\Delta x}{\Delta t} \left( \check{\rho}_{j+\frac{1}{2}}(u_i(k))-\rho_j(k)\right) + f\left( \hat{\rho}_{j-\frac{1}{2}}(u_i(k))\right).
\end{align*}

\item[(ii)] If $\Delta t_{i,j}(k) > \Delta t$,
\begin{align*}
 F_{j+\frac{1}{2}}(k) =f\left( \check{\rho}_{j+\frac{1}{2}}(u_i(k))\right).
\end{align*}
\end{enumerate}

For $F_{j-\frac{1}{2}}(k)$ in Eq.~\ref{CAV_up},
\begin{enumerate}
\item[(i)] If $\rho_{j-1}(k) \leq \rho_*$ and $\hat{\rho}_{j-\frac{1}{2}}(u_i(k)) \leq \rho_*$,
\begin{align*}
& D \left(\rho_{j-1}(k) \right)=f \left(\min \left\{\rho_{j-1}(k) , \rho_* \right\} \right)=f\left(\rho_{j-1}(k) \right),\\
& S \left(\hat{\rho}_{j-\frac{1}{2}}(u_i(k)) \right)=f \left(\max \left\{\hat{\rho}_{j-\frac{1}{2}}(u_i(k)), \rho_* \right\} \right)=f\left(\rho_* \right),\\
&F_{j-\frac{1}{2}}(k) = \min \left \{ D \left(\rho_{j-1}(k) \right), S\left(\hat{\rho}_{j-\frac{1}{2}}(u_i(k))\right) \right\} = f\left(\rho_{j-1}(k) \right).
\end{align*}

\item[(ii)] If $\rho_{j-1}(k) \leq \rho_*$ and $\hat{\rho}_{j-\frac{1}{2}}(u_i(k)) > \rho_*$,
\begin{align*}
& D \left(\rho_{j-1}(k) \right)=f \left(\min \left\{\rho_{j-1}(k) , \rho_* \right\} \right)=f\left(\rho_{j-1}(k) \right),\\
&S \left(\hat{\rho}_{j-\frac{1}{2}}(u_i(k)) \right)=f \left(\max \left\{\hat{\rho}_{j-\frac{1}{2}}(u_i(k)), \rho_* \right\} \right)=f\left(\hat{\rho}_{j-\frac{1}{2}}(u_i(k)) \right),\\
&F_{j-\frac{1}{2}}(k) = \min \left \{ D \left(\rho_{j-1}(k) \right), S\left(\hat{\rho}_{j-\frac{1}{2}}(u_i(k))\right) \right\} = \min \left \{ f\left(\rho_{j-1}(k) \right),f\left(\hat{\rho}_{j-\frac{1}{2}}(u_i(k)) \right)\right\}.
\end{align*}

\item[(iii)] If $\rho_{j-1}(k) > \rho_*$ and $\hat{\rho}_{j-\frac{1}{2}}(u_i(k)) \leq \rho_*$,
\begin{align*}
& D \left(\rho_{j-1}(k) \right)=f \left(\min \left\{\rho_{j-1}(k) , \rho_* \right\} \right)=f\left(\rho_* \right),\\
&S \left(\hat{\rho}_{j-\frac{1}{2}}(u_i(k)) \right)=f \left(\max \left\{\hat{\rho}_{j-\frac{1}{2}}(u_i(k)), \rho_* \right\} \right)=f\left(\rho_* \right),\\
&F_{j-\frac{1}{2}}(k) = \min \left \{ D \left(\rho_{j-1}(k) \right), S\left(\hat{\rho}_{j-\frac{1}{2}}(u_i(k))\right) \right\} =f\left(\rho_* \right).
\end{align*}

\item[(iv)] If $\rho_{j-1}(k) > \rho_*$ and $\hat{\rho}_{j-\frac{1}{2}}(u_i(k)) > \rho_*$,
\begin{align*}
& D \left(\rho_{j-1}(k) \right)=f \left(\min \left\{\rho_{j-1}(k) , \rho_* \right\} \right)=f\left(\rho_* \right),\\
&S \left(\hat{\rho}_{j-\frac{1}{2}}(u_i(k)) \right)=f \left(\max \left\{\hat{\rho}_{j-\frac{1}{2}}(u_i(k)), \rho_* \right\} \right)=f\left(\hat{\rho}_{j-\frac{1}{2}}(u_i(k)) \right),\\
&F_{j-\frac{1}{2}}(k) = \min \left \{ D \left(\rho_{j-1}(k) \right), S\left(\hat{\rho}_{j-\frac{1}{2}}(u_i(k))\right) \right\} =f\left(\hat{\rho}_{j-\frac{1}{2}}(u_i(k)) \right).
\end{align*}
\end{enumerate}

Above all, the density state transfer equation for a road cell containing a CAV bottleneck can be expressed as follows,
\begin{align*}
    &\rho_j(k+1) = \rho_j(k) - \frac{\Delta t}{\Delta x} \left(F_{j+\frac{1}{2}}(k) -F_{j-\frac{1}{2}}(k) \right)\\
    &=\rho_j(k)+d\rho_j(k)+\Big[bc+a(1-b)c\Big]\frac{\Delta t}{\Delta x} f\left(\rho_{j-1}(k)\right)+b(1-c)\frac{\Delta t}{\Delta x} f\left(\rho_*\right)\\
     &\quad  +\Big[ac(1-b)(1-2d)-bd+(1-b)(1-c)(1-d)\Big]\frac{\Delta t}{\Delta x} f\left(\hat{\rho}_{j-\frac{1}{2}}\left(u_i(k)\right)\right)\\
    &\quad  -d\check{\rho}_{j+\frac{1}{2}}\left(u_i(k)\right)-(1-d)\frac{\Delta t}{\Delta x}f\left(\check{\rho}_{j+\frac{1}{2}}\left(u_i(k)\right)\right),
\end{align*}
where $a,b,c,d$ are all Heaviside functions,
\begin{align*}
&a=\begin{cases}
    1, &  f \left(\rho_{j-1}(k) \right) \leq f\left(\hat{\rho}_{j-\frac{1}{2}}(u_i(k)) \right), \\
    0, &f \left(\rho_{j-1}(k) \right) > f\left(\hat{\rho}_{j-\frac{1}{2}}(u_i(k)) \right),
     \end{cases}
&&b=\begin{cases}
    1, &  \hat{\rho}_{j-\frac{1}{2}}(u_i(k)) \leq\rho_*, \\
    0, &\hat{\rho}_{j-\frac{1}{2}}(u_i(k)) >\rho_*,
     \end{cases}\\
&c=\begin{cases}
    1, &  \rho_{j-1}(k) \leq \rho_*, \\
    0, & \rho_{j-1}(k) > \rho_*,
     \end{cases}
&&d=\begin{cases}
    1, &  \Delta t_{i,j}(k) \leq \Delta t, \\
    0, & \Delta t_{i,j}(k) > \Delta t.
     \end{cases}\\
\end{align*}
Then the nonlinear functions $f_2\left(\rho_{j-1}(k),\rho_{j}(k)\right)$ and $f_3\left(u_i(k)\right)$ are,
\begin{align*}
    &f_2\left(\rho_{j-1}(k),\rho_{j}(k)\right)=d\rho_j(k)+\Big[bc+a(1-b)c\Big]\frac{\Delta t}{\Delta x} f\left(\rho_{j-1}(k)\right)+b(1-c)\frac{\Delta t}{\Delta x} f\left(\rho_*\right),\\
    &f_3\left(u_i(k)\right)=\Big[ac(1-b)(1-2d)-bd+(1-b)(1-c)(1-d)\Big]\frac{\Delta t}{\Delta x} f\left(\hat{\rho}_{j-\frac{1}{2}}\left(u_i(k)\right)\right)\\
    &\quad \quad \quad \quad -d\check{\rho}_{j+\frac{1}{2}}\left(u_i(k)\right)-(1-d)\frac{\Delta t}{\Delta x}f\left(\check{\rho}_{j+\frac{1}{2}}\left(u_i(k)\right)\right).
\end{align*}

\subsection{Derivation of Action Range Violating Eq.~(\ref{CAV reduction})}
\label{Appendix B}
At any time step $k$, for the action $u_i(k)$ of CAV $i \in I(k)$ not to satisfy constraint~(\ref{CAV reduction}), there must be $u_i(k) > v\left(\rho_j(k)\right)$ and the following inequality holds,
\begin{align*}
    &f\left(\rho_j(k)\right) - u_i(k)\rho_j(k) -\frac{\alpha R}{4V} {\left(V-u_i(k)\right)}^2\\
    &=V\rho_j(k)\left(1-\frac{\rho_j(k)}{R}\right)-u_i(k)\rho_j(k) -\frac{\alpha R}{4V} {\left(V-u_i(k)\right)}^2\\
    &>0,
\end{align*}
i.e.,
\begin{align*}
    &-V\rho_j(k)\left(1-\frac{\rho_j(k)}{R}\right)+u_i(k)\rho_j(k) +\frac{\alpha R}{4V} {\left(V-u_i(k)\right)}^2\\
    &=\frac{\alpha R}{4V}{u_i(k)^2}-\left(\frac{\alpha R}{2}-\rho_j(k)\right)u_i(k)-\left(V\rho_j(k)-\frac{V{\rho_j(k)}^2}{R}-\frac{\alpha RV}{4}\right)\\
    &< 0,
\end{align*}
where $\rho_j(k)$ denotes the density of cell $j$ where cav $i$ is located.

Solving the quadratic equation $\frac{\alpha R}{4V}{u_i(k)^2}-\left(\frac{\alpha R}{2}-\rho_j(k)\right)u_i(k)-\left(V\rho_j(k)-\frac{V{\rho_j(k)}^2}{R}-\frac{\alpha RV}{4}\right)=0$, we obtain the roots,
\begin{align*}
    \gamma_1\left(\rho_j(k)\right)=V-\frac{2V\rho_j(k)}{\alpha R} \left(1+\sqrt{1-\alpha}\right), \quad \gamma_2\left(\rho_j(k)\right)=V-\frac{2V\rho_j(k)}{\alpha R} \left(1-\sqrt{1-\alpha}\right).
\end{align*}
where $\gamma_2\left(\rho_j(k)\right)$ must be less than $v\left(\rho_j(k)\right)$.
This is because if $\gamma_2\left(\rho_j(k)\right) \leq v\left(\rho_j(k)\right)$ and $\alpha=(W-1)/{W}$, there must be 
\begin{align*}
    \frac{2}{\alpha}\left(1-\sqrt{1-\alpha}\right)=\frac{2W-2\sqrt{W}}{W-1}\leq 1,
\end{align*}
then $W=1$. This is inconsistent with the assumption that the number of lanes upstream of the merge point is $W > 1$.

Since $(\alpha R)/({4V}) > 0$, the solution set of the inequality is,
\begin{align*}
    \gamma_1\left(\rho_j(k)\right) \leq u_i \leq \gamma_2\left(\rho_j(k)\right).
\end{align*}

\bibliographystyle{plainnat}
\bibliography{optimization}

@article{M.L.2014Scalar,
    title = {Scalar conservation laws with moving constraints arising in traffic flow modeling: An existence result},
    journal = {Journal of Differential Equations},
    volume = {257},
    number = {11},
    pages = {4015-4029},
    year = {2014},
    issn = {0022-0396},
    author = {Maria Laura Delle Monache and Paola Goatin}
}

@INPROCEEDINGS{Mladen2018Traffic,
  author={Čičić, Mladen and Johansson, Karl Henrik},
  booktitle={2018 21st International Conference on Intelligent Transportation Systems}, 
  title={Traffic regulation via individually controlled automated vehicles: a cell transmission model approach}, 
  year={2018},
  volume={},
  number={},
  pages={766-771}
}

@article{Giulia2018Traffic,
    title = {Traffic control via moving bottleneck of coordinated vehicles},
    journal = {IFAC-PapersOnLine},
    volume = {51},
    number = {9},
    pages = {13-18},
    year = {2018},
    issn = {2405-8963},
    author = {Giulia Piacentini and Paola Goatin and Antonella Ferrara}
}

@article{xiong2021Optimizing,
    title = {Optimizing coordinated vehicle platooning: An analytical approach based on stochastic dynamic programming},
    journal = {Transportation Research Part B: Methodological},
    volume = {150},
    pages = {482-502},
    year = {2021},
    issn = {0191-2615},
    author = {Xi Xiong and Junyi Sha and Li Jin}
}

@INPROCEEDINGS{Liu2024AMR,
  author={Liu, Lu and Wang, Maonan and Pun, Man-On and Xiong, Xi},
  booktitle={2024 IEEE 27th International Conference on Intelligent Transportation Systems}, 
  title={A Multi-Agent Rollout Approach for Highway Bottleneck Decongestion in Mixed Autonomy}, 
  year={2024},
  volume={},
  number={},
  pages={3377-3382}
}

@article{Bao2023ASB,
    author = {Bao, Jingjing and Wu, Celimuge and Lin, Yangfei and Zhong, Lei and Chen, Xianfu and Yin, Rui},
    year = {2023},
    month = {},
    title = {A scalable approach to optimize traffic signal control with federated reinforcement learning},
    volume = {13},
    pages={19184},
    journal = {Scientific Reports},
    issn = {2045-2322}
}

@article{Qin2021Lighthill,
  title={Lighthill-Whitham-Richards Model for Traffic Flow Mixed with Cooperative Adaptive Cruise Control Vehicles},
  author={ Qin, Yanyan  and  Wang, Hao  and  Ni, Daiheng },
  journal={Transportation science},
  number={4},
  volume={55},
  pages={883-907},
  year={2021}
}

@article{QIU2023Cooperative,
    title = {Cooperative trajectory control for synchronizing the movement of two connected and autonomous vehicles separated in a mixed traffic flow},
    journal = {Transportation Research Part B: Methodological},
    volume = {174},
    pages = {102769},
    year = {2023},
    issn = {0191-2615},
    author = {Jiahua Qiu and Lili Du}
}

@article{Richards1956ShockWO,
  title={Shock Waves on the Highway},
  author={Paul I. Richards},
  journal={Operations Research},
  year={1956},
  volume={4},
  number={1},
  pages={42-51}
}

@article{Bertsekas2020ConstrainedMR,
  title={Constrained Multiagent Rollout and Multidimensional Assignment with the Auction Algorithm},
  author={Dimitri P. Bertsekas},
  journal={ArXiv},
  year={2020},
  volume={abs/2002.07407}
}

@INPROCEEDINGS{Daini2022Centralized,
  author={Daini, Chiara and Goatin, Paola and Monache, Maria Laura Delle and Ferrara, Antonella},
  booktitle={2022 European Control Conference}, 
  title={Centralized Traffic Control via Small Fleets of Connected and Automated Vehicles}, 
  year={2022},
  volume={},
  number={},
  pages={371-376}
}

@ARTICLE{Luo2022Multiobjective,
  author={Luo, Jie and He, Defeng and Zhu, Wei and Du, Haiping},
  journal={IEEE Transactions on Intelligent Transportation Systems}, 
  title={Multiobjective Platooning of Connected and Automated Vehicles Using Distributed Economic Model Predictive Control}, 
  year={2022},
  volume={23},
  number={10},
  pages={19121-19135}
}

@ARTICLE{Magni2006Regional,
  author={Lalo Magni and Davide Raimondo and Riccardo Scattolini},
  journal={IEEE Transactions on Automatic Control}, 
  title={Regional Input-to-State Stability for Nonlinear Model Predictive Control}, 
  year={2006},
  volume={51},
  number={9},
  pages={1548-1553}
}

@article{Bertsekas1997RolloutAF,
  title={Rollout Algorithms for Combinatorial Optimization},
  author={Dimitri P. Bertsekas and John N. Tsitsiklis and Cynara Wu},
  journal={Journal of Heuristics},
  year={1997},
  volume={3},
  pages={245-262}
}

@article{Yu2021TheSE,
  title={The Surprising Effectiveness of MAPPO in Cooperative, Multi-Agent Games},
  author={Chao Yu and Akash Velu and Eugene Vinitsky and Yu Wang and Alexandre M. Bayen and Yi Wu},
  journal={ArXiv},
  year={2021},
  volume={abs/2103.01955}
}

@article{Zu2018RealtimeET,
  title={Real-time energy-efficient traffic control via convex optimization},
  author={Yue Zu and Chenhui Liu and Ran Dai and Anuj Sharma and Jing Dong},
  journal={Transportation Research Part C: Emerging Technologies},
  year={2018},
  volume={92},
  pages={119-136}
}

@article{Liard2023APM,
  title={A PDE-ODE model for traffic control with autonomous vehicles},
  author={Thibault Liard and Raphael E. Stern and Maria Laura Delle Monache},
  journal={Networks Heterog. Media},
  year={2023},
  volume={18},
  number={3},
  pages={1190-1206}
}

@INPROCEEDINGS{Piacentini2019HighwayTC,
  author={Giulia Piacentini and Antonella Ferrara and Ioannis Papamichail and Markos Papageorgiou},
  booktitle={2019 IEEE 58th Conference on Decision and Control}, 
  title={Highway Traffic Control with Moving Bottlenecks of Connected and Automated Vehicles for Travel Time Reduction}, 
  year={2019},
  volume={},
  number={},
  pages={3140-3145}
}

@ARTICLE{Zhu2022Oper,
  author={Zhu, Lian and Lu, Linjun and Wang, Xianing and Jiang, Chenming and Ye, Nanfei},
  journal={IEEE Transactions on Intelligent Transportation Systems}, 
  title={Operational Characteristics of Mixed-Autonomy Traffic Flow on the Freeway With On- and Off-Ramps and Weaving Sections: An RL-Based Approach}, 
  year={2022},
  volume={23},
  number={8},
  pages={13512-13525}
}

@article{Bertsekas2019MultiagentRA,
  title={Multiagent Rollout Algorithms and Reinforcement Learning},
  author={Dimitri Panteli Bertsekas},
  journal={ArXiv},
  year={2019},
  volume={abs/1910.00120}
}

@ARTICLE{Bhattacharya2024MultiagentTRL,
  author={Bhattacharya, Sushmita and Kailas, Siva and Badyal, Sahil and Gil, Stephanie and Bertsekas, Dimitri P.},
  journal={IEEE Transactions on Robotics}, 
  title={Multiagent Reinforcement Learning: Rollout and Policy Iteration for POMDP With Application to Multirobot Problems}, 
  year={2024},
  volume={40},
  number={},
  pages={2003-2023}
}

@article{LI2025103154,
title = {Real-time vehicle relocation, personnel dispatch and trip pricing for carsharing systems under supply and demand uncertainties},
journal = {Transportation Research Part B: Methodological},
volume = {193},
pages = {103154},
year = {2025},
issn = {0191-2615},
author = {Mengjie Li and Haoning Xi and Chi Xie and Zuo-Jun Max Shen and Yifan Hu}
}

@article{ANUPRIYA2023103726,
title = {Congestion in cities: Can road capacity expansions provide a solution?},
journal = {Transportation Research Part A: Policy and Practice},
volume = {174},
pages = {103726},
year = {2023},
issn = {0965-8564},
author = { Anupriya and Prateek Bansal and Daniel J. Graham}
}

@ARTICLE{Gao2022Optimal,
  author={Gao, Zhibo and Wu, Zhizhou and Hao, Wei and Long, Keke and Byon, Young-Ji and Long, Kejun},
  journal={IEEE Transactions on Intelligent Transportation Systems}, 
  title={Optimal Trajectory Planning of Connected and Automated Vehicles at On-Ramp Merging Area}, 
  year={2022},
  volume={23},
  number={8},
  pages={12675-12687}
}

@INPROCEEDINGS{Nie2021speed,
  author={Nie, Wendi and You, Yonghong and Lee, Victor Chung-Sing and Duan, Yaoxin},
  booktitle={2021 IEEE 24th International Conference on Intelligent Transportation Systems}, 
  title={Variable Speed Limit Control for Individual Vehicles on Freeway Bottlenecks with Mixed Human and Automated Traffic Flows}, 
  year={2021},
  volume={},
  number={},
  pages={2492-2498}
}

@article{NTOUSAKIS2016464,
title = {Optimal vehicle trajectory planning in the context of cooperative merging on highways},
journal = {Transportation Research Part C: Emerging Technologies},
volume = {71},
pages = {464-488},
year = {2016},
issn = {0968-090X},
author = {Ioannis A. Ntousakis and Ioannis K. Nikolos and Markos Papageorgiou}
}

@article{CHEN2023103264,
title = {Coordinated traffic control of urban networks with dynamic entrance holding for mixed CAV traffic},
journal = {Transportation Research Part E: Logistics and Transportation Review},
volume = {178},
pages = {103264},
year = {2023},
issn = {1366-5545},
author = {Xiangdong Chen and Xi Lin and Qiang Meng and Meng Li}
}

@article{MA2023104266,
title = {A speed-maximization trajectory optimization model on a reservation-based intersection control system},
journal = {Transportation Research Part C: Emerging Technologies},
volume = {154},
pages = {104266},
year = {2023},
issn = {0968-090X},
author = {Muting Ma and Zhixia Li}
}

@article{Vinitsky2023,
author = {Vinitsky, Eugene and Lichtl\'{e}, Nathan and Parvate, Kanaad and Bayen, Alexandre},
title = {Optimizing Mixed Autonomy Traffic Flow with Decentralized Autonomous Vehicles and Multi-Agent Reinforcement Learning},
year = {2023},
issue_date = {April 2023},
publisher = {Association for Computing Machinery},
address = {New York, NY, USA},
volume = {7},
number = {2},
issn = {2378-962X},
journal = {ACM Transactions on Cyber-Physical Systems},
articleno = {13},
numpages = {22},
keywords = {Reinforcement learning, mixed autonomy, autonomous vehicles, traffic optimization}
}

@article{ZHANG2001337PartB,
title = {A finite difference approximation of a non-equilibrium traffic flow model},
journal = {Transportation Research Part B: Methodological},
volume = {35},
number = {4},
pages = {337-365},
year = {2001},
issn = {0191-2615},
author = {Hongjun Michael Zhang},
keywords = {Finite difference, Riemann problem, Godunov scheme, Traffic flow}
}

@article{CHUNG200782PartB,
title = {Relation between traffic density and capacity drop at three freeway bottlenecks},
journal = {Transportation Research Part B: Methodological},
volume = {41},
number = {1},
pages = {82-95},
year = {2007},
issn = {0191-2615},
author = {Koohong Chung and Jittichai Rudjanakanoknad and Michael J. Cassidy},
keywords = {Freeway traffic, Bottleneck capacity, Traffic surveillance and control}
}

@article{John2024TRR,
author = {John H. Kodi and Emmanuel Kidando and Priyanka Alluri and Thobias Sando},
title ={Guidelines for Activating Ramp Metering Signals in Response to Non-Recurrent Congestion during Off-Peak Hours Using a Statistical Method},
journal = {Transportation Research Record},
volume = {2678},
number = {12},
pages = {1237-1251},
year = {2024}
}

@article{JIN20121000PartB,
title = {A kinematic wave theory of multi-commodity network traffic flow},
journal = {Transportation Research Part B: Methodological},
volume = {46},
number = {8},
pages = {1000-1022},
year = {2012},
issn = {0191-2615},
author = {Wen-Long Jin},
keywords = {Lighthill–Whitham–Richards model, Multi-commodity Cell Transmission Model, Fair merging, First-in-first-out diverging, Riemann problem, Entropy conditions}
}

@INPROCEEDINGS{Lopez2018SUMO,
  author={Lopez, Pablo Alvarez and Behrisch, Michael and Bieker-Walz, Laura and Erdmann, Jakob and Flötteröd, Yun-Pang and Hilbrich, Robert and Lücken, Leonhard and Rummel, Johannes and Wagner, Peter and Wiessner, Evamarie},
  booktitle={2018 IEEE 21st International Conference on Intelligent Transportation Systems}, 
  title={Microscopic Traffic Simulation using SUMO}, 
  year={2018},
  volume={},
  number={},
  pages={2575-2582}
}

@INPROCEEDINGS{Ferramosca2009nonlinearMPC,
  author={Ferramosca, Antonio and Limon, Daniel and Alvarado, Ignacio and Alamo, Teodoro and Camacho, Eduardo F.},
  booktitle={Proceedings of the 48h IEEE Conference on Decision and Control held jointly with 2009 28th Chinese Control Conference}, 
  title={MPC for tracking of constrained nonlinear systems}, 
  year={2009},
  volume={},
  number={},
  pages={7978-7983}
}

@ARTICLE{Dimitri2021,
  author={Bertsekas, Dimitri P.},
  journal={IEEE/CAA Journal of Automatica Sinica}, 
  title={Multiagent Reinforcement Learning: Rollout and Policy Iteration}, 
  year={2021},
  volume={8},
  number={2},
  pages={249-272}
}

@INPROCEEDINGS{Wang2023ITSCrollout,
  author={Wang, Ning and Chen, Xiao and Mårtensson, Jonas},
  booktitle={2023 IEEE 26th International Conference on Intelligent Transportation Systems}, 
  title={Rollout-Based Interactive Motion Planning for Automated Vehicles}, 
  year={2023},
  volume={},
  number={},
  pages={4187-4194},
  keywords={Adaptation models;Predictive models;Real-time systems;Planning;Safety;Infinite horizon;Behavioral sciences}
}

@article{SHI2021279,
title = {Constructing a fundamental diagram for traffic flow with automated vehicles: Methodology and demonstration},
journal = {Transportation Research Part B: Methodological},
volume = {150},
pages = {279-292},
year = {2021},
issn = {0191-2615},
author = {Xiaowei Shi and Xiaopeng Li}
}

@inproceedings{greenshields1935study,
  title={A study of traffic capacity},
  author={Greenshields, Bruce D and Bibbins, J Rowland and Channing, WS and Miller, Harvey H},
  booktitle={Highway research board proceedings},
  volume={14},
  number={1},
  pages={448--477},
  year={1935},
  organization={Washington, DC}
}

@article{piacentini2020traffic,
  title={Traffic control via platoons of intelligent vehicles for saving fuel consumption in freeway systems},
  author={Piacentini, Giulia and Goatin, Paola and Ferrara, Antonella},
  journal={IEEE Control Systems Letters},
  volume={5},
  number={2},
  pages={593--598},
  year={2020},
  publisher={IEEE}
}

@article{ZHANG2023199partb,
title = {Platoon-centered control for eco-driving at signalized intersection built upon hybrid MPC system, online learning and distributed optimization part II: Theoretical analysis},
journal = {Transportation Research Part B: Methodological},
volume = {172},
pages = {199-216},
year = {2023},
issn = {0191-2615},
author = {Hanyu Zhang and Lili Du}
}

@article{TreiberIDM2000,
author = {Treiber, Martin and Hennecke, Ansgar and Helbing, Dirk},
year = {2000},
month = {02},
pages = {1805-1824},
title = {Congested Traffic States in Empirical Observations and Microscopic Simulations},
volume = {62},
journal = {Physical Review E}
}

@ARTICLE{Goulet2022DMPC,
  author={Goulet, Nathan and Ayalew, Beshah},
  journal={IEEE Transactions on Intelligent Transportation Systems}, 
  title={Distributed Maneuver Planning With Connected and Automated Vehicles for Boosting Traffic Efficiency}, 
  year={2022},
  volume={23},
  number={8},
  pages={10887-10901},
  keywords={Planning;Fuels;Heuristic algorithms;Vehicle dynamics;Prediction algorithms;Throughput;Roads;Maneuver planning;distributed predictive control;connected and automated vehicles;CAVs;traffic energy efficiency}
}

@article{DAGANZO1994269,
title = {The cell transmission model: A dynamic representation of highway traffic consistent with the hydrodynamic theory},
journal = {Transportation Research Part B: Methodological},
volume = {28},
number = {4},
pages = {269-287},
year = {1994},
issn = {0191-2615},
author = {Carlos F. Daganzo}
}

@article{ZOU20257,
title = {Dyna-Style Learning with a Macroscopic Model for Vehicle Platooning in Mixed-Autonomy Traffic},
journal = {IFAC-PapersOnLine},
volume = {59},
number = {8},
pages = {7-12},
year = {2025},
note = {5th IFAC Workshop on Control of Systems Governed by Partial Differential Equations - CPDE 2025},
issn = {2405-8963},
author = {Yichuan Zou and Yi Gao and Xi Xiong and Li Jin}
}

@article{LU202226partb,
title = {Are autonomous vehicles better off without signals at intersections? A comparative computational study},
journal = {Transportation Research Part B: Methodological},
volume = {155},
pages = {26-46},
year = {2022},
issn = {0191-2615},
author = {Gongyuan Lu and Zili Shen and Xiaobo Liu and Yu (Marco) Nie and Zhiqiang Xiong}
}

@article{LI2022110partb,
title = {Equilibrium modeling of mixed autonomy traffic flow based on game theory},
journal = {Transportation Research Part B: Methodological},
volume = {166},
pages = {110-127},
year = {2022},
issn = {0191-2615},
author = {Jia Li and Di Chen and Michael Zhang}
}

@article{JIA2025103161partb,
title = {Adaptive signal control at partially connected intersections: A stochastic optimization model for uncertain vehicle arrival rates},
journal = {Transportation Research Part B: Methodological},
volume = {193},
pages = {103161},
year = {2025},
issn = {0191-2615},
author = {Shaocheng Jia and Sze Chun Wong and Wai Wong}
}

@article{XUE2025103209partb,
title = {Conflict-free optimal control of connected automated vehicles at unsignalized intersections: A condition-based computational framework with constrained terminal position and speed},
journal = {Transportation Research Part B: Methodological},
volume = {195},
pages = {103209},
year = {2025},
issn = {0191-2615},
author = {Yongjie Xue and Li Zhang and Yuxuan Sun and Yu Zhou and Zhiyuan Liu and Bin Yu}
}

@inproceedings{NEURIPS2024_Wang,
 author = {Wang, Junyang and Xu, Haiyang and Jia, Haitao and Zhang, Xi and Yan, Ming and Shen, Weizhou and Zhang, Ji and Huang, Fei and Sang, Jitao},
 booktitle = {Advances in Neural Information Processing Systems},
 pages = {2686--2710},
 title = {Mobile-Agent-v2: Mobile Device Operation Assistant with Effective Navigation via Multi-Agent Collaboration},
 volume = {37},
 year = {2024}
}

@inproceedings{NEURIPS2024_Ding,
 author = {Ding, Ziluo and Liu, Zeyuan and Fang, Zhirui and Su, Kefan and Zhu, Liwen and Lu, Zongqing},
 booktitle = {Advances in Neural Information Processing Systems},
 pages = {118513-118539},
 title = {Multi-Agent Coordination via Multi-Level Communication},
 volume = {37},
 year = {2024}
}

@article{ZHANG2024111796,
title = {Adaptive observer design for coupled ODE–hyperbolic PDE systems with application to traffic flow estimation},
journal = {Automatica},
volume = {167},
pages = {111796},
year = {2024},
issn = {0005-1098},
author = {Liguo Zhang and Jiahao Wu and Jingyuan Zhan}
}

@ARTICLE{Daini2025ITS,
  author={Daini, Chiara and Laura Delle Monache, Maria and Goatin, Paola and Ferrara, Antonella},
  journal={IEEE Transactions on Intelligent Transportation Systems}, 
  title={Traffic Control via Fleets of Connected and Automated Vehicles}, 
  year={2025},
  volume={26},
  number={2},
  pages={1573-1582}
}

@article{ZHOU201969partb,
title = {Distributed model predictive control approach for cooperative car-following with guaranteed local and string stability},
journal = {Transportation Research Part B: Methodological},
volume = {128},
pages = {69-86},
year = {2019},
issn = {0191-2615},
author = {Yang Zhou and Meng Wang and Soyoung Ahn}
}

@article{SASFI2023111169,
title = {Robust adaptive MPC using control contraction metrics},
journal = {Automatica},
volume = {155},
pages = {111169},
year = {2023},
issn = {0005-1098},
author = {András Sasfi and Melanie N. Zeilinger and Johannes Köhler},
keywords = {Model predictive control, Tube MPC, Control contraction metrics, Nonlinear systems, Control of constrained systems, Uncertain systems}
}

@article{LIU2022103261,
title = {A scenario-based distributed model predictive control approach for freeway networks},
journal = {Transportation Research Part C: Emerging Technologies},
volume = {136},
pages = {103261},
year = {2022},
issn = {0968-090X},
author = {Shuai Liu and Anna Sadowska and Bart {De Schutter}},
keywords = {Scenario-based DMPC, Reduced scenario tree, Global and local uncertainties, Freeway networks}
}

@inproceedings{NEURIPS2021_Kuba,
 author = {Kuba, Jakub Grudzien and Wen, Muning and Meng, Linghui and gu, shangding and Zhang, Haifeng and Mguni, David and Wang, Jun and Yang, Yaodong},
 booktitle = {Advances in Neural Information Processing Systems},
 pages = {13458-13470},
 title = {Settling the Variance of Multi-Agent Policy Gradients},
 volume = {34},
 year = {2021}
}

@inproceedings{NEURIPS2022_Wen,
 author = {Wen, Muning and Kuba, Jakub and Lin, Runji and Zhang, Weinan and Wen, Ying and Wang, Jun and Yang, Yaodong},
 booktitle = {Advances in Neural Information Processing Systems},
 pages = {16509-16521},
 title = {Multi-Agent Reinforcement Learning is a Sequence Modeling Problem},
 volume = {35},
 year = {2022}
}

@article{JMLR-Zhong2024,
  author  = {Yifan Zhong and Jakub Grudzien Kuba and Xidong Feng and Siyi Hu and Jiaming Ji and Yaodong Yang},
  title   = {Heterogeneous-Agent Reinforcement Learning},
  journal = {Journal of Machine Learning Research},
  year    = {2024},
  volume  = {25},
  number  = {32},
  pages   = {1-67}
}

@InProceedings{Adaptive-Receding-Horizon,
author="Lukina, Anna and Esterle, Lukas and Hirsch, Christian and Bartocci, Ezio and Yang, Junxing and Tiwari, Ashish and Smolka, Scott A. and Grosu, Radu",
title="ARES: Adaptive Receding-Horizon Synthesis of Optimal Plans",
booktitle="Tools and Algorithms for the Construction and Analysis of Systems",
year="2017",
publisher="Springer Berlin Heidelberg",
address="Berlin, Heidelberg",
pages="286-302"
}
\end{document}